\documentclass[12pt]{amsart}
\usepackage{amsmath,mathrsfs}
\usepackage{amsfonts}
\usepackage{amssymb}
\usepackage[all,cmtip]{xy}           
\usepackage{bm}
\usepackage{bbm}
\usepackage{verbatim}
\usepackage{bbding}
\usepackage{txfonts}
\usepackage{tocvsec2}
\usepackage{amscd}
\usepackage{xspace}
\usepackage{graphicx}
\usepackage[thicklines]{cancel}
\usepackage[shortlabels]{enumitem}
\usepackage{ifpdf}

\ifpdf
\usepackage[colorlinks,final,backref=page,hyperindex]{hyperref}
\else
\usepackage[colorlinks,final,backref=page,hyperindex,hypertex]{hyperref}
\fi
\usepackage{tikz}
\usepackage[active]{srcltx}

\usepackage{tikz-cd}
\usepackage{tikz}

\makeatletter

\newtheorem{thm}{Theorem}[section]
\newtheorem{lem}[thm]{Lemma}
\newtheorem{cor}[thm]{Corollary}
\newtheorem{pro}[thm]{Proposition}
\theoremstyle{definition}
\newtheorem{ex}[thm]{Example}
\newtheorem{rmk}[thm]{Remark}
\newtheorem{defi}[thm]{Definition}

\newcommand{\nc}{\newcommand}
\newcommand{\delete}[1]{}

	\nc{\mlabel}[1]{\label{#1}}  
	\nc{\mcite}[1]{\cite{#1}}  
	\nc{\mref}[1]{\ref{#1}}  
	\nc{\meqref}[1]{\eqref{#1}}  
	\nc{\mbibitem}[1]{\bibitem{#1}} 

\delete{
	\nc{\mlabel}[1]{\label{#1}{\hfill \hspace{1cm}{\bf{{\ }\hfill(#1)}}}}
	\nc{\mcite}[1]{\cite{#1}{{\bf{{\ }(#1)}}}}  
	\nc{\mref}[1]{\ref{#1}{{\bf{{\ }(#1)}}}}  
	\nc{\meqref}[1]{\eqref{#1}{{\bf{{\ }(#1)}}}}  
	\nc{\mbibitem}[1]{\bibitem[\bf #1]{#1}} 
}

\newcommand {\emptycomment}[1]{}

\delete{

}

\nc{\oprn}{\theta}

\nc{\calo}{\mathcal{O}}
\nc{\oop}{$\mathcal{O}$-operator\xspace}
\nc{\oops}{$\mathcal{O}$-operators\xspace}
\nc{\mrho}{{\bm{\varrho}}}
\nc{\bfk}{\mathbf{K}}
\nc{\invlim}{\displaystyle{\lim_{\longleftarrow}}\,}
\nc{\ot}{\otimes}

\nc{\desc}{descendant\xspace}
\nc{\eval}[1]{\Big|_{#1}}

\newcommand{\B }{\mathfrak{B}}
\newcommand{\D }{\mathfrak{D}}
\newcommand{\ac}{\triangleright}

\newcommand{\be }{\begin{equation}}
	\newcommand{\ee }{\end{equation}}

\newcommand{\Gr}{\mathrm{Gr}}
\newcommand{\g}{\mathfrak g}
\newcommand{\h}{\mathfrak h}

\newcommand{\Real}{\mathbb R}

\newcommand{\Id}{{\rm{Id}}}

\newcommand{\br}[1]{   [ \cdot,    \cdot  ]   }

\newcommand{\Der}{\mathrm{Der}}

\newcommand{\Aut}{\mathrm{Aut}}

\newcommand{\Ker}{\mathrm{Ker}}

\newcommand{\Img}{\mathrm{Im}}

\newcommand{\Diff}{\mathrm{Diff}}

\nc{\CV}{\mathbf{C}}

\begin{document}
	
\title{Relative Rota-Baxter operators and crossed homomorphisms on Lie 2-groups}
	
\author{Honglei Lang}
\address{College of Science, China Agricultural University, Beijing 100083, China}
\email{hllang@cau.edu.cn}

\author{Shining Wang}
\address{College of Science, China Agricultural University, Beijing 100083, China}
\email{wsn@cau.edu.cn}

\begin{abstract}
A relative Rota-Baxter operator on Lie 2-groups is introduced as a pair of relative Rota-Baxter operators on the underlying Lie groups which is also a Lie groupoid morphism. Such an operator induces a factorization theorem for  Lie 2-groups and gives rise to a  categorical solution of the Yang-Baxter equation.  We further define relative Rota-Baxter operators on Lie group crossed modules. The well-known one-to-one correspondence between Lie 2-groups and crossed modules is extended to an equivalence between the respective relative Rota-Baxter operators on these two structures. Finally, as the formal inverse of relative Rota-Baxter operators, crossed homomorphisms on Lie 2-groups are also studied.
\end{abstract}

\keywords{Relative Rota-Baxter operator, Lie 2-group, Crossed module, Crossed homomorphism}

\maketitle
	
\tableofcontents
	
\allowdisplaybreaks

\section{Introduction}
\settocdepth{part}
\subsection{Relative Rota-Baxter operators and crossed homomorphisms}
The notion of Rota-Baxter operators was first introduced by Baxter for associative algebras and then brought into the areas of analysis and combinatorics by Rota and Cartier; see \cite{G}. Rota-Baxter operators play important roles in Connes-Kreimer’s algebraic framework for renormalization in quantum field theory \cite{CK} and are also connected with noncommutative symmetric functions, Hopf algebras and double algebras \cite{EFMP, GK, Guo, YGT}.  Rota-Baxter operators on Lie algebras first appeared in \cite{Se} which are closely related to the classical Yang-Baxter equation (CYBE). Explicitly, a Rota-Baxter operator of weight 0 is the operator form of a skew-symmetric solution of the CYBE and a Rota-Baxter operator of weight 1 corresponds to a solution of the modified CYBE and leads to a factorization of a Lie algebra. The corresponding Lie group factorization  yields natural solutions of a certain Hamiltonian system. See \cite{Se} for more applications in integrable systems. More generally, relative Rota-Baxter operators (also known as $\mathcal{O}$-operators) on Lie algebras \cite{Bo, K} give rise to solutions of the CYBE in the semi-direct product Lie algebras and induce post-Lie algebra structures \cite{B, BGN1, BGN}.

More recently, Rota-Baxter operators on (Lie) groups were introduced in \cite{GLS}.  The factorization theorem of a Lie group was then realized directly. Subsequently, Rota-Baxter operators on cocommutative Hopf algebras were studied in \cite{Gon}. The general notion of relative Rota-Baxter operators on Lie groups, together with their cohomology theory were further developed in \cite{JSZ}. Moreover, (relative) Rota-Baxter operators not only lead to  pre-Lie or post-Lie group structures, but are also  closely connected with skew left braces, braided groups, and set-theoretical solutions of the Yang-Baxter equation \cite{BGST, BG, ESS, GV}. 

Crossed homomorphisms on groups originated in Whitehead's early work in combinatorial topology. In particular, bijective crossed homomorphisms were applied to provide set-theoretical solutions of the Yang-Baxter equation \cite{LYZ} and to characterize regular subgroups of holomorphs in the classification of Hopf-Galois structures \cite{T}. Crossed homomorphisms can be seen as the formal inverse of relative Rota-Baxter operators. As a special case, differential operators on groups \cite{GLS} are the formal inverse of Rota-Baxter operators. Recently, crossed homomorphisms have played a key role in constructing representations of mapping class groups of surfaces \cite{CS}. In the context of  Lie algebras, crossed homomorphisms were introduced in \cite{L}  for the study of non-abelian extensions of Lie algebras. They have then been used to investigate post-Lie algebras and post-Lie Magnus expansions \cite{MQ} and CKMM theorem for difference Hopf algebras \cite{GLST}. The categorification was explored in \cite{LW}. 

\subsection{Lie 2-groups and Lie group crossed modules} To study symmetries between symmetries, Lie groups were categorifed to Lie 2-groups in \cite{BL}. A (strict) Lie 2-group is a Lie groupoid object in the category of Lie groups, or a Lie group object in the category of Lie groupoids. Lie 2-groups have found broad applications across homotopy theory, topological quantum field theory, gauge field theory, quantum gravity and Lie theory \cite{BW, LL, W, WZ}. An action of a Lie 2-group on a Lie groupoid appeared in the study of left-invariant vector fields on Lie 2-groups \cite{Ler} and has since been studied intensively \cite{GZ, HS, LL}. Notably, Lie 2-groups are equivalent to the much earlier notion of Lie group crossed modules, which was originally introduced by Whitehead to provide a purely algebraic characterization of the second relative homotopy group \cite{W1}. See \cite{Wa} for crossed modules in other algebraic contexts. In \cite{N}, actor crossed modules were proposed as an analogue of automorphism groups of groups and were further used to define actions of crossed modules. The infinitesimal counterparts of Lie 2-groups and Lie group crossed modules are Lie 2-algebras and Lie algebra crossed modules; see \cite{BC} for details. 

\subsection{Main results and outline of the paper} The goal of this paper is to study Rota-Baxter operators on Lie 2-groups. First, inspired by the close relationship among relative Rota-Baxter operators, post-groups, and the Yang-Baxter equation, as established in \cite{BGST},  we aim to extend relative Rota-Baxter operators from Lie groups to Lie 2-groups and investigate their properties generalizing those on Lie groups \cite{BKSZZ, GLS, JSZ}. Then, in \cite{Jiang}, Rota-Baxter operators on Lie group crossed modules were introduced to construct categorical solutions of the Yang-Baxter equation. In light of the correspondence between Lie 2-groups and crossed modules, we intend to generalize Rota-Baxter operators to the relative case and then establish their equivalence to relative Rota-Baxter operators on Lie 2-groups. It turns out that the framework of Lie 2-groups is much easier  and more intrinsic in characterizing properties of relative Rota-Baxter operators.

The paper is organized as follows. In Section \ref{S2}, we recall some basic knowledge of Lie 2-groups, Lie group crossed modules and relative Rota-Baxter operators on Lie groups. In section \ref{S3}, we introduce the notion of relative Rota-Baxter operators on Lie 2-groups  (Definition \ref{rrb on L2 grp}) and characterize relative Rota-Baxter operators by the semi-direct product Lie 2-groups  (Theorem \ref{graph}). Moreover, relative Rota-Baxter operators on Lie 2-groups could induce descendant Lie 2-groups and actions (Theorem \ref{descendt 2grp}). Applications including the factorization theorem of a Lie 2-group (Theorem \ref{F-Thm}) and a construction of categorical solutions of the Yang-Baxter equation (Theorem \ref{solu of YBE}) are provided. Section \ref{S4} is dedicated to the notion of relative Rota-Baxter operators on Lie group crossed modules (Definition \ref{RBO of 2Gp}) and the bijection between relative Rota-Baxter operators on Lie 2-groups and on crossed modules (Theorem \ref{bj}). The infinitesimal counterpart is also investigated  (Theorem \ref{diff}). In Section \ref{S5}, crossed homomorphisms both on Lie 2-groups and on crossed modules are studied. They serve as the formal inverse of relative Rota-Baxter operators (Theorem \ref{inverse}). 
\resettocdepth

\section{Preliminaries}\label{S2}
We begin by reviewing the notions of Lie 2-groups and Lie group crossed modules, as well as the one-to-one correspondence between them. Then we recall relative Rota-Baxter operators on Lie groups along with some basic properties.

A \textbf{Lie groupoid} $P\rightrightarrows P_0$ involves two smooth manifolds $P$ of arrows and $P_0$ of objects, an inclusion $\iota:P_0\to P$, two surjective submersions $s,t:P\to P_0$ called source and target,  and a smooth multiplication $*:P^{(2)}\to P$, where $P^{(2)}=\{(p,p')\in P\times P|sp=tp'\}$, such that 
\begin{itemize}
	\item[\rm(1)] $t\circ\iota=s\circ\iota=\Id_{P_0}$ and $s(p*p')=sp',t(p*p')=tp$;
	\item [\rm(2)] Associativity: $(p*p')*p''=p*(p'*p''), \forall (p,p'),(p',p'')\in P^{(2)}$;
	\item [\rm(3)] Units: $\iota(tp)*p=p=p*\iota(sp)$;
	\item [\rm(4)] Inverses: For all $p\in P$, there exists $q\in P$ satisfying $sp=tq, tp=sq$ and $p*q=\iota(tp), q*p=\iota(sp)$.
\end{itemize}
A \textbf{Lie groupoid morphism} from $P\rightrightarrows P_0$ to $Q\rightrightarrows Q_0$ is a pair of smooth maps $f:P\to Q$ and $f_0:P_0\to Q_0$ such that $s_Q\circ f=f_0\circ s_P, t_Q\circ f=f_0\circ t_P$ and $f(p*p')=f(p)*f(p')$, for all $(p,p')\in P^{(2)}$. For details, see \cite{Mac}.
\begin{defi}[\cite{BL}]
	A (strict) \textbf{Lie 2-group} is a Lie groupoid $P\rightrightarrows P_0$ such that $P$ and $P_0$ are both Lie groups and all the groupoid structure maps are Lie group homomorphisms. 
	
	A \textbf{Lie 2-group homomorphism} from $P\rightrightarrows P_0$ to $Q\rightrightarrows Q_0$ is a Lie groupoid morphism $(f,f_0)$ with $f:P\to Q$ and $f_0:P_0\to Q_0$ such that they are both Lie group homomorphisms.
\end{defi}

Throughout this paper, we always use $\cdot$ to denote group multiplication and $*$ to denote groupoid multiplication.

\begin{defi}[\cite{W1}]
	A \textbf{Lie group crossed module} $(G_1\xrightarrow{\mu}G_0)$ consists of two Lie groups $G_1$ and $G_0$, a Lie group homomorphism $\mu:G_1\to G_0$, and an action $\ac:G_0\to\Aut(G_1)$ of $G_0$ on $G_1$ by automorphisms, such that
	\begin{eqnarray*}
		\mu g_1\ac g_1'=g_1\cdot g_1'\cdot g_1^{-1},\qquad
		\mu(g_0\ac g_1)=g_0\cdot\mu g_1\cdot g_0^{-1}, \qquad \forall g_0\in G_0,g_1,g_1'\in G_1.
	\end{eqnarray*}
\end{defi}

It is standard that Lie 2-groups are in one-to-one correspondence with Lie group crossed modules; see \cite{BL, BS}.
We sketch the correspondence and fix our notations for later use.

Given a Lie 2-group $P\rightrightarrows P_0$, we obtain a crossed module $(\ker s_P\xrightarrow{t_P}P_0)$, where $s_P$ and $t_P$ are the source and target maps. The action $\ac: P_0\to\Aut(\ker s_P)$ is 
\[p_0\ac p=\iota(p_0)\cdot p\cdot\iota(p_0)^{-1},\qquad  \forall p\in\ker s_P, p_0\in P_0.\]
Note that $(\cdot)^{-1}$ stands for the group inverse of an element in $P$.

Conversely, given a crossed module $(G_1\xrightarrow{\mu}G_0)$,  we have a Lie 2-group $G_1\rtimes G_0\rightrightarrows G_0$ whose group multiplication is 
\begin{eqnarray*}
 (g_1,g_0)\cdot(g_1',g_0')=(g_1\cdot (g_0\ac g_1'),g_0\cdot g_0'),
 \end{eqnarray*}
  and groupoid structure is 
  \begin{itemize}
	\item source and target maps: $s(g_1,g_0)=g_0,t(g_1,g_0)=\mu g_1\cdot g_0$;
	\item groupoid multiplication: $(g_1,g_0)*(g_1',g_0')=(g_1\cdot g_1',g_0')$, if $g_0=\mu g_1'\cdot g_0'$.
\end{itemize}

The notion of Rota-Baxter operators on Lie groups was first introduced in \cite{GLS} and then generalized to relative Rota-Baxter operators on Lie groups in \cite{JSZ}. 
\begin{defi}[\cite{GLS, JSZ}]
	Let $G$ and $H$ be two Lie groups with an action $\phi:G\to\Aut(H)$ of $G$ on $H$ by automorphisms.  A \textbf{relative Rota-Baxter operator} on $G$ with respect to the action $\phi$ is a smooth map $\B:H\to G$ such that
	\begin{eqnarray}\label{rrbc}
		\B h\cdot_G \B h'=\B(h\cdot_H\phi(\B h)h'), \qquad\forall h,h'\in H,
	\end{eqnarray}
	or equivalently, 
	\begin{eqnarray*}
		\B(h\cdot_H h')=\B h\cdot_G\B(\phi(\B h)^{-1}h').
	\end{eqnarray*}
\end{defi}
In particular, if $H=G$ and $\phi=\mathrm{Ad}:G\to\Aut(G)$, the adjoint action, then it reduces to a \textbf{Rota-Baxter operator} on $G$, namely, a smooth map $\B:G\to G$ such that
\begin{eqnarray*}
	\B g\cdot\B g'=\B(g\cdot\B g\cdot g'\cdot (\B g)^{-1}), \qquad\forall g,g'\in G.
\end{eqnarray*}
We call $(G, \B)$ a \textbf{Rota-Baxter Lie group}.

With a Lie group action $\phi:G\to\Aut(H)$ of $G$ on another Lie group $H$, we have the semi-direct product Lie group $H\rtimes G$ with multiplication given by
\begin{eqnarray*}
	(h,g)\cdot(h',g')=(h\cdot\phi(g)h',g\cdot g'),\qquad \forall h,h'\in H, g,g'\in G.
\end{eqnarray*}

\begin{pro}[\cite{BKSZZ, JSZ}]\label{gr as sub}
 Let $G$ and $H$ be two Lie groups with an action $\phi:G\to\Aut(H)$ of $G$ on $H$ and $\B:H\to G$ be a smooth map. The following statements are equivalent:
 \begin{itemize}
 	\item [\rm(i)] $\B:H\to G$ is a relative Rota-Baxter operator on $G$ with respect to $\phi$.
 	\item [\rm(ii)] The graph $\Gr(\B)=\{(h,\B h)|\forall h\in H\}$ is a Lie subgroup of the semi-direct product Lie group $H\rtimes G$.
 	\item [\rm(iii)] There is a Rota-Baxter Lie group $(H\rtimes G,\hat{\B})$, where $\hat{\B}:H\rtimes G\to H\rtimes G$ is defined by
 	\begin{eqnarray*}
 		\hat{\B}(h,g)=(e,g^{-1}\cdot \B h), \qquad \forall h\in H, g\in G.
 	\end{eqnarray*}
 \end{itemize}
\end{pro}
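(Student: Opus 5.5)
The plan is to prove (i)$\Leftrightarrow$(ii) and (i)$\Leftrightarrow$(iii) separately, by direct computation in the semi-direct product $H\rtimes G$ with multiplication $(h,g)\cdot(h',g')=(h\cdot\phi(g)h',g\cdot g')$.

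For (i)$\Rightarrow$(ii), compute
\[
(h,\B h)\cdot(h',\B h')=\bigl(h\cdot\phi(\B h)h',\ \B h\cdot\B h'\bigr).
\]
By \eqref{rrbc} this equals $(h'',\B h'')$ with $h''=h\cdot\phi(\B h)h'$, so $\Gr(\B)$ is closed under multiplication. Next, setting $h=h'=e$ in \eqref{rrbc} gives $\B e\cdot\B e=\B e$, hence $\B e=e$ and the unit lies in the graph. Inverses follow by applying \eqref{rrbc} with $h'=\phi(\B h)^{-1}h^{-1}$: this shows $\B h\cdot\B h'=\B e=e$, so $(h,\B h)^{-1}=(h',\B h')\in\Gr(\B)$.

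For the smooth structure, the graph of a smooth map is a closed embedded submanifold, being the image of the section $h\mapsto(h,\B h)$. Hence it is an embedded Lie subgroup, diffeomorphic to $H$ via the projection. Conversely, for (ii)$\Rightarrow$(i), closure of $\Gr(\B)$ under multiplication says that $(h\cdot\phi(\B h)h',\ \B h\cdot\B h')$ lies in the graph. Since the second component of a graph element is determined by the first, this is exactly \eqref{rrbc}.

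For (i)$\Leftrightarrow$(iii), I will verify the Rota-Baxter identity
\[
\hat\B x\cdot\hat\B y=\hat\B\bigl(x\cdot\hat\B x\cdot y\cdot(\hat\B x)^{-1}\bigr)
\]
for $x=(h,g)$ and $y=(h',g')$. The left side is $(e,\ g^{-1}\B h\, g'^{-1}\B h')$.

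On the right, write $\hat\B x=(e,k)$ with $k=g^{-1}\B h$. Then $x\cdot(e,k)=(h,gk)=(h,\B h)$. Multiplying by $y$ gives $(h\cdot\phi(\B h)h',\ \B h\, g')$, and multiplying by $(e,k)^{-1}=(e,k^{-1})$ on the right leaves the $H$-component unchanged. So the argument of $\hat\B$ has $H$-component $h''=h\cdot\phi(\B h)h'$ and $G$-component $\B h\,g'\,(\B h)^{-1}g$.

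Applying $\hat\B$, the right side is
\[
\bigl(e,\ g^{-1}\B h\, g'^{-1}(\B h)^{-1}\B(h'')\bigr).
\]
Comparing with the left side and cancelling, equality for all $x,y$ is equivalent to $(\B h)^{-1}\B(h'')=\B h'$, which is precisely \eqref{rrbc}. Taking $g=g'=e$ gives the converse direction (iii)$\Rightarrow$(i). Smoothness of $\hat\B$ is clear.

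The main obstacle is bookkeeping: keeping track of which component the conjugation by $\hat\B x$ affects. The key simplification, which I expect to carry the computation, is that $x\cdot\hat\B x=(h,\B h)$ lands on the graph.
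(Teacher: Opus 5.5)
Your proposal is correct and follows the paper's route: for (i)$\Leftrightarrow$(iii) the paper (Remark~\ref{hatB}) performs exactly your computation, evaluating both sides of the Rota-Baxter identity for $\hat{\B}$ and cancelling the common left factor $g^{-1}\cdot\B h\cdot g'^{-1}$ to arrive at \eqref{rrbc}, while for (i)$\Leftrightarrow$(ii) it simply cites \cite[Proposition 3.4]{JSZ}, whose content is your closure-under-multiplication argument. Your extra checks are all valid: that $\B e=e$, that inverses of graph elements stay in the graph, and that the graph is a closed embedded submanifold. They fill in what the paper delegates to the reference.
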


\begin{rmk}\label{hatB}
In \cite{BKSZZ}, only  the direction $\rm(i)\Rightarrow\rm(iii)$ in Proposition \ref{gr as sub} is proposed. It is in fact an equivalence. Following from	
\begin{eqnarray*}
		\hat{\B}(h,g)\cdot\hat{\B}(h',g')&=&(e,g^{-1}\cdot\B h\cdot g'^{-1}\cdot\B h'),\\
		\hat{\B}\big((h,g)\cdot\hat{\B}(h,g)\cdot(h',g')\cdot(\hat{\B}(h,g))^{-1}\big)
		&=&\hat{\B}(h\cdot\phi(\B h)h',\B h\cdot g'\cdot(\B h)^{-1}\cdot g)\\
		&=&\big(e,g^{-1}\cdot\B h\cdot g'^{-1}\cdot(\B h)^{-1}\cdot\B(h\cdot\phi(\B h)h')\big),
	\end{eqnarray*}
	we deduce that $\hat{\B}$ is a Rota-Baxter operator on $H\rtimes G$ if and only if $\B$ satisfies \eqref{rrbc}, i.e., it is a relative Rota-Baxter operator on $G$ with respect to $\phi$.
\end{rmk}

\begin{thm}[\cite{GLS, JSZ}]\label{des grp and ac}
	Let $\B:H\to G$ be a relative Rota-Baxter operator on $G$ with respect to an action $\phi:G\to\Aut(H)$.
	\begin{itemize}
	\item[\rm(i)] We have a Lie group $(H,\cdot_\B)$, denoted by $H^\B$ and called the \textbf{descendant Lie group}, where
	\begin{eqnarray*}
		h\cdot_\B h'=h\cdot\phi(\B h)h',\qquad \forall h,h'\in H.
	\end{eqnarray*}
Further, $\B$ is a Lie group homomorphism from $H^\B$ to $G$.
\item[\rm(ii)] There is an action $\Theta:H^\B\to\Diff(G)$ of $H^\B$ on the manifold $G$ defined by
\begin{eqnarray*}
	\Theta(h)g=(\B(\phi(g)h^\dagger))^{-1}\cdot g\cdot\B h^\dagger,
\end{eqnarray*}
where $h^\dagger$ is the inverse of $h$ in  $H^\B$ and $\Diff(G)$ is the group of diffeomorphisms of $G$.
\item[\rm(iii)] In particular, if $(G,\B)$ is a Rota-Baxter Lie group, then $(G^\B,\B)$ is also a Rota-Baxter Lie group.
	\end{itemize}
\end{thm}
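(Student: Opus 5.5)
For part (i) I will verify the group axioms for $\cdot_\B$ directly. Associativity follows from \eqref{rrbc} and the fact that $\phi$ is a homomorphism into $\Aut(H)$:
\[
(h\cdot_\B h')\cdot_\B h''=h\cdot\phi(\B h)h'\cdot\phi\big(\B(h\cdot\phi(\B h)h')\big)h''.
\]
Using \eqref{rrbc} we replace $\B(h\cdot\phi(\B h)h')$ by $\B h\cdot\B h'$. The right-hand side then equals $h\cdot\phi(\B h)\big(h'\cdot\phi(\B h')h''\big)=h\cdot_\B(h'\cdot_\B h'')$.

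For the identity and inverses:
\begin{itemize}
\item Setting $h=h'=e$ in \eqref{rrbc} gives $\B e\cdot\B e=\B e$, so $\B e=e$. Hence $e$ is a two-sided unit for $\cdot_\B$.
\item The element $h^\dagger=\phi(\B h)^{-1}h^{-1}$ is a right inverse. Since the operation is associative and has a unit, a right inverse for every element forces these to be two-sided inverses.
\item Smoothness of the multiplication and of $h\mapsto h^\dagger$ is clear from smoothness of $\B$ and of the action.
\end{itemize}
Finally, \eqref{rrbc} reads literally $\B(h\cdot_\B h')=\B h\cdot\B h'$, so $\B:H^\B\to G$ is a homomorphism.

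For part (ii) I will use the graph description in Proposition \ref{gr as sub}. The map $h\mapsto(h,\B h)$ is a group isomorphism $H^\B\cong\Gr(\B)\subset H\rtimes G$. Indeed,
\[
(h,\B h)(h',\B h')=(h\cdot\phi(\B h)h',\B h\B h')=(h\cdot_\B h',\B(h\cdot_\B h')).
\]
The subgroup $\Gr(\B)$ acts on the quotient by right multiplication through its inverse, and I will identify the quotient with $G$. Concretely, I use the subgroup $H\times\{e\}\cong H$ and the coset space $(H\rtimes G)/H$, or equivalently the orbit description: the element $(e,g)$ times $(h,\B h)^{-1}$ decomposes uniquely as (something in $H$)$\cdot(e,g')$. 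One checks that $g'$ is exactly $\Theta(h)g$.

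It is safer to verify the action axioms directly, and I plan to do so:
\begin{itemize}
\item $\Theta(e)g=g$, using $\B e=e$ and $e^\dagger=e$.
\item $\Theta(h)\Theta(k)=\Theta(h\cdot_\B k)$. Here I use $(h\cdot_\B k)^\dagger=k^\dagger\cdot_\B h^\dagger$ and expand $\B\big(\phi(g)(k^\dagger\cdot_\B h^\dagger)\big)$ via the second form of \eqref{rrbc}, $\B(xy)=\B x\cdot\B(\phi(\B x)^{-1}y)$.
\end{itemize}
This bookkeeping, matching the $\phi(g)$-twists against the $\phi(\B\cdot)$-twists, is the main obstacle. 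I will organize it by writing $\Theta(h)g$ as the $G$-component of $(e,g)\cdot(h,\B h)^{-1}$ after moving the $H$-part to the left, so the composition law reduces to associativity in $H\rtimes G$.

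For part (iii) I take $H=G$ and $\phi=\mathrm{Ad}$. I must check $\B g\cdot_\B\B g'=\B(g\cdot_\B\mathrm{Ad}(\B g)g')$ in $G^\B$. The left side is $\B g\cdot\B(\B g)\cdot\B g'\cdot\B(\B g)^{-1}$. By part (i), $\B$ is a homomorphism $G^\B\to G$, so the right side equals $\B g\cdot\B(\mathrm{Ad}(\B g)\B g')$ up to rewriting. To identify the two, I will apply \eqref{rrbc} with the pair $(\B g,\B g')$; this gives $\B(\B g)\B(\B g')=\B(\B g\cdot_\B\B g')$, and the remaining step is a short rearrangement.
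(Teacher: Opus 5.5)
The paper gives no proof of this statement; it is quoted from \cite{GLS, JSZ}. So your outline has to be judged on its own. Part (i) is correct and is the standard argument. Parts (ii) and (iii) each contain a step that fails as written, though both are easy to repair.

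\textbf{Part (ii).} The coset description you commit to, and plan to use to organize the direct check, produces the wrong action. Splitting off an element of $H\times\{e\}$ gives
\[(e,g)\cdot(h,\B h)^{-1}=(\phi(g)h^\dagger,\,g\cdot\B h^\dagger)=(\phi(g)h^\dagger,e)\cdot(e,\,g\cdot\B h^\dagger),\]
so $g'=g\cdot(\B h)^{-1}$, not $\Theta(h)g$. Since $H\times\{e\}$ is normal with quotient $G$, this route only yields right translation through $\B$.

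The missing factor $(\B(\phi(g)h^\dagger))^{-1}$ appears only if you split off an element of the graph instead. Every element factors uniquely as $(u,v)=(u,\B u)\cdot(e,(\B u)^{-1}v)$, so $g\mapsto \Gr(\B)(e,g)$ identifies $G$ with $\Gr(\B)\backslash(H\rtimes G)$. Then
\[(e,g)\cdot(h,\B h)^{-1}=(w,\B w)\cdot\big(e,(\B w)^{-1}\cdot g\cdot\B h^\dagger\big),\qquad w=\phi(g)h^\dagger,\]
and the $G$-part is exactly $\Theta(h)g$. With this correction the action law really is associativity, via $(k,\B k)^{-1}(h,\B h)^{-1}=\big(h\cdot_\B k,\B(h\cdot_\B k)\big)^{-1}$.

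Your purely computational route is sound without the coset picture. Apply $\B(xy)=\B x\cdot\B(\phi(\B x)^{-1}y)$ with $x=\phi(g)k^\dagger$ and $y=\phi(g\cdot\B k^\dagger)h^\dagger$. This gives $\B\big(\phi(g)(k^\dagger\cdot_\B h^\dagger)\big)=\B(\phi(g)k^\dagger)\cdot\B\big(\phi(\Theta(k)g)h^\dagger\big)$. Together with $\B(k^\dagger\cdot_\B h^\dagger)=\B k^\dagger\cdot\B h^\dagger$, this yields $\Theta(h\cdot_\B k)=\Theta(h)\circ\Theta(k)$, and each $\Theta(h)$ is a diffeomorphism with inverse $\Theta(h^\dagger)$.

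\textbf{Part (iii).} The identity to check is the Rota-Baxter identity of the group $G^\B$. Its adjoint action is $a\mapsto \B g\cdot_\B a\cdot_\B(\B g)^\dagger$, not conjugation by $\B g$ in $G$. If your $\mathrm{Ad}(\B g)$ means conjugation in $G$, the target is false in general. For example, with $\B g=g^{-1}$ one has $G^\B=G^{\mathrm{op}}$, and the identity holds only when $g^2$ commutes with $g'$.

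Either way, applying \eqref{rrbc} to the pair $(\B g,\B g')$ cannot close the argument. It computes $\B(\B g\cdot_\B\B g')$ and introduces $\B\B g'$, which occurs on neither side. The correct argument is one line from (i): since $\B:G^\B\to G$ is a homomorphism, $\B(x^\dagger)=(\B x)^{-1}$, and
\[\B\big(g\cdot_\B\B g\cdot_\B g'\cdot_\B(\B g)^\dagger\big)=\B g\cdot\B\B g\cdot\B g'\cdot(\B\B g)^{-1}=\B g\cdot_\B\B g',\]
where the last equality is the definition of $\cdot_\B$ for $\phi=\mathrm{Ad}$.
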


\resettocdepth
 
\section{Relative Rota-Baxter operators on Lie 2-groups}\label{S3}
In this section, we introduce the notion of relative Rota-Baxter operators on Lie 2-groups as a categorification of the operators on Lie groups and derive the basic properties parallel to the Lie group case.

\subsection{Definitions and properties}
An action of a Lie 2-group $P\rightrightarrows P_0$ on a Lie groupoid $X\rightrightarrows X_0$ is a pair of group actions $\phi: P\times X\to X$ and $\phi_0: P_0\times X_0\to X_0$ on manifolds such that $(\phi,\phi_0)$ is a Lie groupoid morphism; see \cite{GZ, Ler}. Here we further define  actions of a Lie 2-group on another Lie 2-group.
\begin{defi}
	Let $P\rightrightarrows P_0$ and $Q\rightrightarrows Q_0$ be two Lie 2-groups. A \textbf{Lie 2-group action} $(\phi,\phi_0)$ of $P\rightrightarrows P_0$ on $Q\rightrightarrows Q_0$ consists of two Lie group actions $\phi:P\to\Aut(Q)$ and $\phi_0:P_0\to\Aut(Q_0)$ (by automorphisms) such that the action map $(\phi,\phi_0)$:
	\begin{displaymath}
		\xymatrix@=8ex@R=4ex{
			P\times Q \ar[r]^{\phi} \ar@<-.5ex>[d]_{s\times s}  \ar@<.5ex>[d]^{t\times t} & Q \ar@<-.5ex>[d]_{s} \ar@<.5ex>[d]^{t}
			\\
			P_0\times Q_0 \ar[r]^{\phi_0} & Q_0,
		}
	\end{displaymath}
	is a Lie groupoid morphism, where $P\times Q\rightrightarrows P_0\times Q_0$ is the direct product Lie groupoid. 
\end{defi}

For a Lie 2-group action $(\phi,\phi_0)$, we have
\begin{eqnarray}\label{phi}
	\phi(p* p')(q* q')=(\phi(p)q)*(\phi(p')q'), \qquad \forall (p,p')\in P^{(2)}, (q,q')\in Q^{(2)},
\end{eqnarray}
where $P^{(2)}=\{(p,p')\in P\times P|sp=tp'\}$ is the set of composable pairs, and $Q^{(2)}$ is defined similarly.

\begin{ex}
	 Denote by $\mathrm{Ad}:P\to\Aut(P)$ and $\mathrm{Ad}_0:P_0\to\Aut(P_0)$ the adjoint actions of Lie groups $P$ and $P_0$. The pair $(\mathrm{Ad},\mathrm{Ad}_0)$ is an action of the Lie 2-group $P\rightrightarrows P_0$ on itself, called the \textbf{adjoint action}.
	 
	Note that the left group multiplication of $P\rightrightarrows P_0$ is a Lie 2-group action on its underlying Lie groupoid, rather than on the Lie 2-group $P\rightrightarrows P_0$.
\end{ex}

\begin{defi}\label{rrb on L2 grp}
   Let $(\phi,\phi_0)$ be a Lie 2-group action of   $P\rightrightarrows P_0$ on another Lie 2-group $Q\rightrightarrows Q_0$. A \textbf{relative Rota-Baxter operator} on $P\rightrightarrows P_0$ with respect to the action $(\phi,\phi_0)$ is a pair $(\B,\B_0)$ of smooth maps:
    \begin{displaymath}
   	\xymatrix@=8ex@R=4ex{
   		Q \ar[r]^{\B} \ar@<-.5ex>[d]_{s}  \ar@<.5ex>[d]^{t} & P \ar@<-.5ex>[d]_{s} \ar@<.5ex>[d]^{t}
   		\\
   		Q_0 \ar[r]^{\B_0} & P_0,
   	}
   \end{displaymath}
    such that
    \begin{itemize} 
\item[\rm(1)]  $\B:Q\to P$ is a relative Rota-Baxter operator on the Lie group $P$ with respect to the action $\phi:P\to\Aut(Q)$,
\item[\rm(2)] $\B_0:Q_0\to P_0$ is a relative Rota-Baxter operator on the Lie group $P_0$ with respect to the action $\phi_0:P_0\to\Aut(Q_0)$, 
\item[\rm(3)] $(\B,\B_0)$ is a Lie groupoid morphism  from $Q\rightrightarrows Q_0$ to $P\rightrightarrows P_0$.
    \end{itemize}
Particularly, a relative Rota-Baxter operator $(\B,\B_0)$ is called a \textbf{Rota-Baxter operator} on a Lie 2-group $P\rightrightarrows P_0$ if
\begin{equation*}
	(Q\rightrightarrows Q_0)=(P\rightrightarrows P_0),\qquad (\phi,\phi_0)=(\mathrm{Ad},\mathrm{Ad}_0).
\end{equation*}
Meanwhile, $(P\rightrightarrows P_0,(\B,\B_0))$ is called a \textbf{Rota-Baxter Lie 2-group}.
\end{defi}

\begin{ex}
    A Lie 2-group homomorphism from $Q\rightrightarrows Q_0$ to $P\rightrightarrows P_0$ is a relative Rota-Baxter operator on $P\rightrightarrows P_0$ with respect to the trivial Lie 2-group action.
\end{ex}

\begin{ex}\label{abelian}
	If $(Q\rightrightarrows Q_0)=(V\rightrightarrows V_0)$, a linear Lie 2-group, then $(\B,\B_0)$ is a relative Rota-Baxter operator on the Lie 2-group $P\rightrightarrows P_0$ with respect to an action $(\phi,\phi_0)$ on $(V\rightrightarrows V_0)$ if $\B:V\to P$ and $\B_0:V_0\to P_0$ satisfy 
	\begin{eqnarray*}
		\B v\cdot\B v'=\B(v+\phi(\B v)v'),\qquad
		\B_0 v_0\cdot\B_0 v_0'=\B_0(v_0+\phi_0(\B_0 v_0)v_0'),
	\end{eqnarray*}
and $(\B,\B_0)$ is a Lie groupoid morphism from $V\rightrightarrows V_0$ to $P\rightrightarrows P_0$.
\end{ex}

\begin{ex}
	 A decomposition of a Lie 2-group $P\rightrightarrows P_0$ consists of Lie 2-subgroups $X\rightrightarrows X_0$ and $Y\rightrightarrows Y_0$ such that $P=XY,X\cap Y=\{e_{P}\}$ and $P_0=X_0Y_0,X_0\cap Y_0=\{e_{P_0}\}$. Such a decomposition gives a Rota-Baxter operator $(\B,\B_0)$ on $P\rightrightarrows P_0$ by
	\begin{eqnarray*}
	\B:P\to P,\quad p\mapsto y^{-1}, \quad\forall p=x\cdot y, \qquad \B_0:P_0\to P_0,\quad p_0\mapsto y_0^{-1}, \quad\forall p_0=x_0\cdot y_0.
	\end{eqnarray*}
In fact, it is known that they are  Rota-Baxter operator on Lie groups. Since the groupoid multiplication is a Lie group homomorphism, we have
\begin{eqnarray*}
	\B(p*p')=\B((x\cdot y)*(x'\cdot y'))=\B((x*x')\cdot(y*y'))=(y*y')^{-1}=y^{-1}*y'^{-1}=\B p*\B p'.
\end{eqnarray*}
Thus $(\B,\B_0)$ is a Lie groupoid endomorphism. 
\end{ex}

\begin{ex}\label{2rb}
    A Rota-Baxter Lie group $(P,\B)$ gives two Rota-Baxter Lie 2-groups $(P\rightrightarrows P,(\B,\B))$ and $(P\rtimes P\rightrightarrows P,(\tilde{\B},\B))$. For the latter, $P\rtimes P$ is the semi-direct product Lie group with multiplication induced by the adjoint action of $P$:
    \begin{eqnarray*}
    	(p,l)\cdot(p',l')=(p\cdot l\cdot p'\cdot l^{-1},l\cdot l').
    \end{eqnarray*}
    The source, target and groupoid multiplication of $P\rtimes P\rightrightarrows P$ are 
    \begin{eqnarray*}
    	s(p,l)=l,\qquad t(p,l)=p\cdot l,\qquad (p,p'\cdot l')*(p',l')=(p\cdot p',l').
    \end{eqnarray*}
    The operator $\tilde{\B}:P\rtimes P\to P\rtimes P$ is defined by
    \begin{eqnarray*}
    	\tilde{\B}(p,l)=\big(\B(p\cdot l)\cdot(\B l)^{-1},\B l\big).
    \end{eqnarray*}
\end{ex}

\begin{ex}\label{barB on 2-grp}
	If $(P\rightrightarrows P_0,(\B,\B_0))$ is a Rota-Baxter Lie 2-group, then $(P\rightrightarrows P_0,(\bar{\B},\bar{\B}_0))$ with $(\bar{\B},\bar{\B}_0)$ defined by 
	\begin{eqnarray*}
		\bar{\B}p=p^{-1}\cdot\B(p^{-1}), \qquad \bar{\B}_0p_0=p_0^{-1}\cdot\B_0(p_0^{-1}), \qquad\forall p\in P,p_0\in P_0,
	\end{eqnarray*}
	is also a Rota-Baxter Lie 2-group. Indeed, by \cite[Proposition 2.4]{GLS}, we see that $\bar{\B}$ and $\bar{\B}_0$ are Rota-Baxter operators on Lie groups $P$ and $P_0$, respectively.  Also, since $(\B,\B_0)$ is a groupoid morphism and the groupoid multiplication is a Lie group homomorphism, we have
	\begin{eqnarray*}
		\bar{\B}(p* p')&=&(p* p')^{-1}\cdot\B((p* p')^{-1})=(p^{-1}* p'^{-1})\cdot(\B (p^{-1})* \B (p'^{-1}))\\
		&=&(p^{-1}\cdot\B (p^{-1}))*(p'^{-1}\cdot\B (p'^{-1}))=\bar{\B} p*\bar{\B}p'.
	\end{eqnarray*}
Hence $(\bar{\B},\bar{\B}_0)$ is a Lie groupoid morphism and further a Rota-Baxter operator on $P\rightrightarrows P_0$.
\end{ex}

Now we characterize relative Rota-Baxter operators on Lie 2-groups by the semi-direct product Lie 2-group.
Given a Lie 2-group action $(\phi,\phi_0)$ of $P\rightrightarrows P_0$ on $Q\rightrightarrows Q_0$, we obtain a \textbf{semi-direct product Lie 2-group} $Q\rtimes P\rightrightarrows Q_0\rtimes P_0$ with the direct product groupoid structure
\begin{eqnarray}\label{sd1}
	(q,p)*(q',p')=(q* q',p* p'), \qquad \forall (p,p')\in P^{(2)}, (q,q')\in Q^{(2)},
\end{eqnarray}
and the semi-direct product group structures
\begin{eqnarray}
	\label{sd2}
	(q,p)\cdot(q',p')&=&(q\cdot\phi(p)q',p\cdot p'),\qquad \forall q,q'\in Q, p,p'\in P,\\
	\label{sd3}
	(q_0,p_0)\cdot(q'_0,p_0')&=&(q_0\cdot\phi_0(p_0)q_0',p_0\cdot p'_0),\qquad \forall q_0,q'_0\in Q, p_0,p_0'\in P_0.
\end{eqnarray}
The particular case when $Q\rightrightarrows Q_0$ is a 2-vector space was given in \cite{LL}.

Denote the graphs of smooth maps $\B:Q\to P$ and $\B_0:Q_0\to P_0$ by
\[\Gr(\B)=\{(q,\B q)|\forall q\in Q\}\subset Q\rtimes P, \qquad \Gr(\B_0)=\{(q_0,\B_0q_0)|\forall q_0\in Q_0\}\subset Q_0\rtimes P_0.\]

\begin{thm}\label{graph}
	 Let $(\phi,\phi_0)$ be a Lie 2-group action of $P\rightrightarrows P_0$ on another Lie 2-group $Q\rightrightarrows Q_0$. The following statements are equivalent:
	\begin{itemize}
		\item[\rm(i)] $(\B,\B_0)$ is a relative Rota-Baxter operator on $P\rightrightarrows P_0$ with respect to $(\phi,\phi_0)$;
		\item[\rm(ii)] The graph $\Gr(\B)\rightrightarrows\Gr(\B_0)$ is a Lie 2-subgroup of the semi-direct product Lie 2-group $Q\rtimes P\rightrightarrows Q_0\rtimes P_0$;
		\item[\rm(iii)] There is a Rota-Baxter Lie 2-group $(Q\rtimes P\rightrightarrows Q_0\rtimes P_0,(\hat{\B},\hat{\B}_0))$, where $\hat{\B}:Q\rtimes P\to Q\rtimes P$ and $\hat{\B}_0:Q_0\rtimes P_0\to Q_0\rtimes P_0$ are defined by
		\begin{eqnarray*}
			\hat{\B}(q,p)=(e,p^{-1}\cdot\B q),\qquad
			\hat{\B}_0(q_0,p_0)=(e,p_0^{-1}\cdot\B_0q_0).
		\end{eqnarray*}
	\end{itemize}
\end{thm}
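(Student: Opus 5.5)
The plan is to reduce each statement to its group-level content plus a groupoid-level condition. Proposition \ref{gr as sub} applied to $\B$ with $\phi$ and to $\B_0$ with $\phi_0$ already handles the group-theoretic halves of (i), (ii), (iii). What remains is to show that the groupoid parts match, namely:
\begin{itemize}
\item[(a)] $(\B,\B_0)$ is a Lie groupoid morphism;
\item[(b)] $\Gr(\B)\rightrightarrows\Gr(\B_0)$ is a Lie subgroupoid of $Q\rtimes P\rightrightarrows Q_0\rtimes P_0$;
\item[(c)] $(\hat{\B},\hat{\B}_0)$ is a Lie groupoid morphism.
\end{itemize}

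For (i)$\Leftrightarrow$(ii), recall that by \eqref{sd1} the groupoid structure on $Q\rtimes P$ is the direct product one, so $s(q,p)=(sq,sp)$ and $t(q,p)=(tq,tp)$. The requirement that $s$ and $t$ map $\Gr(\B)$ into $\Gr(\B_0)$ says exactly that $sq\in Q_0$ is sent to $\B_0 sq$, i.e.\ $s\B q=\B_0 sq$, and similarly $t\B q=\B_0 tq$. Given this, for composable $q,q'$ the pairs $(q,\B q)$ and $(q',\B q')$ are composable. Their product $(q*q',\B q*\B q')$ lies in $\Gr(\B)$ iff $\B(q*q')=\B q*\B q'$. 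Closure under units, $\iota(q_0,\B_0 q_0)=(\iota q_0,\iota\B_0 q_0)\in\Gr(\B)$, follows from $\B\iota=\iota\B_0$, which is a consequence of the morphism property. Closure under inverses follows from multiplicativity together with units. Smoothness causes no trouble: graphs of smooth maps are embedded submanifolds, and $s,t$ restricted to $\Gr(\B)$ are identified with $s,t$ on $Q$ via $q\mapsto(q,\B q)$, so they remain surjective submersions. This establishes (a)$\Leftrightarrow$(b), and combining with Proposition \ref{gr as sub} gives (i)$\Leftrightarrow$(ii).

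For (i)$\Leftrightarrow$(iii), I would compute directly with $\hat{\B}(q,p)=(e,p^{-1}\cdot\B q)$.
\begin{itemize}
\item Source compatibility: $s\hat{\B}(q,p)=(e,(sp)^{-1}\cdot s\B q)$ and $\hat{\B}_0(sq,sp)=(e,(sp)^{-1}\cdot\B_0 sq)$, using that $s$ is a group homomorphism and preserves $e$. These agree for all $p$ iff $s\B q=\B_0 sq$; the target map is handled the same way.
\item Multiplicativity: for composable pairs,
\[\hat{\B}((q,p)*(q',p'))=\big(e,(p*p')^{-1}\cdot\B(q*q')\big),\]
while
\[\hat{\B}(q,p)*\hat{\B}(q',p')=\big(e*e,(p^{-1}\cdot\B q)*(p'^{-1}\cdot\B q')\big).\]
Because $*$ is a group homomorphism on $P^{(2)}$ (the interchange law), the second expression equals $\big(e,(p*p')^{-1}\cdot(\B q*\B q')\big)$. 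Hence the two agree iff $\B(q*q')=\B q*\B q'$.
\end{itemize}
For the direction (iii)$\Rightarrow$(i), take $p=e$; more precisely take $p=e,p'=e$, which are composable, together with any composable pair $q,q'$ of $Q$. This recovers the morphism property of $(\B,\B_0)$. Combined with Proposition \ref{gr as sub}(iii) for both levels, this gives (i)$\Leftrightarrow$(iii).

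The main obstacle is bookkeeping in the interchange-law step: one must verify that $(p^{-1}\cdot\B q,\,p'^{-1}\cdot\B q')$ is a composable pair. Its source is $(sp)^{-1}\cdot s\B q$ and its target $(tp')^{-1}\cdot t\B q'$, which coincide precisely because $sp=tp'$ and $s\B q=\B_0 sq=\B_0 tq'=t\B q'$. So the compatibility with $s,t$ has to be established before the multiplicativity check. A related point is that in the (iii)$\Rightarrow$(i) direction the source/target compatibility of $\B$ must first be extracted by specializing to $p=e$, before the multiplicative identity can be used.
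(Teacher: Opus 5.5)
Your proposal is correct and follows the paper's own proof. The group-level content is delegated to Proposition \ref{gr as sub} (with Remark \ref{hatB} for the converse in (iii)). The groupoid conditions are then matched by the same two computations of $(q,\B q)*(q',\B q')$ and of $\hat{\B}((q,p)*(q',p'))$, using the interchange law. Your explicit checks of source/target compatibility, of the composability of $(\B q,\B q')$ before applying the interchange law, and of the specialization $p=p'=e$ for (iii)$\Rightarrow$(i) fill in steps the paper leaves implicit.
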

\begin{proof}
	$\rm(i)\Leftrightarrow\rm(ii)$. We infer from \cite[Proposition 3.4]{JSZ} that $\Gr(\B)$ and $\Gr(\B_0)$ are Lie subgroups of $Q\rtimes P$ and $Q_0\rtimes P_0$ if and only if $\B$ and $\B_0$ are relative Rota-Baxter operators on Lie groups with respect to the actions $\phi$ and $\phi_0$. Then, for $(q,q')\in Q^{(2)}$, observing that	
\begin{eqnarray*}
		(q,\B q)*(q',\B q')=(q* q',\B q* \B q')\in\Gr(\B)
	\end{eqnarray*}
        if and only if $\B q*\B q'=\B(q* q')$, we see that 
 $\Gr(\B)\rightrightarrows\Gr(\B_0)$ is a Lie subgroupoid of $Q\rtimes P\rightrightarrows Q_0\rtimes P_0$ if and only if $(\B,\B_0)$ is a Lie groupoid morphism.

   	$\rm(i)\Leftrightarrow\rm(iii)$. By \cite[Proposition 3.1]{BKSZZ} and Remark \ref{hatB}, $\B$ and $\B_0$ are relative Rota-Baxter operators on Lie groups $P$ and $P_0$ if and only if $\hat{\B}$ and $\hat{\B}_0$ are Rota-Baxter operators on Lie groups $ Q\rtimes P $ and $Q_0\rtimes P_0$. Moreover, for $(q,q')\in Q^{(2)},(p,p')\in P^{(2)}$, we have
   \begin{eqnarray*}
   	\hat{\B}((q,p)*(q',p'))&=&\hat{\B}(q* q',p* p')=(e,(p* p')^{-1}\cdot\B(q* q')),\\
   	\hat{\B}(q,p)*\hat{\B}(q',p')&=&(e,(p^{-1}\cdot\B q)* (p'^{-1}\cdot\B q'))=(e,(p^{-1}* p'^{-1})\cdot(\B q*\B q')),
   \end{eqnarray*}
   which means that $\hat{\B}((q,p)*(q',p'))=\hat{\B}(q,p)*\hat{\B}(q',p')$ if and only if $\B(q* q')=\B q*\B q'$. Therefore, $(\B,\B_0)$ is a relative Rota-Baxter operator on $P\rightrightarrows P_0$ with respect to $(\phi,\phi_0)$ if and only if $(\hat{\B},\hat{\B}_0)$ is a Rota-Baxter operator on the semi-direct product Lie 2-group.
\end{proof}

\begin{thm}\label{descendt 2grp}
    If $(\B,\B_0)$ is a relative Rota-Baxter operator on $P\rightrightarrows P_0$ with respect to an action $(\phi,\phi_0)$ on $Q\rightrightarrows Q_0$, then 
    \begin{itemize}
        \item[\rm(i)] $Q^\B\rightrightarrows Q_0^\B$ is a Lie 2-group with the Lie group structures given by
        \begin{eqnarray}\label{des}
            q\cdot_\B q'=q\cdot\phi(\B q)q',  \qquad
            q_0\cdot_{\B_0} q_0'=q_0\cdot\phi_0(\B_0q_0)q_0', \qquad   \forall q,q'\in Q, q_0,q_0'\in Q_0,  
        \end{eqnarray}
        and the Lie groupoid structure the same as $Q\rightrightarrows Q_0$. We call it the {\bf descendant Lie 2-group}.       Moreover, $(\B,\B_0)$ is a Lie 2-group homomorphism from $Q^\B\rightrightarrows Q_0^\B$ to $P\rightrightarrows P_0$.
        \item[\rm(ii)] There is an action $(\phi^\B,\phi_0^\B)$ of the Lie 2-group $Q^\B\rightrightarrows Q_0^\B$ on the Lie groupoid $P\rightrightarrows P_0$:
         \begin{displaymath}
    \xymatrix@=8ex@R=4ex{
    Q^\B\times P \ar[r]^{\phi^\B} \ar@<-.5ex>[d]_{s\times s}  \ar@<.5ex>[d]^{t\times t} & P \ar@<-.5ex>[d]_{s} \ar@<.5ex>[d]^{t}
    \\
     Q^\B_0\times P_0 \ar[r]^{\phi_0^\B} &  P_0,
    }
\end{displaymath}
which is defined by
        \begin{eqnarray*}
            \phi^\B(q)p=\big(\B(\phi(p)q^\dagger)\big)^{-1}\cdot p\cdot\B q^\dagger,\qquad
            \phi_0^\B(q_0)p_0=\big(\B_0(\phi_0(p_0)q_0^\dagger)\big)^{-1}\cdot p_0\cdot\B_0q_0^\dagger,
        \end{eqnarray*}
    where $q^\dagger$ and $q_0^\dagger$ are the inverses in the Lie groups $Q^\B$ and $Q_0^\B$, respectively.
    \item[\rm(iii)] In particular, if $(P\rightrightarrows P_0,(\B,\B_0))$ is a Rota-Baxter Lie 2-group , then $(P^\B\rightrightarrows P_0^\B,(\B,\B_0))$ is also a Rota-Baxter Lie 2-group.
    \end{itemize}
\end{thm}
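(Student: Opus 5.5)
The plan is to reduce everything to Theorem \ref{des grp and ac} applied levelwise, and then check groupoid compatibility using that all structure maps are group homomorphisms and that $(\B,\B_0)$ and $(\phi,\phi_0)$ are groupoid morphisms (so \eqref{phi} holds).

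For (i), Theorem \ref{des grp and ac}(i) already gives Lie groups $Q^\B$ and $Q_0^\B$, and shows that $\B$ and $\B_0$ are group homomorphisms. It remains to check that $s,t,\iota$ and $*$ are homomorphisms for the new products. For the source,
\[
s(q\cdot_\B q')=s(q)\cdot s(\phi(\B q)q')=s(q)\cdot\phi_0(s\B q)(sq')=sq\cdot\phi_0(\B_0 sq)(sq')=sq\cdot_{\B_0}sq',
\]
and $t$ and $\iota$ are handled in the same way. For the multiplication $*$, take composable pairs $(q_1,q_2)$ and $(q_1',q_2')$. Then
\[
(q_1*q_2)\cdot_\B(q_1'*q_2')=(q_1*q_2)\cdot\phi(\B q_1*\B q_2)(q_1'*q_2'),
\]
and by \eqref{phi} the second factor equals $\phi(\B q_1)q_1'*\phi(\B q_2)q_2'$. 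The interchange law in $Q$ then gives $(q_1\cdot_\B q_1')*(q_2\cdot_\B q_2')$. Groupoid inverses come for free, so the result is a Lie 2-group, and $(\B,\B_0)$ is a Lie 2-group homomorphism.

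For (ii), Theorem \ref{des grp and ac}(ii) gives group actions $\phi^\B$ and $\phi_0^\B$ on the manifolds $P$ and $P_0$, so only the groupoid morphism property is left. I first note that $\dagger$ commutes with $s,t$ and with $*$, since inversion in a Lie 2-group is a groupoid morphism. Compatibility with the source is then direct:
\[
s(\phi^\B(q)p)=(\B_0(\phi_0(sp)(sq)^\dagger))^{-1}\cdot sp\cdot\B_0(sq)^\dagger=\phi_0^\B(sq)(sp),
\]
and the target is the same. The main obstacle is multiplicativity,
\[
\phi^\B(q_1*q_2)(p_1*p_2)=\phi^\B(q_1)p_1*\phi^\B(q_2)p_2 .
\]
Here I write $q_1*q_2$ and $p_1*p_2$, so that $(q_1*q_2)^\dagger=q_1^\dagger*q_2^\dagger$. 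Applying \eqref{phi} gives
\[
\phi(p_1*p_2)(q_1^\dagger*q_2^\dagger)=\phi(p_1)q_1^\dagger*\phi(p_2)q_2^\dagger .
\]
Since $\B$ preserves $*$, and both group inversion and group multiplication in $P$ preserve $*$ (interchange law), the three factors split componentwise. The care needed is in checking that each pair is actually composable, e.g. that $s\big(\phi(p_1)q_1^\dagger\big)=t\big(\phi(p_2)q_2^\dagger\big)$; this follows from $sp_1=tp_2$, $sq_1=tq_2$ and the compatibility of $\phi$ with $s,t$.

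For (iii), take $Q=P$ and $(\phi,\phi_0)=(\mathrm{Ad},\mathrm{Ad}_0)$. Theorem \ref{des grp and ac}(iii) shows that $\B$ and $\B_0$ are Rota-Baxter operators on $P^\B$ and $P_0^\B$. By (i), $P^\B\rightrightarrows P_0^\B$ is a Lie 2-group and $(\B,\B_0)$ is a groupoid morphism, because the groupoid structure is unchanged. It remains to confirm that $(\mathrm{Ad},\mathrm{Ad}_0)$ for the descendant groups is the adjoint action of the new Lie 2-group, which holds by the adjoint-action example applied to $P^\B\rightrightarrows P_0^\B$. Hence $(P^\B\rightrightarrows P_0^\B,(\B,\B_0))$ is a Rota-Baxter Lie 2-group.
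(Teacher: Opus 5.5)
Your proposal is correct and follows the paper's route. You apply the Lie group results levelwise: Theorem~\ref{des grp and ac} for the descendant groups, the action, and the Rota-Baxter property of $\B$ on $P^\B$. You then check the groupoid compatibilities using \eqref{phi}, the fact that $(\B,\B_0)$ is a groupoid morphism, and the interchange law. You also spell out checks the paper omits: compatibility of $\phi^\B$ with $s,t$, composability of the split factors, and $(q_1*q_2)^\dagger=q_1^\dagger*q_2^\dagger$. These are welcome additions but do not change the argument.
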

\begin{proof}
   \rm(i). According to \cite[Proposition 3.5]{JSZ} as recalled in Theorem \ref{des grp and ac}, we have two descendant Lie groups $Q^\B$ and $Q^\B_0$, as well as two Lie group homomorphisms $\B: Q^\B\to P$ and $\B_0:Q_0^\B\to P_0$. Then, to see that $Q^\B\rightrightarrows Q_0^\B$ is a Lie 2-group, we have
   \begin{eqnarray*}
   	s(q\cdot_\B q')=sq\cdot s\phi(\B q)q'=sq\cdot\phi_0(\B_0 sq)sq'=sq\cdot_{\B_0} sq',
   \end{eqnarray*}
   where we used the facts that $Q\rightrightarrows Q_0$ is a Lie 2-group and $\B_0\circ s_Q=s_P\circ\B$. Similarly, we have $t(q\cdot_\B q')=tq\cdot_{\B_0} tq'$. 
  For the groupoid multiplication, given \eqref{phi} and the fact that $(\B,\B_0)$ is a groupoid morphism, it follows that
  \begin{eqnarray*}
  	(q* q')\cdot_\B(j* j')&=&(q* q')\cdot\phi(\B q*\B q')(j* j')\\
  	&=&(q* q')\cdot(\phi(\B q)j*\phi(\B q')j')\\
  	&=&(q\cdot\phi(\B q)j)*(q'\cdot\phi(\B q')j')\\
  	&=&(q\cdot_\B j)*(q'\cdot_\B j'),
  \end{eqnarray*}
   which implies that $Q^\B\rightrightarrows Q_0^\B$ is a Lie 2-group. 
  It is obvious that $(\B,\B_0)$ is a Lie 2-group homomorphism from $Q^\B\rightrightarrows Q_0^\B$ to $P\rightrightarrows P_0$.
    
    \rm(ii). It follows from  \cite[Theorem 3.8]{JSZ} that $\phi^\B$  and $\phi_0^\B$  are Lie group actions of $Q^\B$ and $Q_0^{\B}$  on the manifolds $P$ and $P_0$, respectively. It is left to show that $(\phi^\B,\phi_0^\B)$ is a groupoid morphism. 
    Based on  \eqref{phi} and the fact that $(\B,\B_0)$ is a Lie groupoid morphism, we have
    \begin{eqnarray*}
    	\phi^\B(q* q')(p* p')&=&
        \big(\B(\phi(p* p')(q^\dagger* q'^\dagger))\big)^{-1}\cdot(p* p')\cdot\B(q^\dagger* q'^\dagger)\\
        &=&(\B\phi(p)q^\dagger*\B\phi(p')q'^\dagger)^{-1}\cdot(p* p')\cdot(\B q^\dagger*\B q'^\dagger)\\
        &=&\big((\B\phi(p)q^\dagger)^{-1}\cdot p\cdot\B q^\dagger\big)*\big((\B\phi(p')q'^\dagger)^{-1}\cdot p'\cdot\B q'^\dagger\big)\\
        &=&\phi^\B(q)p* \phi^\B(q')p'.
    \end{eqnarray*}
    The verification of the other conditions of a Lie groupoid morphism is omitted. Therefore, $(\phi^\B,\phi_0^\B)$ is an action of the Lie 2-group $Q^\B\rightrightarrows Q_0^\B$ on the Lie groupoid $P\rightrightarrows P_0$.
    
   \rm(iii) is a consequence of the facts that $(P^\B,\B)$ and $(P_0^\B,\B_0)$ are Rota-Baxter Lie groups \cite[Proposition 2.13]{GLS} and the underlying Lie groupoid structure of the descendant Lie 2-group $P^\B\rightrightarrows P_0^\B$ is the same as $P\rightrightarrows P_0$.
\end{proof}

The following construction yields more relative Rota-Baxter operators from a known one via Lie 2-group automorphisms. 

\begin{pro}\label{iso rRB}
    Let $(\B,\B_0)$ be a relative Rota-Baxter operator on a Lie 2-group $P\rightrightarrows P_0$ with respect to an action $(\phi,\phi_0)$ on another Lie 2-group $Q\rightrightarrows Q_0$.  If $\theta:Q\to Q,\theta_0:Q_0\to Q_0,\rho:P\to P,\rho_0:P_0\to P_0$ are smooth maps such that $(\theta\times\rho,\theta_0\times\rho_0)$ is a Lie 2-group automorphism of $Q\rtimes P\rightrightarrows Q_0\rtimes P_0$, then
    \begin{eqnarray*}
    	(\rho^{-1}\circ\B\circ\theta,\rho_0^{-1}\circ\B_0\circ\theta_0)
    \end{eqnarray*} 
 is also a relative Rota-Baxter operator on $P\rightrightarrows P_0$ with respect to the action $(\phi,\phi_0)$, where $(\theta\times \rho)(q,p)=(\theta q,\rho p)$. 
\end{pro}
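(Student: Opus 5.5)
The approach is to use the graph characterization of Theorem \ref{graph}: $(\B,\B_0)$ is a relative Rota-Baxter operator exactly when $\Gr(\B)\rightrightarrows\Gr(\B_0)$ is a Lie 2-subgroup of $Q\rtimes P\rightrightarrows Q_0\rtimes P_0$. Lie 2-group automorphisms carry Lie 2-subgroups to Lie 2-subgroups, so we will identify the graph of the new pair as such an image.

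Set $\Psi=(\theta\times\rho)^{-1}$ and $\Psi_0=(\theta_0\times\rho_0)^{-1}$. Since $(\theta\times\rho,\theta_0\times\rho_0)$ is a Lie 2-group automorphism, so is $(\Psi,\Psi_0)$. Note that $\theta$, $\rho$, $\theta_0$, $\rho_0$ are then bijections, because a product map $\theta\times\rho$ is bijective only if each factor is; smoothness of their inverses follows from that of $\Psi$. The key computation is
\[
\Psi(\Gr(\B))=\{(\theta^{-1}q,\rho^{-1}\B q)\mid q\in Q\}=\{(q',\rho^{-1}\B\theta q')\mid q'\in Q\}=\Gr(\rho^{-1}\circ\B\circ\theta),
\]
where we substitute $q=\theta q'$. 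The same computation at the object level gives $\Psi_0(\Gr(\B_0))=\Gr(\rho_0^{-1}\circ\B_0\circ\theta_0)$.

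Next we check that the image of a Lie 2-subgroup under a Lie 2-group automorphism is again a Lie 2-subgroup. At the group level, $\Psi$ is a group isomorphism and a diffeomorphism, so $\Psi(\Gr(\B))$ is a Lie subgroup, and likewise for $\Psi_0$. At the groupoid level, $\Psi$ commutes with $s$, $t$ and $*$ and is compatible with $\Psi_0$. Hence $\Psi(\Gr(\B))$ is closed under composition of composable pairs, and its source and target land in $\Psi_0(\Gr(\B_0))$, which is also closed under units and inverses. So it is a Lie subgroupoid. Applying Theorem \ref{graph} (ii)$\Rightarrow$(i) then finishes the proof.

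The main obstacle is only bookkeeping. We must make sure that the new maps $\rho^{-1}\circ\B\circ\theta$ and $\rho_0^{-1}\circ\B_0\circ\theta_0$ are indeed the maps whose graphs appear. We must also confirm the compatibility conditions $s\circ\rho=\rho_0\circ s$ and $s\circ\theta=\theta_0\circ s$, which hold because $\theta\times\rho$ commutes with the product source map. As a cross-check, one could verify the Rota-Baxter identity directly using $\theta(q\cdot\phi(p)q')=\theta q\cdot\phi(\rho p)\theta q'$, which follows from $\theta\times\rho$ being a homomorphism of the semidirect product, but the graph argument avoids this computation.
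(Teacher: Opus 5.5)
Your proof is correct, but it takes a different route from the paper's.

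The paper argues by direct computation. From the fact that $\theta\times\rho$ is a homomorphism of the semi-direct product $Q\rtimes P$, it extracts $\theta\in\Aut(Q)$, $\rho\in\Aut(P)$ and the intertwining relation $\theta\circ\phi(p)=\phi(\rho p)\circ\theta$. It then checks \eqref{rrbc} for $\rho^{-1}\B\theta$ in one line, which is essentially the ``cross-check'' you mention at the end, and does the same at level $0$. The groupoid-morphism property comes from composing the groupoid morphisms $(\theta,\theta_0)$, $(\B,\B_0)$ and $(\rho^{-1},\rho_0^{-1})$.

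You instead transport the graph, using $\Gr(\rho^{-1}\circ\B\circ\theta)=(\theta\times\rho)^{-1}(\Gr(\B))$. You then invoke Theorem \ref{graph} in both directions, together with the fact that Lie 2-group isomorphisms preserve Lie 2-subgroups.

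Your version is more conceptual and handles the group and groupoid conditions at both levels in one stroke. The cost is that it relies on Theorem \ref{graph}, hence on \cite[Proposition 3.4]{JSZ}, and it needs the bookkeeping about bijectivity and smoothness of the individual factors. The paper's computation is self-contained and isolates the intertwining relation, which is the real content of the hypothesis. It also never uses that $\theta$ is bijective. If you describe the new graph as the preimage $\{(q,p)\mid(\theta q,\rho p)\in\Gr(\B)\}$ rather than as an image, your argument likewise no longer needs to invert $\theta$.
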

\begin{proof}
     Since $\theta\times\rho$ is an automorphism of the semi-direct product Lie group $Q\rtimes P$, i.e.
     \begin{eqnarray*}
     	(\theta\times\rho)\big((q,p)\cdot(q',p')\big)=(\theta\times\rho)(q,p)\cdot(\theta\times\rho)(q',p'),
     \end{eqnarray*}
we deduce that $\theta\in\Aut(Q)$, $\rho\in\Aut(P)$, and they satisfy $\phi(\rho p)\circ\theta=\theta\circ\phi(p)$ for any $p\in P$. Combining this with the relation \eqref{rrbc} of $\B:Q\to P$, we have
     \begin{eqnarray*}
         (\rho^{-1}\B\theta)q\cdot(\rho^{-1}\B\theta)q'&=&\rho^{-1}(\B\theta q\cdot\B\theta q')=\rho^{-1}\B(\theta q\cdot\phi(\B\theta q)\theta q')\\
         &=&\rho^{-1}\B(\theta q\cdot \theta(\phi(\rho^{-1}\B\theta q)q'))=\rho^{-1}\B\theta(q\cdot\phi(\rho^{-1}\B\theta q)q'),
     \end{eqnarray*}
     which implies that $\rho^{-1}\B\theta$ is a relative Rota-Baxter operator on the Lie group $P$ with respect to $\phi:P\to\Aut(Q)$. Similarly, $\rho_0^{-1}\B_0\theta_0$ is a relative Rota-Baxter operator on the Lie group $P_0$ with respect to $\phi_0:P_0\to\Aut(Q_0)$. The fact that $(\theta\times\rho,\theta_0\times\rho_0)$ is a groupoid endomorphism of $Q\rtimes P\rightrightarrows Q_0\rtimes P_0$ ensures that $(\theta,\theta_0)$ and $(\rho,\rho_0)$ are groupoid endomorphisms of $Q\rightrightarrows Q_0$ and $P\rightrightarrows P_0$. Then the composition $(\rho^{-1}\circ\B\circ\theta,\rho_0^{-1}\circ\B_0\circ\theta_0)$ is a groupoid morphism since $(\B,\B_0)$ is also a groupoid morphism. 
\end{proof}

\begin{cor}\label{corRB}
    If $(P\rightrightarrows P_0,(\B,\B_0))$ is a Rota-Baxter Lie 2-group and $(\rho,\rho_0)$ is a Lie 2-group automorphism of  $P\rightrightarrows P_0$, then $\big(P\rightrightarrows P_0,(\rho^{-1}\circ\B\circ\rho,\rho_0^{-1}\circ\B_0\circ\rho_0)\big)$ is also a Rota-Baxter Lie 2-group.
\end{cor}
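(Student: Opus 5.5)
The plan is to apply Proposition \ref{iso rRB} in the special case $(Q\rightrightarrows Q_0)=(P\rightrightarrows P_0)$ with the adjoint action $(\phi,\phi_0)=(\mathrm{Ad},\mathrm{Ad}_0)$, taking $\theta=\rho$ and $\theta_0=\rho_0$. All of the work then reduces to checking the hypothesis of that proposition: $(\rho\times\rho,\rho_0\times\rho_0)$ must be a Lie 2-group automorphism of the semi-direct product $P\rtimes P\rightrightarrows P_0\rtimes P_0$, where the semi-direct product is taken with respect to the adjoint action.

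\textbf{Group level.} First I check that $\rho\times\rho$ is a group automorphism of $P\rtimes P$. Since $\rho$ is a group automorphism, we have
\[
\rho(q\cdot\mathrm{Ad}_p q')=\rho q\cdot\mathrm{Ad}_{\rho p}\rho q' .
\]
This gives
\[
(\rho\times\rho)\big((q,p)\cdot(q',p')\big)=(\rho q,\rho p)\cdot(\rho q',\rho p').
\]
Equivalently, this is the compatibility $\mathrm{Ad}(\rho p)\circ\rho=\rho\circ\mathrm{Ad}(p)$ used in the proof of Proposition \ref{iso rRB}. The map is bijective with smooth inverse $\rho^{-1}\times\rho^{-1}$. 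The same argument applies to $\rho_0\times\rho_0$ on $P_0\rtimes P_0$.

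\textbf{Groupoid level.} The semi-direct product carries the direct product groupoid structure \eqref{sd1}. Because $(\rho,\rho_0)$ is a groupoid morphism of $P\rightrightarrows P_0$, the product $(\rho\times\rho,\rho_0\times\rho_0)$ commutes with source and target componentwise. It also preserves $(q,p)*(q',p')=(q*q',p*p')$. Its inverse $(\rho^{-1}\times\rho^{-1},\rho_0^{-1}\times\rho_0^{-1})$ is again a groupoid morphism, since $(\rho^{-1},\rho_0^{-1})$ is a Lie 2-group automorphism.

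Proposition \ref{iso rRB} then yields that $(\rho^{-1}\circ\B\circ\rho,\rho_0^{-1}\circ\B_0\circ\rho_0)$ is a relative Rota-Baxter operator with respect to $(\mathrm{Ad},\mathrm{Ad}_0)$, which is by definition a Rota-Baxter operator on $P\rightrightarrows P_0$.

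There is no real obstacle in this argument. The only point needing care is the compatibility of $\rho$ with the adjoint action, and that is immediate from $\rho$ being a homomorphism. Alternatively, one could bypass Proposition \ref{iso rRB} and verify the Rota-Baxter identity for $\rho^{-1}\B\rho$ directly: apply $\rho^{-1}$ to the identity $\B\rho q\cdot\B\rho q'=\B(\rho q\cdot\mathrm{Ad}_{\B\rho q}\rho q')$. The groupoid morphism property of the composite $(\rho^{-1}\circ\B\circ\rho,\rho_0^{-1}\circ\B_0\circ\rho_0)$ then follows because a composite of groupoid morphisms is again a groupoid morphism.
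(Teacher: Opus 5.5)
Your proposal is correct and follows the paper's intended route: the corollary is stated without proof as the specialization of Proposition \ref{iso rRB} to $Q=P$, $(\phi,\phi_0)=(\mathrm{Ad},\mathrm{Ad}_0)$, $\theta=\rho$, $\theta_0=\rho_0$. Your check that $(\rho\times\rho,\rho_0\times\rho_0)$ is a Lie 2-group automorphism of the adjoint semi-direct product supplies exactly the hypothesis the paper leaves implicit.
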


\subsection{Factorization theorem of Rota-Baxter Lie 2-groups}
In this subsection, we obtain a factorization theorem for Rota-Baxter Lie 2-groups by adapting the approaches in \cite{GLS, Se} from Lie algebras and Lie groups to the context of Lie 2-groups. 

Recall that a normal Lie 2-subgroup of $P\rightrightarrows P_0$ is a Lie subgroupoid $P'\rightrightarrows P_0'$ such that $P'$ and $P_0'$ are normal Lie subgroups of $P$ and $P_0$. 
 
Suppose that $\big(P\rightrightarrows P_0,(\B,\B_0)\big)$ is a Rota-Baxter Lie 2-group. By Theorem \ref{descendt 2grp}, $(\B,\B_0)$ is a Lie 2-group homomorphism from the descendant Lie 2-group $P^\B\rightrightarrows P_0^\B$ to $P\rightrightarrows P_0$. Actually, we have another homomorphism.

\begin{pro}
	Let $\big(P\rightrightarrows P_0,(\B,\B_0)\big)$ be a Rota-Baxter Lie 2-group. Define two maps
	\begin{eqnarray*}
		\B^+:P\to P, \qquad \B^+p=p\cdot\B p, \qquad \B_0^+:P_0\to P_0, \qquad \B_0^+p_0=p_0\cdot\B_0 p_0.
	\end{eqnarray*}
	Then $(\B^+,\B_0^+)$ is a Lie 2-group homomorphism from $P^\B\rightrightarrows P_0^\B$ to $P\rightrightarrows P_0$.
\end{pro}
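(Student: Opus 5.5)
The plan is to verify the two ingredients of a Lie 2-group homomorphism separately. First, $\B^+$ and $\B_0^+$ must be Lie group homomorphisms from $P^\B$ to $P$ and from $P_0^\B$ to $P_0$. Second, the pair $(\B^+,\B_0^+)$ must be a Lie groupoid morphism. Smoothness is immediate, since $\B^+$ is the composition of $(\Id,\B)$ with the group multiplication.

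For the group homomorphism property on $P$, I compute directly with the Rota-Baxter identity for $\phi=\mathrm{Ad}$, namely $\B p\cdot\B p'=\B(p\cdot\B p\cdot p'\cdot(\B p)^{-1})$, together with the descendant product $p\cdot_\B p'=p\cdot\B p\cdot p'\cdot(\B p)^{-1}$. This gives
\begin{eqnarray*}
\B^+(p\cdot_\B p')&=&p\cdot\B p\cdot p'\cdot(\B p)^{-1}\cdot\B(p\cdot_\B p')\\
&=&p\cdot\B p\cdot p'\cdot(\B p)^{-1}\cdot\B p\cdot\B p'\\
&=&(p\cdot\B p)\cdot(p'\cdot\B p')\\
&=&\B^+p\cdot\B^+p'.
\end{eqnarray*}
The identical computation on $P_0$ handles $\B_0^+$. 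This is the Lie group fact from \cite{GLS}, so I could also simply cite it.

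For the groupoid morphism property, I use that $s,t$ are group homomorphisms with $s\circ\B=\B_0\circ s$. Then $s(\B^+p)=sp\cdot s\B p=sp\cdot\B_0 sp=\B_0^+(sp)$, and the same argument works for $t$. For composable $(p,p')\in P^{(2)}$, the interchange law (the groupoid multiplication is a group homomorphism on $P^{(2)}$) gives
\[\B^+(p*p')=(p*p')\cdot(\B p*\B p')=(p\cdot\B p)*(p'\cdot\B p')=\B^+p*\B^+p'.\]
Here $(\B p,\B p')$ is composable because $(\B,\B_0)$ is a groupoid morphism. Compatibility with units then follows automatically, or can be checked directly.

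The only step needing care is the interchange step. One must confirm that both pairs are composable so that $(p,p')\cdot(\B p,\B p')$ is a product in the group $P^{(2)}$. Once that is checked, the remaining verifications are routine.
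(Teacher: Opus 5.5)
Your proposal is correct and follows essentially the same route as the paper. The paper cites \cite[Proposition 3.1]{GLS} for the group-homomorphism property of $\B^+$ and $\B_0^+$, which you instead verify directly through the descendant product. It then checks compatibility with groupoid multiplication by the same interchange computation $\B^+(p*p')=(p*p')\cdot(\B p*\B p')=(p\cdot\B p)*(p'\cdot\B p')$, using that $(\B,\B_0)$ is a groupoid morphism; your explicit source/target and composability checks fill in details the paper leaves as routine.
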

\begin{proof}
	It follows from \cite[Proposition 3.1]{GLS} that $\B^+$ and $\B_0^+$ are Lie group homomorphisms. Since $(\B,\B_0)$ is a Lie groupoid morphism and the groupoid multiplication is a Lie group homomorphism, we have
	\begin{eqnarray*}
		\B^+(p* p')=(p* p')\cdot(\B p* \B p')=(p\cdot\B p)*(p'\cdot\B p')=\B^+p* \B^+p'.
	\end{eqnarray*}
	The other conditions for $(\B^+,\B_0^+)$ to be a Lie groupoid morphism also hold.
\end{proof}

Given a Rota-Baxter Lie 2-group $\big(P\rightrightarrows P_0,(\B,\B_0)\big)$, we define four subgroupoids of $P\rightrightarrows P_0$:
\begin{eqnarray*}
	\Img\B^+\rightrightarrows\Img\B_0^+,\qquad \Img\B\rightrightarrows\Img\B_0,\qquad \Ker\B\rightrightarrows\Ker\B_0,\qquad \Ker\B^+\rightrightarrows\Ker\B_0^+.
\end{eqnarray*}
Denote them by $P^+\rightrightarrows P_0^+$, $P^-\rightrightarrows P_0^-$, $K^+\rightrightarrows K_0^+$ and $K^-\rightrightarrows K_0^-$, respectively. As $(\B,\B_0)$ and $(\B^+,\B_0^+)$ are both Lie 2-group homomorphisms, it follows that $P^\pm\rightrightarrows P_0^\pm$ are Lie 2-subgroups of $P\rightrightarrows P_0$ and that $K^\pm\rightrightarrows K_0^\pm$ are normal Lie 2-subgroups of $P^\B\rightrightarrows P_0^\B$. In addition, the following relations hold.
\begin{lem}
$K^\pm\rightrightarrows K_0^\pm$ are normal Lie 2-subgroups of $P^\pm\rightrightarrows P_0^\pm$. 
\end{lem}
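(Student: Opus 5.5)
The plan is to reduce everything to the Lie group statements, checked separately on arrows and on objects, and then observe that the groupoid compatibility comes for free. By Theorem \ref{descendt 2grp} and the preceding proposition, $(\B,\B_0)$ and $(\B^+,\B_0^+)$ are Lie 2-group homomorphisms $P^\B\rightrightarrows P_0^\B\to P\rightrightarrows P_0$. Hence $K^\pm\rightrightarrows K_0^\pm$ are Lie subgroupoids of $P\rightrightarrows P_0$, since kernels of groupoid morphisms that are also group homomorphisms are closed under $s,t,*$ and units. So once we know $K^+\subset P^+$, $K^-\subset P^-$ (and likewise on objects), each $K^\pm\rightrightarrows K_0^\pm$ is automatically a subgroupoid of $P^\pm\rightrightarrows P_0^\pm$. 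What remains is purely group-theoretic: $K^\pm$ must be normal subgroups of $P^\pm$ with respect to the original multiplication of $P$, and the same for $K_0^\pm$ in $P_0^\pm$. The object level is verbatim the same argument with $\B_0$, so I only treat $P$.

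\emph{Inclusions.} If $\B k=e$, then $\B^+k=k\cdot\B k=k$, so $K^+\subset \Img\B^+=P^+$. If $\B^+k=e$, then $k=(\B k)^{-1}$. To conclude $k\in P^-$ I first show that $K^-$ is a subgroup of $P$ (not only of $P^\B$).

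\emph{$K^\pm$ are subgroups of $P$.} On $K^+$ we have $k\cdot_\B k'=k\cdot\B k\cdot k'\cdot(\B k)^{-1}=k\cdot k'$, so $K^+$, a subgroup of $P^\B$, is also closed in $P$ and is a subgroup of $P$. On $K^-$, the homomorphism $\B|_{K^-}:K^-\to P$ is the map $k\mapsto k^{-1}$. Its image $\{k^{-1}\mid k\in K^-\}$ is therefore a subgroup of $P$, and so is $K^-$, being the set of inverses of that subgroup. In particular $k^{-1}=\B k\in P^-$, and hence $k\in P^-$.

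\emph{Normality.} I transport normality from $P^\B$ along the homomorphisms.
\begin{itemize}
\item For $K^+$: let $x=\B^+y\in P^+$ and $k\in K^+$. Since $K^+$ is normal in $P^\B$, the element $k'=y\cdot_\B k\cdot_\B y^\dagger$ lies in $K^+$. Applying the homomorphism $\B^+$ and using $\B^+|_{K^+}=\Id$ gives $k'=\B^+k'=x\cdot k\cdot x^{-1}$.
\item For $K^-$: let $x=\B y\in P^-$ and $k\in K^-$. Set $k'=y\cdot_\B k\cdot_\B y^\dagger\in K^-$. Applying $\B$ and using $\B|_{K^-}=(\cdot)^{-1}$ gives $k'^{-1}=x\cdot k^{-1}\cdot x^{-1}$. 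Since $K^-$ is a subgroup of $P$ and $k^{-1}$ ranges over all of $K^-$, this yields $xK^-x^{-1}\subset K^-$.
\end{itemize}

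The main point needing care is the $K^-$ case: there one has to identify $\B$ on $K^-$ with inversion and check that $K^-$ is a subgroup for the original product of $P$. The remaining claims (that the sub-objects are embedded Lie subgroups, and the object-level statements) follow in the same way.
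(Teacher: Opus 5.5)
Your argument is correct and follows the paper's route. You establish the group-level facts ($K^\pm\subset P^\pm$ and $K_0^\pm\subset P_0^\pm$ as normal subgroups), and you get the subgroupoid property from the fact that $(\mathfrak{B},\mathfrak{B}_0)$ and $(\mathfrak{B}^+,\mathfrak{B}_0^+)$ are Lie groupoid morphisms. The only difference is that the paper cites \cite[Lemma 3.2]{GLS} for the group-level part, whereas you reprove it directly, notably by identifying $\mathfrak{B}|_{K^-}$ with inversion to see that $K^-$ is a subgroup for the original product of $P$. That is a valid, self-contained verification.
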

\begin{proof}
 By \cite[Lemma 3.2]{GLS}, $K^\pm\subset P^\pm$ and $K_0^\pm\subset P_0^\pm$ are normal Lie subgroups. The fact that $(\B,\B_0)$ and $(\B^+,\B_0^+)$ are both Lie groupoid morphisms ensures that $K^\pm\rightrightarrows K_0^\pm$ are subgroupoids of $P^\pm\rightrightarrows P_0^\pm$.  Thus $K^\pm\rightrightarrows K_0^\pm$ are normal Lie 2-subgroups of $P^\pm\rightrightarrows P_0^\pm$.
\end{proof}

\begin{pro}
	There is a Lie 2-group isomorphism  between the two quotient Lie 2-groups:
	\begin{displaymath}
		\xymatrix@=8ex@R=4ex{
			P^-/K^- \ar[r]^{\Theta} \ar@<-.5ex>[d]  \ar@<.5ex>[d] & P^+/K^+
			\ar@<-.5ex>[d] \ar@<.5ex>[d]
			\\
			P_0^-/K_0^- \ar[r]^{\Theta_0} & P_0^+/K_0^+,
		}
	\end{displaymath}
where $\Theta$ and $\Theta_0$ are defined by
	\begin{eqnarray*}
		\Theta(\overline{\B p})=\overline{\B^+ p},\qquad \Theta_0(\overline{\B_0 p_0})=\overline{\B_0^+ p_0}.
	\end{eqnarray*}
	We call $(\Theta,\Theta_0)$ the \textbf{Cayley transform} of the Rota-Baxter operator $(\B,\B_0)$.
\end{pro}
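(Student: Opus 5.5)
The plan is to handle the group level and the groupoid level separately, as in the earlier proofs of this section. At the group level, \cite[Theorem 3.3]{GLS} (the Cayley transform for Rota-Baxter Lie groups) already gives that $\Theta:P^-/K^-\to P^+/K^+$ and $\Theta_0:P_0^-/K_0^-\to P_0^+/K_0^+$ are well-defined Lie group isomorphisms. So the new content is that $(\Theta,\Theta_0)$ is a Lie groupoid morphism between the quotient groupoids, and that its inverse is one too.

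First I will describe the quotient Lie 2-group $P^-/K^-\rightrightarrows P_0^-/K_0^-$. Since $K^-\rightrightarrows K_0^-$ is a normal Lie 2-subgroup of $P^-\rightrightarrows P_0^-$ (by the preceding Lemma), the structure maps descend:
\[
s(\overline{x})=\overline{sx},\qquad t(\overline{x})=\overline{tx},\qquad \overline{x}*\overline{y}=\overline{x*y}.
\]
Here two classes are composable when they have composable representatives. Well-definedness of the multiplication is the main point I expect to need care with. Given $x,y$ with $\overline{sx}=\overline{ty}$, I will write $sx=ty\cdot k_0$ with $k_0\in K_0^-$. Then I will modify $y$ to $y\cdot\iota(k_0)$; this is valid because $\iota(k_0)\in K^-$, using $\B\circ\iota=\iota\circ\B_0$, which holds because $(\B,\B_0)$ is a groupoid morphism. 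The modified pair is genuinely composable. Independence of the choice then follows because $*$ is a group homomorphism on $P^{(2)}$ and $K^{-(2)}$ is a normal subgroup of it. The same argument applies for $+$. I will treat this as part of ``quotient Lie 2-group'', which the statement presupposes.

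Next I will check the morphism conditions using representatives of the form $\B p$. For source and target: $s\Theta(\overline{\B p})=\overline{s(p\cdot\B p)}=\overline{sp\cdot\B_0 sp}=\Theta_0(\overline{\B_0 sp})=\Theta_0(s\overline{\B p})$, and the same computation works for $t$. For multiplication, take composable classes $\overline{\B p},\overline{\B p'}$. The key step is to choose representatives $p,p'$ that are themselves composable in $P$, so that
\[
\Theta(\overline{\B p}*\overline{\B p'})=\Theta(\overline{\B(p*p')})=\overline{(p*p')\cdot(\B p*\B p')}=\overline{\B^+p}*\overline{\B^+p'}.
\]
This uses that $*$ is a group homomorphism.

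Finding such composable preimages is the obstacle. If $\overline{\B p}$ and $\overline{\B p'}$ are composable, then after adjusting by $K^-$ we may assume $s\B p=t\B p'$, that is, $\B_0(sp)=\B_0(tp')$. So $sp$ and $tp'$ differ by an element of the descendant group $P_0^\B$ lying in $K_0^-=\Ker\B_0$: explicitly $tp'=sp\cdot_{\B_0}k_0$ with $k_0\in K_0^-$. I will then replace $p$ by $p\cdot_\B\iota(k_0)$, which is legitimate since $\iota$ is a Lie 2-group morphism for the descendant structure as well (Theorem \ref{descendt 2grp}(i)). This does not change $\B p$, because $\B\iota(k_0)=\iota\B_0k_0=e$ and $\B$ is a homomorphism from $P^\B$. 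It also does not change $\overline{\B^+p}$, since $\B^+\iota(k_0)=\iota(k_0)\in K^+$ when $\B_0k_0=e$ (because $\B^+$ is a homomorphism). After this replacement $s p=tp'$, as required.

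The inverse $\Theta^{-1}(\overline{\B^+p})=\overline{\B p}$ is handled symmetrically with the roles of $\B$ and $\B^+$ swapped. Alternatively, a bijective Lie groupoid morphism that is a Lie group isomorphism on both levels automatically has inverse a groupoid morphism. This completes the argument.
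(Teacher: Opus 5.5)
Your proposal is correct and follows the paper's route: the group-level isomorphisms are taken from the Lie group case, and the groupoid condition is checked via $\overline{\B p}*\overline{\B p'}=\overline{\B(p*p')}\mapsto\overline{\B^+(p*p')}=\overline{\B^+p}*\overline{\B^+p'}$, using that $(\B,\B_0)$ and $(\B^+,\B_0^+)$ are groupoid morphisms. Your adjustment $p\mapsto p\cdot_\B\iota(k_0)$ supplies the composable preimages $p,p'$ that the paper's one-line computation takes for granted, with one notational slip: in the paper's notation $k_0$ lies in $\Ker\B_0=K_0^+$, not in $K_0^-=\Ker\B_0^+$, and this is exactly what your later steps use ($\B_0k_0=e$, $\iota(k_0)\in K^+$).
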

\begin{proof}
It is known from \cite[Proposition 3.3]{GLS} that $\Theta$ and $\Theta_0$ are well-defined Lie group isomorphisms. To see that $(\Theta,\Theta_0)$ is a groupoid morphism, we only check that $\Theta$ preserves the groupoid multiplication. Since $(\B,\B_0)$ and $(\B^+,\B_0^+)$ are both groupoid morphisms, we have
	\begin{eqnarray*}
		\Theta(\overline{\B p}*\overline{\B p'})=\Theta(\overline{\B(p* p')})=\overline{\B^+(p* p')}=\overline{\B^+p*\B^+p'}=\Theta(\overline{\B p})*\Theta(\overline{\B p'}).
	\end{eqnarray*}
	Therefore, $(\Theta,\Theta_0)$ is a  Lie 2-group isomorphism.
\end{proof}

Now we consider the direct product Lie 2-group $P^+\times P^-\rightrightarrows P_0^+\times P_0^-$. Define 
\begin{eqnarray*}
	P^\Theta=\{(p^+,p^-)\in P^+\times P^-|\Theta(\overline{p^-})=\overline{p^+}\},\qquad
	P_0^\Theta=\{(p_0^+,p_0^-)\in P_0^+\times P_0^-|\Theta_0(\overline{p_0^-})=\overline{p_0^+}\}.
\end{eqnarray*}

\begin{lem}\label{3.14}
	With the above notations, $P^\Theta\rightrightarrows P_0^\Theta$ is a Lie 2-subgroup of $P^+\times P^-\rightrightarrows P_0^+\times P_0^-$ and it is isomorphic to $P^\B\rightrightarrows P_0^\B$ by $(\sigma,\sigma_0)$, where
	\begin{eqnarray*}
		\sigma:P^\B\to P^\Theta,\qquad\sigma(p)=(\B^+ p,\B p),\qquad \sigma_0:P_0^\B\to P_0^\Theta,\qquad  \sigma_0(p_0)=(\B_0^+ p_0,\B _0p_0).
	\end{eqnarray*}
\end{lem}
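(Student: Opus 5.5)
The plan follows the pattern of the previous results: the group-level statements come from the Lie group case in \cite{GLS}, and only the compatibility with the groupoid structure needs a separate check. First, I would invoke the Lie group factorization result of \cite{GLS} (the analogue of this lemma for a single Rota-Baxter Lie group). It gives the following facts:
\begin{itemize}
\item $P^\Theta$ is a Lie subgroup of $P^+\times P^-$, and $P_0^\Theta$ is a Lie subgroup of $P_0^+\times P_0^-$;
\item $\sigma:P^\B\to P^\Theta$ and $\sigma_0:P_0^\B\to P_0^\Theta$ are Lie group isomorphisms.
\end{itemize}
Well-definedness of $\sigma$ is immediate from the definition of $\Theta$, since $\Theta(\overline{\B p})=\overline{\B^+p}$. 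The homomorphism property follows because both $\B$ and $\B^+$ are homomorphisms from $P^\B$ to $P$. Injectivity holds because $\B^+p\cdot(\B p)^{-1}=p$. Surjectivity uses the definition of $P^\Theta$: if $\Theta(\overline{\B p})=\overline{p^+}$, then $p^+=\B^+p\cdot k$ with $k\in K^+$, and one corrects $p$ accordingly as in \cite{GLS}. So the group-theoretic part requires no new work.

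Second, I would show that $P^\Theta\rightrightarrows P_0^\Theta$ is a Lie subgroupoid of the direct product Lie groupoid $P^+\times P^-\rightrightarrows P_0^+\times P_0^-$. The cleanest route avoids checking closure by hand. By Theorem \ref{descendt 2grp}, $(\B,\B_0)$ is a Lie 2-group homomorphism $P^\B\rightrightarrows P_0^\B\to P\rightrightarrows P_0$, and by the preceding proposition so is $(\B^+,\B_0^+)$. Hence $(\sigma,\sigma_0)=((\B^+,\B),(\B_0^+,\B_0))$ is a Lie 2-group homomorphism into $P^+\times P^-\rightrightarrows P_0^+\times P_0^-$. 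Concretely, the source map commutes because
\[
s(\B^+p,\B p)=(\B_0^+sp,\B_0sp)=\sigma_0(sp),
\]
and similarly for $t$. Multiplication is preserved because
\[
\sigma(p*p')=(\B^+p*\B^+p',\B p*\B p')=\sigma(p)*\sigma(p').
\]
Since $\sigma$ and $\sigma_0$ are bijections onto $P^\Theta$ and $P_0^\Theta$, the image of a Lie 2-group under an injective Lie 2-group homomorphism is closed under all structure maps. Explicitly, the source and target of an element of $P^\Theta$ lie in $P_0^\Theta$. If $\sigma(p),\sigma(p')$ are composable, then $\sigma_0(sp)=\sigma_0(tp')$, so $sp=tp'$ by injectivity of $\sigma_0$, and therefore $\sigma(p)*\sigma(p')=\sigma(p*p')\in P^\Theta$. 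Units and inverses follow in the same way.

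The main obstacle is the surjectivity and well-definedness at the group level, which is exactly what is imported from \cite{GLS}. A secondary point is smoothness: the Lie subgroupoid structure on $P^\Theta$ should be the one transported along $\sigma$. I would simply define the smooth structure on $P^\Theta$ and $P_0^\Theta$ via $\sigma$ and $\sigma_0$, as is implicitly done in the Lie group case. With that structure, $s$ and $t$ on $P^\Theta$ are surjective submersions because they correspond under $\sigma$ to those of $P^\B\rightrightarrows P_0^\B$. The inverse $(\sigma^{-1},\sigma_0^{-1})$ is then automatically a groupoid morphism, so $(\sigma,\sigma_0)$ is a Lie 2-group isomorphism.
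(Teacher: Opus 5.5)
Your proposal is correct and follows essentially the same route as the paper: the group-level facts (that $P^\Theta$ and $P_0^\Theta$ are Lie subgroups and that $\sigma,\sigma_0$ are Lie group isomorphisms) are imported from \cite[Lemma 3.4]{GLS}, and the new content is checking that $\sigma$ preserves the groupoid multiplication, using that $(\B,\B_0)$ and $(\B^+,\B_0^+)$ are groupoid morphisms. The only deviation is how you show that $P^\Theta\rightrightarrows P_0^\Theta$ is a subgroupoid. You obtain it as the image of the bijective groupoid morphism $(\sigma,\sigma_0)$, with injectivity of $\sigma_0$ handling composability, whereas the paper deduces it from $(\Theta,\Theta_0)$ being a groupoid morphism; both work, and yours avoids needing the groupoid structure on the quotients $P^\pm/K^\pm$.
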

\begin{proof}
	Since $\Theta$ and $\Theta_0$ are Lie group isomorphisms, we have that $P^\Theta\subset P^+\times P^-$ and $P_0^\Theta\subset P_0^+\times P_0^-$ are Lie subgroups and that $\sigma,\sigma_0$ are Lie group isomorphisms by \cite[Lemma 3.4]{GLS}. It is evident that $P^\Theta\rightrightarrows P_0^\Theta$ is further a subgroupoid since $(\Theta,\Theta_0)$ is a groupoid morphism. Hence $P^\Theta\rightrightarrows P_0^\Theta$ is a Lie 2-subgroup. To show that $(\sigma,\sigma_0)$ is a groupoid morphism, we only check that $\sigma$ preserves the groupoid multiplication. Actually, using the fact that $(\B^+,\B_0^+)$ and $(\B,\B_0)$ are groupoid morphisms and the groupoid multiplication is a Lie group homomorphism, we have
	\begin{eqnarray*}
		\sigma(p* p')&=&(\B^+(p* p'),\B(p* p'))=\big((p* p')\cdot(\B p*\B p'),\B p*\B p'\big)\\
		&=&\big((p\cdot\B p)*(p'\cdot\B p'),\B p*\B p'\big)=\sigma(p)*\sigma(p').
	\end{eqnarray*}
Thus $(\sigma,\sigma_0)$ is a Lie 2-group isomorphism.
\end{proof}

\begin{thm}\label{F-Thm}
	Let $(P\rightrightarrows P_0,(\B,\B_0))$ be a Rota-Baxter Lie 2-group. Then every element $p\in P$ and $p_0\in P_0$ can be uniquely expressed as
	\begin{eqnarray*}
		p=p^+\cdot(p^-)^{-1}, \qquad \text{for } (p^+,p^-)\in P^\Theta,\qquad p_0=p_0^+\cdot(p_0^-)^{-1}, \qquad \text{for } (p_0^+,p_0^-)\in P_0^\Theta.
	\end{eqnarray*}
Moreover, the pair of maps
\begin{eqnarray*}
	p\mapsto (p^+,p^-),\qquad 	p_0\mapsto (p_0^+,p_0^-),
\end{eqnarray*}
is a Lie groupoid morphism from $P\rightrightarrows P_0$ to $P^\Theta\rightrightarrows P_0^\Theta$.
\end{thm}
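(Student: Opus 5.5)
The plan is to reduce everything to the Lie group factorization of \cite{GLS} together with the isomorphism $(\sigma,\sigma_0)$ of Lemma \ref{3.14}. For the uniqueness and existence statements at each level, I would invoke \cite[Theorem 3.5]{GLS}, the factorization theorem for Rota-Baxter Lie groups, applied separately to $(P,\B)$ and $(P_0,\B_0)$. Concretely, the map
\[
m:P^\Theta\to P,\qquad (p^+,p^-)\mapsto p^+\cdot(p^-)^{-1},
\]
composed with $\sigma$ sends $p'\in P^\B$ to $\B^+p'\cdot(\B p')^{-1}=p'$. Hence $m\circ\sigma=\Id_P$ as maps of sets. Since $\sigma$ is bijective, $m$ is a bijection with inverse $\sigma$, and the same holds at the object level with $m_0\circ\sigma_0=\Id_{P_0}$. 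So the decomposition $p\mapsto(p^+,p^-)$ is exactly the map $\sigma$, read as a map of manifolds from $P$ to $P^\Theta$, namely $p\mapsto(\B^+p,\B p)$.

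With this identification, the second claim becomes a restatement of Lemma \ref{3.14}. The underlying Lie groupoid of the descendant Lie 2-group $P^\B\rightrightarrows P_0^\B$ coincides with that of $P\rightrightarrows P_0$ by Theorem \ref{descendt 2grp}(i). Lemma \ref{3.14} already shows that $(\sigma,\sigma_0)$ is a Lie groupoid morphism, so it is a groupoid morphism from $P\rightrightarrows P_0$ to $P^\Theta\rightrightarrows P_0^\Theta$. I would still write out the check that $\sigma$ intertwines source and target with $\sigma_0$, using $s\circ\B=\B_0\circ s$ and $s(p\cdot\B p)=sp\cdot\B_0 sp$ (and likewise for $t$). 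The compatibility with composition is the computation already displayed in the proof of Lemma \ref{3.14}.

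The only subtle point is to be explicit that the decomposition map is a groupoid morphism but not a group homomorphism from $(P,\cdot)$. It is a group homomorphism only from $P^\B$. This is why the statement says ``Lie groupoid morphism''. I expect no real obstacle beyond confirming that the identity $m\circ\sigma=\Id$ uses only the definition $\B^+p=p\cdot\B p$; that identity is immediate.
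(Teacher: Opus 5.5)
Your proposal is correct and follows the paper's proof. The unique factorization at each level comes from \cite[Theorem 3.5]{GLS}, and the groupoid-morphism claim comes from Lemma \ref{3.14}. Your observation that $m\circ\sigma=\Id_P$, so that the decomposition map is exactly $\sigma$ (and likewise $\sigma_0$), is the identification the paper leaves implicit when it cites Lemma \ref{3.14}.
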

\begin{proof}
	Since $(P,\B)$ and $(P_0,\B_0)$ are Rota-Baxter Lie groups, we have these unique expressions from \cite[Theorem 3.5]{GLS}. The pair of maps is a Lie groupoid morphism by Lemma \ref{3.14}.
\end{proof}

\subsection{Categorical solutions of the Yang-Baxter equation} 
Recall that for a set $X$, a \textbf{set-theoretical solution} of the Yang-Baxter equation on $X$ is a bijective map $R:X\times X\to X\times X$  satisfying:
\begin{eqnarray}\label{YBE}
	(R\times \Id_X)(\Id_X\times R)(R\times \Id_X)=(\Id_X\times R)(R\times \Id_X)(\Id_X\times R).
\end{eqnarray}

As stated in \cite{BGST}, relative Rota-Baxter operators on Lie groups naturally induce  post-groups and provide set-theoretical solutions of the Yang-Baxter equation. Furthermore, Rota-Baxter operators on crossed modules are used in \cite{Jiang} to construct  \textbf{categorical solutions} of the Yang-Baxter equation for a small category $\mathcal{C}$, which is an invertible functor 
\begin{eqnarray*}
	\mathcal{R}:\mathcal{C}\times \mathcal{C}\to \mathcal{C}\times \mathcal{C}
\end{eqnarray*}
  satisfying \eqref{YBE}. Combining these two ideas, we  find categorical solutions of the Yang-Baxter equation by relative Rota-Baxter operators on Lie 2-groups. 
  
  Denote a Lie 2-group $Q\rightrightarrows Q_0$ by $\mathcal{Q}$ for brevity.
\begin{thm}\label{solu of YBE}
	Let $(\B,\B_0)$ be a relative Rota-Baxter operator on a Lie 2-group $P\rightrightarrows P_0$ with respect to an action $(\phi,\phi_0)$ of $P\rightrightarrows P_0$ on $Q\rightrightarrows Q_0$. Then $\mathcal{R}=(R_\B,R_{\B_0}):\mathcal{Q}\times \mathcal{Q}\to \mathcal{Q}\times \mathcal{Q}$ is a categorical solution of the Yang-Baxter equation \eqref{YBE}, where $R_\B:Q\times Q\to Q\times Q$ and $R_{\B_0}:Q_0\times Q_0\to Q_0\times Q_0$ are defined by
	\begin{eqnarray*}
		R_\B(q,j)&=&\big(\phi(\B q)j,(\phi(\B q)j)^\dagger\cdot_\B q\cdot_\B j\big), \qquad \forall q,j\in Q,\\
		R_{\B_0}(q_0,j_0)&=&\big(\phi_0(\B_0 q_0)j_0,(\phi_0(\B_0 q_0)j_0)^\dagger\cdot_{\B_0} q_0\cdot_{\B_0}j_0\big), \qquad \forall q_0,j_0\in Q_0.
	\end{eqnarray*}
Here $\cdot_{\B}$ and $\cdot_{\B_0}$ are the descendant group structures given by \eqref{des} and $(\cdot)^\dagger$ denotes the corresponding inverse operations.
\end{thm}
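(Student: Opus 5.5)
Since $\mathcal{Q}\times\mathcal{Q}$ is the direct product groupoid $Q\times Q\rightrightarrows Q_0\times Q_0$, the statement splits into two parts. First, $R_\B$ and $R_{\B_0}$ are each bijective set-theoretical solutions of \eqref{YBE}. Second, the pair $(R_\B,R_{\B_0})$ is a functor, i.e.\ a Lie groupoid morphism, and its inverse is also a functor.

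For the first part I will not redo the computation. By \cite{BGST}, a relative Rota-Baxter operator $\B:Q\to P$ on a group with respect to $\phi$ yields a post-group structure on $Q$. Its descendant group is $Q^\B$ and its action is $q\triangleright j=\phi(\B q)j$. The map $(q,j)\mapsto (q\triangleright j,(q\triangleright j)^\dagger\cdot_\B q\cdot_\B j)$ is the associated invertible (non-degenerate) solution of the Yang-Baxter equation. Applying this to $\B$ and to $\B_0$ separately, which is legitimate by conditions (1),(2) of Definition \ref{rrb on L2 grp}, gives the bijectivity of $R_\B$ and $R_{\B_0}$ and relation \eqref{YBE} for each.

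For functoriality I will use two facts. Theorem \ref{descendt 2grp}(i) says $Q^\B\rightrightarrows Q_0^\B$ is a Lie 2-group, so $s,t$ are homomorphisms for $\cdot_\B$, and $*$ is a homomorphism $Q^{\B(2)}\to Q^\B$. In particular the inverse map $\dagger$ commutes with $s,t$ and with $*$. The second fact is \eqref{phi} together with $(\B,\B_0)$ being a groupoid morphism.

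For the source map, $s\phi(\B q)j=\phi_0(s\B q)sj=\phi_0(\B_0 sq)sj$. Combined with $s$ being a homomorphism of descendant groups and commuting with $\dagger$, this gives $(s\times s)R_\B=R_{\B_0}(s\times s)$, and similarly for $t$.

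For composition, take $(q,q'),(j,j')\in Q^{(2)}$. Then
\[\phi(\B(q*q'))(j*j')=\phi(\B q*\B q')(j*j')=\phi(\B q)j*\phi(\B q')j'.\]
The second component is a $\cdot_\B$-product of terms that each respect $*$, so by the interchange law in the 2-group $Q^\B$ it also splits as a $*$-product. Hence
\[R_\B((q,j)*(q',j'))=R_\B(q,j)*R_\B(q',j').\]
Units are preserved since $R_\B\circ\iota=\iota\circ R_{\B_0}$ follows in the same way. Smoothness is clear.

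Finally, the inverse of a bijective functor is automatically a functor: it is a bijection on objects and arrows that preserves the structure maps. The remaining point is that the explicit inverse is smooth. For this I would write it via the inverse formula from \cite{BGST}, which involves only $\cdot_\B$, $\dagger$, $\phi$ and $\B$, all smooth.

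I expect the main obstacle to be the composition step. One must check carefully that the interchange law in $Q^\B$ lets the three-fold product $(\phi(\B q)j)^\dagger\cdot_\B q\cdot_\B j$ split componentwise under $*$, and that the composable pairs land in $Q^{(2)}$. The latter follows from the source/target compatibility already established.
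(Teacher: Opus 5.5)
Your proposal is correct and follows essentially the same route as the paper. Both cite \cite{BGST} for $R_\B$ and $R_{\B_0}$ being bijective set-theoretical solutions, and both verify functoriality of $(R_\B,R_{\B_0})$ using Theorem \ref{descendt 2grp}(i), \eqref{phi}, and the fact that $(\B,\B_0)$ is a groupoid morphism. Your extra observation that the inverse of a bijective functor is automatically a functor fills in a point the paper leaves implicit.
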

\begin{proof}
	By \cite[Corollary 3.14]{BGST}, $R_\B$ and $R_{\B_0}$ are both set-theoretical solutions of the Yang-Baxter equation \eqref{YBE}. It remains to prove that $\mathcal{R}:\mathcal{Q}\times \mathcal{Q}\to \mathcal{Q}\times \mathcal{Q}$ is a functor (groupoid morphism). Since $(\phi,\phi_0)$ is a Lie 2-group action and $Q^\B\rightrightarrows Q_0^\B$ is a Lie 2-group, we have
	\begin{eqnarray*}
		R_{\B_0}(s\times s)(q,j)&=&(\phi_0(\B_0 sq)sj,(\phi_0(\B_0 sq)sj)^\dagger\cdot_{\B_0}sq\cdot_{\B_0}sj)\\
		&=&\big(s\phi(\B q)j,s((\phi(\B q)j)^\dagger\cdot_{\B}q\cdot_{\B}j)\big)\\
		&=&(s\times s)R_\B(q,j),
	\end{eqnarray*}
where we also applied the fact that $s_P\circ\B=\B_0\circ s_Q$.
	It is similar to show that $ R_{\B_0}\circ(t\times t)=(t\times t)\circ R_\B$ and $R_\B\circ(\iota\times\iota)=(\iota\times\iota)\circ R_{\B_0}$.
	By \eqref{phi} and the fact that $Q^\B\rightrightarrows Q_0^\B$ is a Lie 2-group, we obtain
	\begin{eqnarray*}
		R_\B\big((q,j)*(q',j')\big)&=&\big(\phi(\B q*\B q')(j* j'),(\phi(\B q*\B q')(j* j'))^\dagger\cdot_{\B}(q* q')\cdot_{\B}(j* j')\big)\\
		&=&\big(\phi(\B q)j*\phi(\B q')j',(\phi(\B q)j)^\dagger*(\phi(\B q')j')^\dagger\cdot_{\B}((q\cdot_{\B}j)*(q'\cdot_{\B}j'))\big)\\
		&=&\big(\phi(\B q)j*\phi(\B q')j',((\phi(\B q)j)^\dagger\cdot_{\B}q\cdot_{\B}j)*((\phi(\B q')j')^\dagger\cdot_{\B}q'\cdot_{\B}j')\big)\\
		&=&R_\B\big(q,j)* R_\B(q',j'),
	\end{eqnarray*}
	which completes the proof.
\end{proof}

\section{Relative Rota-Baxter operators on crossed modules}\label{S4}
Motivated by the one-to-one correspondence between Lie 2-groups and Lie group crossed modules, we study relative Rota-Baxter operators on crossed modules, extending the notion of Rota-Baxter operators on crossed modules introduced in \cite{Jiang}. We 
 further establish the equivalence between relative Rota-Baxter operators on Lie 2-groups and on crossed modules.  The comparison highlights the conceptual advantage of formulating relative Rota-Baxter operators directly on Lie 2-groups.

\subsection{Equivalence of relative Rota-Baxter operators  on Lie 2-groups and crossed modules}
Let us first recall the actions of a crossed module on another crossed module following \cite{N}.

A \textbf{crossed module morphism} from $(G_1\xrightarrow{\mu}G_0)$ to $(H_1\xrightarrow{\partial}H_0)$ is a pair of Lie group homomorphisms  $\epsilon: G_1\to H_1$ and $\rho:G_0\to H_0$, which satisfies  
\begin{eqnarray*}
	\partial\circ\epsilon=\rho\circ\mu,\qquad 	\epsilon(g_0\ac_G g_1)=\rho(g_0)\ac_H\epsilon(g_1), \qquad \forall g_0\in G_0, g_1\in G_1.
\end{eqnarray*} 
If $\epsilon$ and $\rho$ are both isomorphisms, then $(\epsilon,\rho)$ is called an \textbf{isomorphism}. Denote by $\Aut(H_1,H_0,\partial)$ the automorphism group of $(H_1\xrightarrow{\partial}H_0)$.

For a crossed module $(H_1\xrightarrow{\partial}H_0)$, let  $\Der(H_0,H_1)$ be the set of all derivations from $H_0$ to $H_1$, i.e., smooth maps $\gamma:H_0\to H_1$ satisfying
\begin{eqnarray}\label{der}
	\gamma(h_0\cdot h_0')=\gamma(h_0)\cdot h_0\ac\gamma(h_0'),\qquad \forall h_0,h_0'\in H_0.
\end{eqnarray} 
It is a semigroup with identity $\gamma_e(h_0)=e_{H_1}$  for the multiplication given by
\begin{eqnarray}\label{m of der}
	(\gamma_1\star\gamma_2)(h_0)=\gamma_1(\partial\gamma_2(h_0)\cdot h_0)\cdot\gamma_2(h_0).
\end{eqnarray}
Write $D(H_0,H_1)$ for the group of all units in $\Der(H_0,H_1)$.    We then obtain a group homomorphism
\begin{eqnarray}\label{Delta}
	\Delta:D(H_0,H_1)\to\Aut(H_1,H_0,\partial),\qquad 	\Delta(\gamma)(h_1,h_0)=(\gamma(\partial h_1)\cdot h_1, \partial\gamma(h_0)\cdot h_0),
\end{eqnarray}
and a group action of $\Aut(H_1,H_0,\partial)$ on $D(H_0,H_1)$ by automorphisms, defined by 
\begin{eqnarray}\label{ac of actor}
	(\epsilon,\rho)\ac_\mathcal{A}\gamma=\epsilon\circ\gamma\circ\rho^{-1}.
\end{eqnarray}
Then $\big(D(H_0,H_1)\xrightarrow{\Delta}\Aut(H_1,H_0,\partial)\big)$ is a crossed module, called the \textbf{actor crossed module} of $(H_1\xrightarrow{\partial}H_0)$, and denoted by $\mathcal{A}(H_1,H_0,\partial)$.

\begin{defi}[\cite{N}]
An \textbf{action} of a crossed module $(G_1\xrightarrow{\mu}G_0)$ on another crossed module $(H_1\xrightarrow{\partial}H_0)$ by automorphisms is a crossed module morphism $(\alpha,\beta):$ 
	\begin{eqnarray*}
		\begin{CD}
			G_1 @>\alpha>> D(H_0,H_1)\\
			@V\mu VV @VV\Delta V\\
			G_0 @>\beta>> \Aut(H_1,H_0,\partial).
		\end{CD}
	\end{eqnarray*}
\end{defi}
Denote by $\beta_1:G_0\to\Aut(H_1)$ and $\beta_0:G_0\to\Aut(H_0)$ the two components of $\beta$.

\begin{defi}\label{RBO of 2Gp}
     Let $(\alpha,\beta)$ be an action of a crossed module $(G_1\xrightarrow{\mu}G_0)$ on another crossed module $(H_1\xrightarrow{\partial}H_0)$. A \textbf{relative Rota-Baxter operator} on $(G_1\xrightarrow{\mu}G_0)$ with respect to the action $(\alpha,\beta)$ is a pair of smooth maps $(\B_1,\B_0)$ with $\B_i:H_i\to G_i$ such that $\mu\circ\B_1=\B_0\circ\partial$ and 
    \begin{itemize}
        \item[\rm(i)] $\B_1$ is a relative Rota-Baxter operator on the Lie group $G_1$ with respect to the action $\beta_1\circ\mu:G_1\to\Aut(H_1)$,
        \item [\rm(ii)]$\B_0$ is a relative Rota-Baxter operator on the Lie group $G_0$ with respect to the action $\beta_0:G_0\to\Aut(H_0)$,
        \item[\rm(iii)] $\B_0h_0\ac_G\B_1h_1=\B_1\big(h_0\ac_H\big(\beta_1(\B_0h_0)h_1\cdot_{H_1}\alpha(\B_0h_0\ac_G\B_1h_1)h_0^{-1}\big)\big)$, for all  $h_0\in H_0,h_1\in H_1$.
    \end{itemize}
\end{defi}

\begin{ex}
 A crossed module $(G_1\xrightarrow{\mu}G_0)$ acts on itself by the adjoint action \cite{N}, where the action map $(\alpha,\beta)$ from $(G_1\xrightarrow{\mu}G_0)$ to $\mathcal{A}(G_1,G_0,\mu)$ is defined by
\begin{eqnarray*}
	\alpha(g_1)g_0=g_1\cdot g_0\ac g_1^{-1},\qquad \beta(g_0)(g_1,g_0')=(g_0\ac g_1,g_0\cdot g_0'\cdot g_0^{-1}),\quad \forall g_0,g'_0\in G_0, g_1\in G_1.
\end{eqnarray*} 
Taking  $(H_1\xrightarrow{\partial}H_0)=(G_1\xrightarrow{\mu}G_0)$ and the action $(\alpha,\beta)$ as the adjoint action of $(G_1\xrightarrow{\mu}G_0)$ on itself in Definition \ref{RBO of 2Gp}, we recover  a \textbf{Rota-Baxter operator} $(\B_1, \B_0)$ on the crossed module $(G_1\xrightarrow{\mu}G_0)$ introduced in \cite{Jiang}. Namely, $\B_1$ and $\B_0$ are Rota-Baxter operators on $G_1$ and $G_0$, respectively, satisfying  $\mu\circ\B_1=\B_0\circ\mu$ and 
	\begin{eqnarray*}
		\B_0g_0\ac\B_1g_1=\B_1\big((g_0\cdot\B_0g_0)\ac(g_1\cdot\B_1g_1)\cdot\B_0g_0\ac(\B_1g_1)^{-1}\big), \qquad \forall g_0\in G_0,g_1\in G_1.
	\end{eqnarray*}
\end{ex}

We prove the equivalence of relative Rota-Baxter operators on Lie 2-groups and on crossed modules by establishing the following two propositions in sequence.
\begin{pro}\label{pro1}
If $(\B_1,\B_0)$ is a relative Rota-Baxter operator on a crossed module $(G_1\xrightarrow{\mu}G_0)$ with respect to an action $(\alpha,\beta)$ on $(H_1\xrightarrow{\partial}H_0)$, then there is a relative Rota-Baxter operator $(\B,\B_0)$ on the Lie 2-group $G_1\rtimes G_0\rightrightarrows G_0$ with respect to an action $(\phi^{\alpha \beta},\phi_0^{\alpha \beta})$ on the Lie 2-group  $H_1\rtimes H_0\rightrightarrows H_0$, where 
\begin{eqnarray}
	\label{tildeB}
	\B(h_1,h_0)&=&\big(\B_0h_0\ac_G\B_1(\beta_1(\B_0h_0)^{-1}(h_0^{-1}\ac_H h_1)),\B_0h_0\big),\\
	\nonumber
	\phi^{\alpha \beta}(g_1,g_0)(h_1,h_0)&=&(\beta_1(\mu g_1\cdot g_0)h_1\cdot\alpha(g_1)\beta_0(g_0)h_0,\beta_0(g_0)h_0),\\
	\nonumber
	\phi^{\alpha \beta}_0(g_0)h_0&=&\beta_0(g_0)h_0,\qquad \forall h_i\in H_i, g_i\in G_i,i=0,1.
\end{eqnarray}
\end{pro}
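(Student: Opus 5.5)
We check the three ingredients of the statement in order. First, $(\phi^{\alpha\beta},\phi_0^{\alpha\beta})$ must be a Lie 2-group action of $G_1\rtimes G_0\rightrightarrows G_0$ on $H_1\rtimes H_0\rightrightarrows H_0$. Second, $\B$ and $\B_0$ must be relative Rota-Baxter operators on the Lie groups $G_1\rtimes G_0$ and $G_0$. Third, $(\B,\B_0)$ must be a Lie groupoid morphism.

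\emph{The action.} We use that $(\alpha,\beta)$ is a crossed module morphism into $\mathcal{A}(H_1,H_0,\partial)$. Concretely, $\alpha(g_1)\in D(H_0,H_1)$ is a derivation satisfying \eqref{der}, and $\Delta\circ\alpha=\beta\circ\mu$. Each $\beta(g_0)$ is a crossed module automorphism. Finally, $\alpha(g_0\ac g_1)=\beta(g_0)\ac_{\mathcal A}\alpha(g_1)=\beta_1(g_0)\circ\alpha(g_1)\circ\beta_0(g_0)^{-1}$.

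With these facts we verify four things for $\phi^{\alpha\beta}(g_1,g_0)$:
\begin{itemize}
\item it is a group automorphism of $H_1\rtimes H_0$, using the derivation property and the equivariance $\beta_1(g_0)(h_0\ac h_1)=\beta_0(g_0)h_0\ac\beta_1(g_0)h_1$;
\item it is a homomorphism in $(g_1,g_0)$, using the multiplication \eqref{m of der} of derivations;
\item it intertwines $s$ and $t$, where $t$ uses $\partial\alpha(g_1)(h_0)\cdot h_0=\beta_0(\mu g_1)h_0$ from $\Delta\circ\alpha=\beta\circ\mu$;
\item it respects $*$.
\end{itemize}
Compatibility with $*$ reduces to checking \eqref{phi}. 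Since both groupoid multiplications just multiply first components, this is a direct computation.

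\emph{The Rota-Baxter identities.} Condition (ii) of Definition \ref{RBO of 2Gp} handles $\B_0$ directly. For $\B$, the cleanest route is Theorem \ref{graph}/Proposition \ref{gr as sub}: show that $\Gr(\B)$ is a subgroup of $(H_1\rtimes H_0)\rtimes(G_1\rtimes G_0)$. Alternatively, we expand $\B(h_1,h_0)\cdot\B(h_1',h_0')$ and $\B\big((h_1,h_0)\cdot\phi^{\alpha\beta}(\B(h_1,h_0))(h_1',h_0')\big)$ and compare. The second components agree by the relative Rota-Baxter property of $\B_0$.

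For the first components it helps to factor $\B$. Write $(h_1,h_0)=(h_0\ac k_1, e)\cdot(e,h_0)$ with $k_1=h_0^{-1}\ac h_1$, i.e. in $H_1\rtimes H_0$ we have $(h_1,h_0)=(e,h_0)\cdot(h_0^{-1}\ac h_1,e)$. Then $\B(h_1,h_0)=(e,\B_0h_0)\cdot(\B_1(\beta_1(\B_0h_0)^{-1}k_1),e)$, which is $\B$ applied to the two factors using the Rota-Baxter rule. So we reduce to checking the Rota-Baxter identity on pairs of generators of the forms $(h_1,e)$ and $(e,h_0)$, in all four orders:
\begin{itemize}
\item pairs of type $(h_1,e),(h_1',e)$ give condition (i), i.e. that $\B_1$ is Rota-Baxter for $\beta_1\circ\mu$;
\item pairs of type $(e,h_0),(e,h_0')$ give (ii);
\item the mixed pair $(e,h_0)$ followed by $(h_1,e)$ holds by the definition of $\B$;
\item the mixed pair $(h_1,e)$ followed by $(e,h_0)$ is exactly condition (iii), where the derivation term $\alpha(\B_0h_0\ac\B_1h_1)h_0^{-1}$ appears when $\phi^{\alpha\beta}$ of a $G_1$-element acts on $(e,h_0)$.
\end{itemize}
The generator reduction is legitimate because the set of pairs satisfying the identity generates the graph as a subgroup. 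We therefore phrase this via graphs and closure under products.

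\emph{Groupoid morphism.} Here $s\B(h_1,h_0)=\B_0h_0=\B_0 s(h_1,h_0)$. Also $t\B(h_1,h_0)=\mu(\cdots)\cdot\B_0h_0$, which we reduce to $\B_0(\partial h_1\cdot h_0)$ using $\mu\circ\B_1=\B_0\circ\partial$, the crossed module axiom for $\mu$, and the Rota-Baxter identity for $\B_0$. Multiplicativity for $*$ follows similarly, using the Rota-Baxter identity of $\B_1$ together with $\beta_1\circ\mu$ acting compatibly with $\partial$.

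The main obstacle is the mixed generator case that yields condition (iii), and keeping the $\alpha$-derivation terms straight. We would handle it first, working with explicit $G_1\rtimes G_0$ multiplication.
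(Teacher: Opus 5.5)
Your proposal is correct and follows essentially the same route as the paper. You check that $(\phi^{\alpha\beta},\phi_0^{\alpha\beta})$ is a Lie 2-group action using the derivation and actor identities. You then reduce the Rota-Baxter identity for $\B$ to the two mixed terms: the order $(e,h_0),(h_1,e)$ holds by the definition of $\B$, and the order $(h_1,e),(e,h_0)$ is condition (iii) after substituting $h_0\mapsto\beta_0(\B_0h_0)^{-1}h_0^{-1}$. Finally you verify that $(\B,\B_0)$ respects $s$, $t$ and $*$.

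The one place you go beyond the paper, which simply asserts the reduction, is in justifying it. The precise argument is not ``generation'', which presupposes the graph is already a subgroup. Rather, $\Gr(\B)=\Gr(\B_0)\cdot\Gr(\B_1)$ by definition, and (iii) gives $\Gr(\B_1)\cdot\Gr(\B_0)\subseteq\Gr(\B_0)\cdot\Gr(\B_1)$, so this product of two subgroups of $(H_1\rtimes H_0)\rtimes(G_1\rtimes G_0)$ is itself a subgroup.
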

\begin{proof}
First, we show that $(\phi^{\alpha \beta},\phi_0^{\alpha \beta})$ is a Lie 2-group action. Following from \eqref{der}, \eqref{m of der}, \eqref{ac of actor} and the fact that $(\alpha,\beta)$ is a crossed module morphism, we have $\alpha(g_0\ac g_1)=\beta_1(g_0)\alpha(g_1)\beta_0(g_0^{-1})$ and 
\begin{eqnarray}
	\label{m of alpha}
		\alpha(g_1\cdot g_1')h_0&=&\alpha(g_1)(\partial\alpha(g_1')h_0\cdot h_0)\cdot\alpha(g_1')h_0\\ \nonumber
		&=&\alpha(g_1)\partial\alpha(g_1')h_0\cdot\alpha(g_1')h_0\cdot\alpha(g_1)h_0\\ \nonumber
		&=&\beta_1\mu(g_1)\alpha(g_1')h_0\cdot\alpha(g_1)h_0.
	\end{eqnarray}
Moreover, we obtain
	\begin{eqnarray*}
		&&\phi^{\alpha \beta}\big((g_1,g_0)\cdot(g_1',g_0')\big)(h_1,h_0)\\
		&=&\big(\beta_1(\mu g_1\cdot g_0\cdot\mu g_1'\cdot g_0')h_1\cdot\alpha(g_1\cdot g_0\ac g_1')\beta_0(g_0\cdot g_0')h_0,\beta_0(g_0\cdot g_0')h_0\big)\\
		&=&\big(\beta_1(\mu g_1\cdot g_0\cdot\mu g_1'\cdot g_0')h_1\cdot\beta_1(\mu g_1\cdot g_0)\alpha(g_1')\beta_0(g_0')h_0\cdot\alpha(g_1)\beta_0(g_0\cdot g_0')h_0,\beta_0(g_0\cdot g_0')h_0\big)\\
		&=&\phi^{\alpha \beta}(g_1,g_0)\big(\phi^{\alpha \beta}(g_1',g_0')(h_1,h_0)\big),
	\end{eqnarray*}
	which implies that $\phi^{\alpha \beta}$ is a Lie group action. Using again the fact that $(\alpha, \beta)$ is a crossed module morphism,  we have 	\begin{eqnarray*}
		\beta_1\mu(g_1)(h_0\ac h_1)&=&\beta_0\mu(g_1)h_0\ac\beta_1\mu(g_1)h_1\\&=&(\partial\alpha(g_1)h_0\cdot h_0)\ac\beta_1\mu(g_1)h_1\\&=&\alpha(g_1)h_0\cdot h_0\ac\beta_1\mu(g_1)h_1\cdot(\alpha(g_1)h_0)^{-1}.
	\end{eqnarray*}
Therefore,  we find
	\begin{eqnarray*}
		&&\phi^{\alpha \beta}(g_1,g_0)\big((h_1,h_0)\cdot(h_1',h_0')\big)\\
		&=&\big(\beta_1(\mu g_1\cdot g_0)(h_1\cdot h_0\ac h_1')\cdot\alpha(g_1)\beta_0(g_0)(h_0\cdot h_0'),\beta_0(g_0)(h_0\cdot h_0')\big)\\
		&=&\big(\beta_1(\mu g_1\cdot g_0)h_1\cdot \beta_1(\mu g_1\cdot g_0)(h_0\ac h_1')\cdot\alpha(g_1)\beta_0(g_0)h_0\cdot \beta_0(g_0)h_0\ac\alpha(g_1)\beta_0(g_0)h_0',\beta_0(g_0)(h_0\cdot h_0')\big)\\
		&=&\big(\beta_1(\mu g_1\cdot g_0)h_1\cdot\alpha(g_1)\beta_0(g_0)h_0\cdot\beta_0(g_0)h_0\ac(\beta_1(\mu g_1\cdot g_0)h_1'\cdot\alpha(g_1)\beta_0(g_0)h_0'),\beta_0(g_0)(h_0\cdot h_0')\big)\\
		&=&\phi^{\alpha \beta}(g_1,g_0)(h_1,h_0)\cdot\phi^{\alpha \beta}(g_1,g_0)(h_1',h_0').
	\end{eqnarray*}
	 Thus $\phi^{\alpha \beta}$ is a Lie group action by automorphisms.
	 
	  It is obvious that $\phi_0^{\alpha \beta}=\beta_0$ is a Lie group action by automorphisms. To verify that $(\phi^{\alpha \beta},\phi_0^{\alpha \beta})$ is a Lie groupoid morphism, we only check that $\phi^{\alpha \beta}$ preserves the groupoid multiplication. For $\big((g_1,g_0),(g_1',g_0')\big)\in(G_1\rtimes G_0)^{(2)}, \big((h_1,h_0),(h_1',h_0')\big)\in(H_1\rtimes H_0)^{(2)}$, we obtain
	 \begin{eqnarray*}
	 	&&\phi^{\alpha \beta}\big((g_1,g_0)*(g_1',g_0')\big)\big((h_1,h_0)*(h_1',h_0')\big)\\
	 	&=&\phi^{\alpha \beta}(g_1\cdot g_1',g_0')(h_1\cdot h_1',h_0')\\
	 	&=&\big(\beta_1(\mu g_1\cdot g_0)(h_1\cdot h_1')\cdot\alpha(g_1\cdot g_1')\beta_0(g_0')h_0',\beta_0(g_0')h_0'\big)\\
	 	&=&\big(\beta_1(\mu g_1\cdot g_0)(h_1\cdot h_1')\cdot\alpha(g_1)\beta_0(g_0)h_0'\cdot\alpha(g_1')\beta_0(g_0')h_0',\beta_0(g_0')h_0'\big)\\
	 	&=&\big(\beta_1(\mu g_1\cdot g_0)h_1\cdot\alpha(g_1)\partial\beta_1(g_0)h_1'\cdot\beta_1(g_0)h_1'\cdot\alpha(g_1)\beta_0(g_0)h_0'\cdot\alpha(g_1')\beta_0(g_0')h_0',\beta_0(g_0')h_0'\big)\\
	 	&=&\big(\beta_1(\mu g_1\cdot g_0)h_1\cdot\alpha(g_1)\beta_0(g_0)h_0\cdot\beta_1(g_0)h_1'\cdot\alpha(g_1')\beta_0(g_0')h_0',\beta_0(g_0')h_0'\big)\\
	 	&=&\phi^{\alpha \beta}(g_1,g_0)(h_1,h_0)*\phi^{\alpha \beta}(g_1',g_0')(h_1',h_0'),
	 \end{eqnarray*}
	  where we used \eqref{m of alpha} and the fact that $\alpha(g_1)$ is a derivation. Hence $(\phi^{\alpha \beta},\phi_0^{\alpha \beta})$ is a Lie 2-group action.

Then we verify that $\B:H_1\rtimes H_0\to G_1\rtimes G_0$ is a relative Rota-Baxter operator with respect to the Lie group action $\phi^{\alpha \beta}$. By definition, we see that $\B|_{H_i}=\B_i$. Therefore we only need to show the relation \eqref{rrbc} on crossed terms. In fact, we have
  \begin{eqnarray*}
  	\B(e,h_0)\cdot\B(h_1,e)=(\B_0h_0\ac\B_1h_1,\B_0h_0)=\B(h_0\ac\beta_1(\B_0h_0)h_1,h_0)=\B\big((e,h_0)\cdot\phi^{\alpha \beta}(\B(e,h_0))(h_1,e)\big).
   \end{eqnarray*}
  Using $\rm(iii)$ in Definition \ref{RBO of 2Gp} and the fact that $(\B_0h_0)^{-1}=\B_0\beta_0(\B_0h_0)^{-1}h_0^{-1}$, we have
  \begin{eqnarray*}
  	&&\B(h_1,e)\cdot\B(e,h_0)\\
  	&=&(e,\B_0h_0)\cdot(\B_0\beta_0(\B_0h_0)^{-1}h_0^{-1}\ac\B_1h_1,e)\\
  	&=&(e,\B_0h_0)\cdot\big(\B_1(\beta_0(\B_0h_0)^{-1}h_0^{-1}\ac(\beta_1(\B_0h_0)^{-1}h_1\cdot\alpha((\B_0h_0)^{-1}\ac\B_1h_1)\beta_0(\B_0h_0)^{-1}h_0),e\big)\\
  	&=&(e,\B_0h_0)\cdot(\B_1\beta_1(\B_0h_0)^{-1}(h_0^{-1}\ac(h_1\cdot\alpha(\B_1h_1)h_0)),e)\\
  	&=&\B\big((h_1,e)\cdot\phi^{\alpha \beta}(\B(h_1,e))(e,h_0)\big).	
  \end{eqnarray*}
    This proves that $\B$ is a relative Rota-Baxter operator. It suffices to show that $(\B,\B_0)$ is a Lie groupoid morphism. We only verify that $\B$ preserves the groupoid multiplication.
    For $\big((h_1, h_0),(h_1',h_0')\big)\in(H_1\rtimes H_0)^{(2)}$, we obtain
    \begin{eqnarray*}
    	\B_0h_0=\B_0(\partial h_1'\cdot h_0')=t\B(h_1',h_0')=\B_0h_0'\cdot\mu\B_1(\beta_1(\B_0h_0')^{-1}(h_0'^{-1}\ac h_1')).
    \end{eqnarray*}
    Write $\B_1(\beta_1(\B_0h_0')^{-1}(h_0'^{-1}\ac h_1'))$ for $\wp$ in short, and we deduce
    \begin{eqnarray*}
    	&&\B(h_1,h_0)*\B(h_1',h_0')\\
    	&=&\big(\B_0h_0\ac\B_1\beta_1(\B_0h_0)^{-1}(h_0^{-1}\ac h_1)\cdot\B_0h_0'\ac\B_1\beta_1(\B_0h_0')^{-1}(h_0'^{-1}\ac h_1'),\B_0h_0'\big)\\
    	&=&\big(\B_0h_0'\ac(\mu\wp\ac\B_1\beta_1(\mu \wp^{-1}\cdot(\B_0h_0')^{-1})(h_0'^{-1}\ac(\partial h_1'^{-1}\ac h_1))\cdot \wp),\B_0h_0'\big)\\
    	&=&\big(\B_0h_0'\ac(\wp\cdot\B_1\beta_1(\mu \wp^{-1}\cdot(\B_0h_0')^{-1})(h_0'^{-1}\ac( h_1'^{-1}\cdot h_1\cdot h_1'))),\B_0h_0'\big)\\
    	&=&\big(\B_0h_0'\ac\B_1\beta_1(\B_0h_0')^{-1}(h_0'^{-1}\ac(h_1\cdot h_1')),\B_0h_0'\big)\\
    	&=&\B((h_1,h_0)*(h_1',h_0')),
    \end{eqnarray*}
    where in the last second equation, we expanded $\wp$ and used the relation \eqref{rrbc} of $\B_1$.
    \end{proof}

\begin{pro}\label{pro2}
If $(\B,\B_0)$ is a relative Rota-Baxter operator on a Lie 2-group  $P\rightrightarrows P_0$ with respect to an action $(\phi,\phi_0)$ on $Q\rightrightarrows Q_0$, then $(\B|_{\ker s_Q},\B_0)$ is a relative Rota-Baxter operator on the associated crossed module $(\ker s_P\xrightarrow{t_P} P_0)$ with respect to an action $(\alpha,\beta)$ on $(\ker s_Q\xrightarrow{t_Q} Q_0)$:
\begin{eqnarray*}
			\begin{CD}
				\ker s_P @>\alpha>> D(Q_0,\ker s_Q)\\
				@V t_P VV @VV\Delta V\\
				P_0 @>\beta>> \Aut(\ker s_Q,Q_0,t_Q),
			\end{CD}
\end{eqnarray*}
which is defined by 
	\begin{eqnarray*}
		\alpha(p)q_0=\phi(p)\iota q_0\cdot \iota q_0^{-1},\qquad \beta(p_0)(q,q_0)=(\phi (\iota p_0)q,\phi_0(p_0)q_0), 	\end{eqnarray*}
for all $p_0\in P_0,p\in \ker s_P,q_0\in Q_0,q\in\ker s_Q.$
\end{pro}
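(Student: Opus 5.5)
We prove Proposition \ref{pro2} in three stages: $(\alpha,\beta)$ is a crossed module action; $(\B|_{\ker s_Q},\B_0)$ is compatible with the structure maps and satisfies (i)--(ii) of Definition \ref{RBO of 2Gp}; and the crossed condition (iii) holds. Throughout, the basic tool is that $(\phi,\phi_0)$ is a groupoid morphism, so $s\phi(p)q=\phi_0(sp)sq$, $t\phi(p)q=\phi_0(tp)tq$ and $\phi(p)\iota q_0=\iota\phi_0(p_0)q_0$ when $p=\iota p_0$. Equation \eqref{phi} is used together with these.

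\textbf{The action is well defined.} For $p\in\ker s_P$ we have
\[
s(\phi(p)\iota q_0\cdot\iota q_0^{-1})=\phi_0(e)q_0\cdot q_0^{-1}=e,
\]
so $\alpha(p)q_0\in\ker s_Q$. The map $\beta(p_0)$ preserves $\ker s_Q$ and commutes with $t_Q$, and it respects the conjugation action $q_0\ac q=\iota q_0\, q\,\iota q_0^{-1}$ because $\phi(\iota p_0)$ is a group automorphism restricting to $\phi_0(p_0)$ on units. Hence $\beta$ is a homomorphism into $\Aut(\ker s_Q,Q_0,t_Q)$.

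\textbf{$\alpha(p)$ is an invertible derivation.} The derivation identity \eqref{der} follows by expanding $\phi(p)\iota(q_0q_0')$ multiplicatively. Invertibility is most cleanly obtained by identifying $\Delta(\alpha(p))$ with $\beta(t_Pp)$. Concretely, $t_Q\alpha(p)q_0\cdot q_0=\phi_0(tp)q_0$. For $q\in\ker s_Q$, the element $\alpha(p)(t q)\cdot q$ should equal $\phi(\iota tp)q$; to see this, write $q=q*\iota(sq)$-type decompositions and apply \eqref{phi} to the composable pair $(p,\iota sp)=(p,e)$ acting on $(q,e)$. Having $\Delta\circ\alpha=\beta\circ t_P$ also shows that $\alpha(p)$ is a unit in $\Der(Q_0,\ker s_Q)$.

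For the homomorphism property of $\alpha$ with respect to $\star$, and for the equivariance $\alpha(p_0\ac p)=\beta(p_0)\ac_{\mathcal A}\alpha(p)$, one expands directly using $\phi(pp')=\phi(p)\phi(p')$.

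\textbf{The operator satisfies (i) and (ii).} $\B$ maps $\ker s_Q$ into $\ker s_P$ since $s\B=\B_0 s$. The relation $t_P\B=\B_0 t_Q$ is part of the groupoid morphism condition. Condition (ii) is immediate. For (i), note that $\beta_1\circ t_P(p)=\phi(\iota tp)$, which on $\ker s_Q$ must be compared with $\phi(p)$. Using \eqref{phi}, for $p\in\ker s_P$ and $q\in\ker s_Q$ we have $\phi(p)q=\phi(p*e)(\iota tq*q)$. Rewriting $p=\iota(tp)\cdot(\iota(tp)^{-1}p)$, the relation $\phi(p)q\cdot\phi(p)\iota q_0^{-1}\ldots$ reduces the discrepancy to a factor $\alpha(p)(tq)$. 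I expect that Definition \ref{RBO of 2Gp}(i) is designed so that the restricted relative Rota-Baxter identity for $\B$ becomes exactly the one for $\beta_1\circ t_P$ once the correction terms are absorbed into the descendant composition.

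\textbf{The crossed condition (iii) is the main obstacle.} Rather than verify it by hand, I would use Proposition \ref{pro1} in reverse. Identify $Q\cong\ker s_Q\rtimes Q_0$ via $q\mapsto(q\cdot\iota(sq)^{-1},sq)$, and similarly for $P$. Check that under these isomorphisms $(\phi,\phi_0)$ becomes $(\phi^{\alpha\beta},\phi_0^{\alpha\beta})$, which is a direct computation with the formulas for $\alpha$ and $\beta$. Then (iii) is precisely the relative Rota-Baxter identity \eqref{rrbc} for $\B$ evaluated on the crossed pair $(h_1,e)\cdot\phi(\B(h_1,e))(e,h_0)$. That identity holds because $\B$ is a relative Rota-Baxter operator on $P$ and $\B(e,h_0)=\iota\B_0h_0$; the latter follows from the groupoid morphism property $\B\circ\iota=\iota\circ\B_0$.

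Reversing the last computation in the proof of Proposition \ref{pro1}, which used (iii) to deduce \eqref{rrbc} on mixed terms, then yields (iii). The delicate part is tracking the identification of $\B$ with formula \eqref{tildeB}. Here $\B(q)=\B(q\cdot\iota(sq)^{-1}\cdot\iota sq)$ is expanded via the second form of \eqref{rrbc}.
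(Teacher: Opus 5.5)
\textbf{Gap: condition (i) and the missing identity.} Your proposal does not prove condition (i) of Definition \ref{RBO of 2Gp}, and the reason is a missing identity that the rest of the argument also depends on. You suggest that $\phi(p)q$ and $\phi(\iota t_Pp)q$ differ by a correction factor $\alpha(p)(tq)$ that is ``absorbed'' somewhere, and you end with an expectation rather than an argument. In fact there is no discrepancy. For $p\in\ker s_P$ and $q\in\ker s_Q$, the pairs $(\iota tp,p)\in P^{(2)}$ and $(q,e)\in Q^{(2)}$ are composable, so \eqref{phi} gives
\[
\phi(p)q=\phi(\iota tp*p)(q*e)=\phi(\iota tp)q*\phi(p)e=\phi(\iota tp)q .
\]
With this, (i) takes one line: $\B$ maps $\ker s_Q$ into $\ker s_P$, and $\B q\cdot\B q'=\B(q\cdot\phi(\B q)q')=\B(q\cdot\beta_1(t_P\B q)q')$. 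The same identity is needed for $\Delta\circ\alpha=\beta\circ t_P$. The composable pairs you cite, $(p,e)$ acting on $(q,e)$, only yield a tautology. What works is to use $(a\cdot \iota tb)*b=a\cdot b$ for $a,b\in\ker s_Q$ together with \eqref{phi} for $(p,e)$ acting on $(\iota tq,q)$. This gives $\alpha(p)(tq)\cdot q=\phi(p)\iota tq*q=\phi(p)q$, and the identity above then turns this into $\phi(\iota tp)q$. Similarly, $\alpha(pp')=\alpha(p)\star\alpha(p')$ does not follow from $\phi(pp')=\phi(p)\phi(p')$ alone. It needs the interchange law, for instance $(p\cdot tp')*p'=p\cdot p'$ combined with \eqref{phi}.

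\textbf{Condition (iii): your route versus the paper's.} Transporting to $\ker s_Q\rtimes Q_0$ and reading the computation of Proposition \ref{pro1} backwards is viable, but only after the identity above is available. The transport of $(\phi,\phi_0)$ to $(\phi^{\alpha\beta},\phi^{\alpha\beta}_0)$ uses exactly $\phi(\iota tp)q=\phi(p)q$, as in the proof of Theorem \ref{bj}. Two further points must also be handled.
\begin{enumerate}
\item Formula \eqref{tildeB} comes from writing $k\cdot\iota k_0=\iota k_0\cdot(k_0^{-1}\ac k)$ and applying the second form of \eqref{rrbc} with the unit first. Your order, with $k$ first, produces the non-unit $\phi(\B k)^{-1}\iota k_0$ and loops back to a mixed element.
\item The backward computation yields (iii) only at arguments of the form $h_0^\dagger=\beta_0(\B_0h_0)^{-1}h_0^{-1}$. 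You therefore need that $\dagger$ is a bijection; it is, being inversion in the descendant group.
\end{enumerate}
The paper avoids all of this with a direct computation. Since $\iota(\B_0q_0)^{-1}=\iota\B_0(q_0^\dagger)=\B\iota(q_0^\dagger)$ with $q_0^\dagger=\phi_0(\B_0q_0)^{-1}q_0^{-1}$, one has
\[
\B_0q_0\ac\B q=\B\iota q_0\cdot\B q\cdot\B\iota(q_0^\dagger)=\B\big(\iota q_0\cdot\phi(\B\iota q_0)q\cdot\phi(\B\iota q_0\cdot\B q)\iota(q_0^\dagger)\big),
\]
and regrouping the argument gives $\B\big(q_0\ac(\beta_1(\B_0q_0)q\cdot\alpha(\B_0q_0\ac\B q)q_0^{-1})\big)$. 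This is shorter and needs no identification with the semi-direct product model.
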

\begin{proof}
	 To simplify the notation, we omit $\iota$ in the following proof. First, we show $\alpha$ and $\beta$ are well-defined. We only check that $\alpha(p)$ is a derivation from $Q_0$ to $\ker s_Q$. By definition, 
	\begin{eqnarray*}
		\alpha(p)(q_0\cdot q_0')&=&\phi(p)(q_0\cdot q_0')\cdot(q_0\cdot q_0')^{-1}=\phi(p)q_0\cdot q_0^{-1}\cdot q_0\ac(\phi(p)\ q_0'\cdot q_0'^{-1})\\
		&=&\alpha(p)q_0\cdot q_0\ac \alpha(p)q_0'.
	\end{eqnarray*}
It is clear that $\beta$ is a Lie group homomorphism. Then we verify that $\alpha$ is a Lie group homomorphism. Notice that for any $q,q'\in\ker s_Q$, we have
\begin{eqnarray}\label{gpm-gpdm}
	(q\cdot t q')* q'=q\cdot(tq'* q')=q\cdot q'.
\end{eqnarray}
Then by \eqref{phi}, \eqref{m of der} and \eqref{gpm-gpdm}, we find 
	\begin{eqnarray*}
		(\alpha(p)\star \alpha(p'))q_0&=&
		\big(\phi(p\cdot tp')q_0\cdot q_0^{-1}\big)*(\phi(p') q_0\cdot q_0^{-1})\\
		&=&(\phi(p\cdot tp')q_0*\phi(p')q_0)\cdot(q_0^{-1}* q_0^{-1})\\
		&=&\phi(p\cdot p')q_0\cdot q_0^{-1}=\alpha(p\cdot p')q_0.
	\end{eqnarray*}
	 Now we check that $\Delta\circ \alpha=\beta\circ t_P$. In fact, by \eqref{gpm-gpdm}, we have
	\begin{eqnarray*}
		\Delta(\alpha(p))(q,q_0)&=&(\phi(p) tq\cdot tq^{-1}\cdot q,\phi_0(tp)q_0)=(\phi(p)tq* q,\phi_0(tp)q_0)\\
		&=&(\phi(tp* p)(q* e),\phi_0(tp)q_0)=(\phi(tp)q,\phi_0(tp)q_0)\\&=&\beta(tp)(q,q_0).
	\end{eqnarray*}
	A direct verification shows that $\alpha(p_0\ac p)=\beta(p_0)\ac_\mathcal{A} \alpha(p)$.
	Therefore, $(\alpha,\beta)$ is an action of crossed modules.
 It follows from $\phi(p)q=\phi(\iota tp)q$ for $ p\in\ker s_P,q\in\ker s_Q$ that $\B|_{\ker s_Q}:\ker s_Q\to\ker s_P$ is a relative Rota-Baxter operator with respect to the Lie group action $\phi\iota\circ t_P$. It remains to see that $(\B|_{\ker s_Q},\B_0)$ satisfies the relation $\rm(iii)$ in Definition \ref{RBO of 2Gp}. For all $q_0\in Q_0,q\in\ker s_Q$, we have
\begin{eqnarray*}
	\B_0q_0\ac\B q&=&\iota\B_0q_0\cdot\B q\cdot\iota(\B_0q_0)^{-1}\\&=&\B\iota q_0\cdot\B q\cdot\B\iota(\phi_0(\B_0q_0)^{-1}q_0^{-1})\\
	&=&\B(\iota q_0\cdot\phi(\B\iota q_0)q\cdot\phi(\B\iota q_0\cdot\B q)\iota(\phi_0(\B_0q_0)^{-1}q_0^{-1})\\
	&=&\B\big(q_0\ac(\phi\iota(\B_0q_0)q\cdot\phi(\B_0q_0\ac\B q)\iota q_0^{-1}\cdot\iota q_0)\big)\\&=&\B\big(q_0\ac( (\beta_1(\B_0q_0)q)\cdot \alpha(\B_0q_0\ac\B q)q_0^{-1})\big),
\end{eqnarray*}
where in the last equation, $\beta_1$ denotes the first component of $\beta$. This completes the proof.
\end{proof}

Now we are in the position to present the main result of this subsection. 
\begin{thm}\label{bj}
   There is a bijection between relative Rota-Baxter operators on Lie 2-groups and relative Rota-Baxter operators on Lie group crossed modules.
\end{thm}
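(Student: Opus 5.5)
The bijection will be assembled from Proposition \ref{pro1} and Proposition \ref{pro2}, by showing that the two constructions are mutually inverse up to the standard identifications between Lie 2-groups and crossed modules. Concretely, fix a Lie 2-group action $(\phi,\phi_0)$ of $P\rightrightarrows P_0$ on $Q\rightrightarrows Q_0$, and let $(\alpha,\beta)$ be the crossed module action produced in Proposition \ref{pro2}. The claim is then that the assignments
\[
(\B,\B_0)\mapsto(\B|_{\ker s_Q},\B_0),\qquad (\B_1,\B_0)\mapsto(\B,\B_0)\ \text{as in \eqref{tildeB}}
\]
are mutually inverse. For this to make sense, I first have to check that the actions themselves correspond. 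Starting from $(\alpha,\beta)$ on $(H_1\xrightarrow{\partial}H_0)$, passing to $(\phi^{\alpha\beta},\phi_0^{\alpha\beta})$ and then back through Proposition \ref{pro2} should return $(\alpha,\beta)$. Conversely, starting from $(\phi,\phi_0)$, the action $\phi^{\alpha\beta}$ should be transported to $\phi$ under the isomorphisms $P\cong\ker s_P\rtimes P_0$, $p\mapsto (p\cdot\iota(sp)^{-1},sp)$, and similarly for $Q$.

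\textbf{Actions.} In one direction, restrict $\phi^{\alpha\beta}$ to $\ker s=\{(h_1,e)\}$ and to units $\{(e,h_0)\}$. For $g_1\in G_1$ we get $\phi^{\alpha\beta}(g_1,e)(e,h_0)\cdot(e,h_0)^{-1}=(\alpha(g_1)h_0,e)$, which recovers $\alpha$. Also $\phi^{\alpha\beta}(e,g_0)(h_1,h_0')=(\beta_1(g_0)h_1,\beta_0(g_0)h_0')$, which recovers $\beta$. In the other direction, write $p=p_1\cdot\iota p_0$ and $q=q_1\cdot\iota q_0$, and use that $\phi$ is a homomorphism in both arguments:
\[
\phi(p_1\iota p_0)(q_1\iota q_0)=\phi(p_1)\phi(\iota p_0)q_1\cdot\alpha(p_1)(\phi_0(p_0)q_0)\cdot\iota\phi_0(p_0)q_0 .
\]
Here I use $\phi(p_1)q'=\phi(\iota t p_1)q'$ for $q'\in\ker s_Q$, as in the proof of Proposition \ref{pro2}. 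The right-hand side is exactly the formula for $\phi^{\alpha\beta}$.

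\textbf{Operators.} Starting from $(\B_1,\B_0)$, formula \eqref{tildeB} gives $\B(h_1,e)=(\B_1h_1,e)$, so restricting to $\ker s$ returns $\B_1$. Starting from $(\B,\B_0)$, I must show that $\B$ is determined by $\B|_{\ker s_Q}$ and $\B_0$ via \eqref{tildeB}. Decompose $q=\iota q_0\cdot q'$ with $q'=\iota q_0^{-1}\cdot q\cdot\iota(sq)^{-1}\cdot\iota(sq)$; more cleanly, $q=(q\cdot\iota(sq)^{-1})\cdot\iota(sq)$ in group terms. Apply the relative Rota-Baxter identity \eqref{rrbc} to the product $\iota q_0\cdot\phi(\B\iota q_0)\,x$ with $x=\phi(\B\iota q_0)^{-1}(\iota q_0^{-1}\cdot q_1\cdot\iota q_0)$, and use $\B\iota=\iota\B_0$, which holds because $(\B,\B_0)$ is a groupoid morphism. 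This gives
\[
\B(q)=\iota\B_0q_0\cdot\B(x).
\]
Under the identification this is exactly \eqref{tildeB}. The bookkeeping of the semidirect-product coordinates, i.e.\ $(h_1,h_0)\leftrightarrow h_1\cdot\iota h_0$ versus $t$-conventions, is where I expect the main obstacle. I would first fix the isomorphism $\ker s\rtimes P_0\to P$, $(p_1,p_0)\mapsto p_1\cdot\iota p_0$, and verify that it matches the source, target and groupoid product listed in Section \ref{S2}. The computation above then reduces to \eqref{rrbc} together with $\B_0(q_0)\ac\B(x)=\iota\B_0q_0\cdot\B x\cdot\iota\B_0q_0^{-1}$.

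Finally, the two correspondences are compatible with the standard bijection between Lie 2-groups and crossed modules. Since both composites are the identity, the result is a bijection. Because the group structure on $H_1\rtimes H_0$ is fixed by the crossed module, no further conditions remain to check.
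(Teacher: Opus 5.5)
Your proposal is correct and follows the paper's proof. Like the paper, you show that the constructions of Propositions \ref{pro1} and \ref{pro2} are mutually inverse under $\pi_P,\pi_Q$: you check $\pi_Q\circ\phi^{\alpha\beta}=\phi\circ(\pi_P\times\pi_Q)$ using $\phi(p)q=\phi(\iota tp)q$ on $\ker s$, and you check $\pi_P\circ\tilde{\B}=\B\circ\pi_Q$ via $\B(q_1\cdot\iota q_0)=\B\iota q_0\cdot\B\big(\phi(\B\iota q_0)^{-1}(\iota q_0^{-1}\cdot q_1\cdot\iota q_0)\big)$. You also spell out the easy reverse composite, restricting $\phi^{\alpha\beta}$ and $\B$ to $\ker s$ and the units, which the paper leaves implicit.
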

\begin{proof}
It suffices to show that the two processes in Proposition \ref{pro1} and Proposition \ref{pro2} are inverse to each other up to the Lie group isomorphisms: 
	\begin{eqnarray*}
		\ker s_P\rtimes P_0\xrightarrow {\pi_P} P, \quad (p,p_0)\mapsto p\cdot\iota p_0; \qquad 	\ker s_Q\rtimes Q_0\xrightarrow {\pi_Q} Q, \quad (q,q_0)\mapsto q\cdot\iota q_0.\end{eqnarray*}
	Explicitly, starting from a relative Rota-Baxter operator $(\B,\B_0)$ on a Lie 2-group $P\rightrightarrows P_0$ with respect to an action $(\phi,\phi_0)$ on $Q\rightrightarrows Q_0$, we have 
	\begin{eqnarray*}
		(\B,\B_0)\text{ w.r.t. } (\phi,\phi_0)\xrightarrow{\text{Prop. } \ref{pro2}}(\B|_{\ker s_Q},\B_0)\text{ w.r.t. }(\alpha,\beta)\xrightarrow{\text{Prop. } \ref{pro1}}(\tilde{\B},\B_0)\text{ w.r.t. }(\phi^{\alpha\beta},\phi_0^{\alpha\beta}).
	\end{eqnarray*}
	 We verify that $(\phi\circ(\pi_P\times\pi_Q),\phi_0)=(\pi_Q\circ\phi^{\alpha\beta},\phi_0^{\alpha\beta})$ and $
	 \pi_P\circ\tilde{\B}=\B\circ\pi_Q$. In fact, since $\phi(\iota tp)q=\phi(p)q$, we have 
	\begin{eqnarray*}
		\big(\pi_Q(\phi^{\alpha\beta}(p,p_0)(q,q_0)),\phi_0^{\alpha\beta}(p_0)q_0\big)&=&\big(\phi(\iota tp\cdot \iota p_0)q\cdot\phi(p\cdot \iota p_0)\iota q_0,\phi_0(p_0)q_0\big)\\
		&=&(\phi(p\cdot\iota p_0)(q\cdot \iota q_0),\phi_0(p_0)q_0)\\
		&=&\big(\phi(\pi_P(p,p_0))\pi_Q(q,q_0),\phi_0(p_0)q_0\big).
	\end{eqnarray*}
As $\B$ is a relative Rota-Baxter operator, we have 
    \begin{eqnarray*}
        \pi_P\tilde{\B}(q,q_0)
        &=&\B_0q_0\ac\B\phi \iota(\B_0q_0)^{-1}(q_0^{-1}\ac q)\cdot\iota\B_0q_0\\
        &=&\B\iota q_0\cdot\B\phi (\B\iota q_0)^{-1}(\iota q_0^{-1}\cdot q\cdot\iota q_0)\\
        &=&\B(q\cdot\iota q_0)=\B\pi_Q(q,q_0),
    \end{eqnarray*}
    which completes the proof. 
\end{proof}

\begin{rmk}
\begin{itemize}
	\item [\rm(i)] The above results indicate that Lie 2-group actions are in one-to-one correspondence with actions of crossed modules, which is stated without proof in \cite{LL}.
	\item [\rm(ii)] If $(\B_1,\B_0)$ is a Rota-Baxter operator on a crossed module $(G_1\xrightarrow{\mu} G_0)$ and $\B$ is defined by \eqref{tildeB} with $(\alpha,\beta)$ being the adjoint action, then $(G_1\rtimes G_0\rightrightarrows G_0,(\B,\B_0))$ is a Rota-Baxter Lie 2-group, which recovers \cite[Proposition 3.4]{Jiang}.
\end{itemize}	
\end{rmk}

\subsection{Properties and examples}
In the framework of crossed modules, we also have results analogous to Theorems \ref{graph} and \ref{descendt 2grp}, Proposition \ref{iso rRB} and Corollary \ref{corRB}. They can be derived either by applying Theorem \ref{bj} or by direct verification. The proofs are omitted here.

Let $(\alpha,\beta)$ be an action of a crossed module $(G_1\xrightarrow{\mu}G_0)$ on another crossed module $(H_1\xrightarrow{\partial}H_0)$. There is a \textbf{semi-direct product crossed module} \cite{N}:
\begin{eqnarray*}
	(H_1\rtimes G_1\xrightarrow{(\partial,\mu)}H_0\rtimes G_0), \qquad (\partial,\mu)(h_1,g_1)=(\partial h_1,\mu g_1),
\end{eqnarray*}
where $H_0\rtimes G_0$ and $H_1\rtimes G_1$ are semi-direct product Lie groups given by
\begin{eqnarray*}
	(h_0,g_0)\cdot(h_0',g_0')&=&(h_0\cdot\beta_0(g_0)h_0',g_0\cdot g_0'),\qquad \forall g_0,g_0'\in G_0,h_0,h_0'\in H_0,\\
	(h_1,g_1)\cdot(h_1',g_1')&=&(h_1\cdot\beta_1(\mu g_1)h_1',g_1\cdot g_1'),\qquad \forall g_1,g_1'\in G_1,h_1,h_1'\in H_1.
\end{eqnarray*}
The action $\unrhd:H_0\rtimes G_0\to\Aut(H_1\rtimes G_1)$ of $H_0\rtimes G_0$ on $H_1\rtimes G_1$ by automorphisms is defined by
	\begin{eqnarray*}
	(h_0,g_0)\unrhd(h_1,g_1)=(h_0\ac(\beta_1(g_0)h_1\cdot\alpha(g_0\ac g_1)h_0^{-1}),g_0\ac g_1). 
	\end{eqnarray*}
\begin{cor}
     Suppose that $(\alpha,\beta)$ is an action of a crossed module $(G_1\xrightarrow{\mu}G_0)$ on another crossed module $(H_1\xrightarrow{\partial}H_0)$. Then the following statements are equivalent:
     \begin{itemize}
     	\item[\rm(i)] The pair $(\B_1,\B_0)$ is a relative Rota-Baxter operator on $(G_1\xrightarrow{\mu}G_0)$ with respect to $(\alpha,\beta)$;
     	\item[\rm(ii)] The graph $(\mathrm{Gr}(\B_1)\xrightarrow{(\partial,\mu)}\Gr(\B_0))$ is a sub-crossed module of the semi-direct product crossed module $(H_1\rtimes G_1\xrightarrow{(\partial,\mu)}H_0\rtimes G_0)$;
     	\item[\rm(iii)] The pair
     	 $(\hat{\B}_1,\hat{\B}_0)$ with $\hat{\B}_1:H_1\rtimes G_1\to H_1\rtimes G_1$ and $\hat{\B}_0:H_0\rtimes G_0\to H_0\rtimes G_0$ defined by
     	\begin{eqnarray*}
     		\hat{\B}_1(h_1,g_1)=(e_{H_1},g_1^{-1}\cdot\B_1h_1),\qquad
     		\hat{\B}_0(h_0,g_0)=(e_{H_0},g_0^{-1}\cdot\B_0h_0),
     	\end{eqnarray*}
     	is a Rota-Baxter operator on the crossed module $(H_1\rtimes G_1\xrightarrow{(\partial,\mu)}H_0\rtimes G_0)$.
     \end{itemize}
\end{cor}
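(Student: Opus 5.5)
The plan is to transport the statement through the correspondence of Theorem \ref{bj} and reduce it to Theorem \ref{graph}. The semi-direct product crossed module $(H_1\rtimes G_1\xrightarrow{(\partial,\mu)}H_0\rtimes G_0)$ should correspond, under the Lie 2-group/crossed module dictionary, to the semi-direct product Lie 2-group $Q\rtimes P\rightrightarrows Q_0\rtimes P_0$. Here $Q=H_1\rtimes H_0$, $P=G_1\rtimes G_0$, and the action is $(\phi^{\alpha\beta},\phi_0^{\alpha\beta})$ from Proposition \ref{pro1}. Once this identification is established, the three statements become the three statements of Theorem \ref{graph} for the operator $(\B,\B_0)$ built in \eqref{tildeB}.

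First I will compute the crossed module of $Q\rtimes P\rightrightarrows Q_0\rtimes P_0$. The kernel of the source map $(q,p)\mapsto(sq,sp)$ is $\ker s_Q\times\ker s_P\cong H_1\times G_1$. Its group structure is the restriction of \eqref{sd2}, namely $(h_1,g_1)\cdot(h_1',g_1')=(h_1\cdot\phi^{\alpha\beta}(g_1,e)h_1',g_1g_1')$. On $H_1$ this equals $\beta_1(\mu g_1)h_1'$, because $\alpha(g_1)$ applied to $e$ is trivial; this matches the stated product on $H_1\rtimes G_1$. The target map gives $(\partial,\mu)$. The conjugation action of $\iota(h_0,g_0)$ is $\iota(h_0,g_0)\cdot(h_1,g_1)\cdot\iota(h_0,g_0)^{-1}$, and expanding it with \eqref{sd2} and the formula for $\phi^{\alpha\beta}$ should give exactly $(h_0,g_0)\unrhd(h_1,g_1)$. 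In particular the term $\alpha(g_0\ac g_1)h_0^{-1}$ should come from $\phi^{\alpha\beta}(g_0\ac g_1,e)$ acting on $(e,h_0^{-1})$. This verification is the main computational obstacle. I would organize it by writing $(h_1,h_0)$ as $(h_1,e)\cdot(e,h_0)$ and using the identity $\alpha(g_0\ac g_1)=\beta_1(g_0)\alpha(g_1)\beta_0(g_0^{-1})$ from the proof of Proposition \ref{pro1}.

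Next I will handle (i)$\Leftrightarrow$(ii). Under the dictionary, a Lie 2-subgroup of $Q\rtimes P\rightrightarrows Q_0\rtimes P_0$ corresponds to a sub-crossed module of the associated crossed module via $K\mapsto(\ker s|_K\to K_0)$; conversely, a sub-crossed module $(K_1\to K_0)$ generates $K_1\rtimes K_0$. The graph $\Gr(\B)$ restricted to $\ker s$ is $\Gr(\B|_{H_1})=\Gr(\B_1)$, since $\B|_{H_i}=\B_i$ as noted in the proof of Proposition \ref{pro1}. Theorem \ref{bj} says $(\B_1,\B_0)$ is a relative Rota-Baxter operator on the crossed module exactly when $(\B,\B_0)$ is one on the Lie 2-group, so the equivalence follows from Theorem \ref{graph} (i)$\Leftrightarrow$(ii). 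For the converse direction I also need that $\Gr(\B)$ is recovered as $\Gr(\B_1)\cdot\Gr(\B_0)$ whenever $\Gr(\B_1)\to\Gr(\B_0)$ is a sub-crossed module. This should follow from the product formula \eqref{tildeB}, which is built precisely so that $\B(h_1,h_0)=\B(h_1',e)\cdot\B(e,h_0)$ after reordering.

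Finally, for (i)$\Leftrightarrow$(iii) I will apply Theorem \ref{graph} (i)$\Leftrightarrow$(iii). Then I will check that the restriction of $(\hat{\B},\hat{\B}_0)$ to the crossed module via Proposition \ref{pro2} is $(\hat{\B}_1,\hat{\B}_0)$. On $\ker s$, $\hat{\B}(h_1,g_1)=(e,g_1^{-1}\B_1h_1)$, so this is immediate. Under the bijection of Theorem \ref{bj}, a Rota-Baxter operator on the Lie 2-group then corresponds to a Rota-Baxter operator on the crossed module for the adjoint action. It remains to see that Proposition \ref{pro2} applied to the adjoint action of a Lie 2-group yields the adjoint action of its crossed module. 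This is a short check from the formulas $\alpha(p)q_0=p\,q_0\,p^{-1}q_0^{-1}$ and $\beta(p_0)=\mathrm{Ad}_{\iota p_0}$.
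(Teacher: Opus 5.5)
Your route is the one the paper indicates (it omits the proof): identify $(H_1\rtimes G_1\xrightarrow{(\partial,\mu)}H_0\rtimes G_0)$ with the crossed module of $Q\rtimes P\rightrightarrows Q_0\rtimes P_0$, where $Q=H_1\rtimes H_0$, $P=G_1\rtimes G_0$ and the action is $(\phi^{\alpha\beta},\phi^{\alpha\beta}_0)$, then transport Theorem~\ref{graph} through Theorem~\ref{bj}. The identification is correct, your (i)$\Leftrightarrow$(ii) argument goes through, and so does (i)$\Rightarrow$(iii). Two small sharpenings:
\begin{itemize}
\item In the conjugation computation, the term $\alpha(g_0\ac g_1)h_0^{-1}$ comes from $\phi^{\alpha\beta}(g_0\ac g_1,g_0)$ acting on $(e,\beta_0(g_0)^{-1}h_0^{-1})$, which is the $Q$-component of $\iota(h_0,g_0)^{-1}$. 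No extra identity for $\alpha$ is needed.
\item The precise content of ``after reordering'' is the following. For arbitrary smooth $(\B_1,\B_0)$, formula \eqref{tildeB} gives $\B\big((e,h_0)\cdot\phi^{\alpha\beta}(e,\B_0h_0)(h_1,e)\big)=(e,\B_0h_0)\cdot(\B_1h_1,e)$. Hence $\Gr(\B)=\iota\Gr(\B_0)\cdot\Gr(\B_1)$ holds unconditionally. The $\unrhd$-invariance contained in (ii) then lets you swap the factors, so $\Gr(\B)$ is the Lie 2-subgroup determined by the sub-crossed module.
\end{itemize}

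The gap is in (iii)$\Rightarrow$(i). You only check that $\hat{\B}$ restricts to $\hat{\B}_1$ on $\ker s$, and then appeal to the bijection of Theorem~\ref{bj}. That bijection is between Rota-Baxter operators. Restriction via Proposition~\ref{pro2} gives the direction ``Lie 2-group operator $\Rightarrow$ crossed-module operator.'' In the other direction, a general pair of smooth maps on a Lie 2-group is not determined by its values on $\ker s$ and on the units. So from ``$(\hat{\B}_1,\hat{\B}_0)$ is a Rota-Baxter operator on the crossed module'' you only learn that the operator $\tilde{\hat{\B}}$ built from it by Proposition~\ref{pro1} (with the adjoint action) is a Rota-Baxter operator. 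To say anything about $\hat{\B}$, and hence, via Theorem~\ref{graph}, about $\B$, you must show $\hat{\B}\circ\pi=\pi\circ\tilde{\hat{\B}}$, where $\pi(x_1,x_0)=x_1\cdot\iota x_0$.

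This is true but needs a computation. For $x_1=(h_1,g_1)$ and $x_0=(h_0,g_0)$ one has $\pi(x_1,x_0)=\big((h_1\cdot\alpha(g_1)h_0,h_0),(g_1,g_0)\big)$. Both sides send this element to $\big(e,(g_1,g_0)^{-1}\cdot\B(h_1\cdot\alpha(g_1)h_0,h_0)\big)$. To see this, use $x_0\hat{\B}_0x_0=(h_0,\B_0h_0)$ and the equivariance of $\beta$.

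Alternatively, prove (iii)$\Rightarrow$(i) directly, which is shorter:
\begin{itemize}
\item Remark~\ref{hatB}, applied to $\hat{\B}_1$ and $\hat{\B}_0$, shows that $\B_1$ and $\B_0$ are relative Rota-Baxter operators for $\beta_1\circ\mu$ and $\beta_0$.
\item The relation $(\partial,\mu)\circ\hat{\B}_1=\hat{\B}_0\circ(\partial,\mu)$ gives $\mu\circ\B_1=\B_0\circ\partial$.
\item The mixed adjoint condition for $(\hat{\B}_1,\hat{\B}_0)$, evaluated at $x_0=(h_0,e)$ and $x_1=(h_1,e)$, is exactly condition (iii) of Definition~\ref{RBO of 2Gp}.
\end{itemize}
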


Observe that the space of smooth maps from $G_0$ to $G_1$ with the multiplication \eqref{m of der} is a semigroup. The group of its units is denoted by $\mathrm{Map}(G_0,G_1)$. Denote by
\begin{eqnarray*}
	\Diff(G_1,G_0,\mu)=\{(\sigma_1,\sigma_0)\in\Diff(G_1)\times\Diff(G_0)|  \mu\circ\sigma_1=\sigma_0\circ\mu\}.
\end{eqnarray*}
Then $\big(\mathrm{Map}(G_0,G_1)\xrightarrow{\Delta}\Diff(G_1,G_0,\mu)\big)$ is a crossed module, where $\Delta$ and the action are defined in the same fashion as in  \eqref{Delta} and \eqref{ac of actor}.

\begin{cor}\label{descendt cm}
    If $(\B_1,\B_0)$ is a relative Rota-Baxter operator on $(G_1\xrightarrow{\mu}G_0)$ with respect to an action $(\alpha,\beta)$ of $(G_1\xrightarrow{\mu}G_0)$ on $(H_1\xrightarrow{\partial}H_0)$, then
    \begin{itemize}
    	\item[\rm(i)]  $(H_1^\B\xrightarrow{\partial}H_0^\B)$ is a crossed module, called the \textbf{descendant crossed module}, where
    	\begin{eqnarray*}
    		h_1\cdot_{\B} h_1'&=&h_1\cdot\beta_1(\mu\B_1h_1)h_1',\qquad
    		h_0\cdot_{\B_0} h_0'=h_0\cdot\beta_0(\B_0h_0)h_0',\\
    		h_0\ac_{\B} h_1&=&h_0\ac_H(\beta_1(\B_0h_0)h_1\cdot\alpha(\B_0h_0\ac_G\B_1h_1)h_0^{-1}),\qquad \forall h_0,h_0'\in H_0, h_1,h_1'\in H_1.
    	\end{eqnarray*}
    	Moreover, $(\B_1,\B_0)$ is a crossed module morphism from $(H_1^\B\xrightarrow{\partial}H_0^\B)$ to $(G_1\xrightarrow{\mu}G_0)$.
    	\item[\rm(ii)] There is a crossed module morphism $(\theta,\tau)$ from the descendant crossed module $(H_1^\B\xrightarrow{\partial}H_0^\B)$ to $\big(\mathrm{Map}(G_0,G_1)\xrightarrow{\Delta}\Diff(G_1,G_0,\mu)\big)$, defined by
    	\begin{eqnarray*}
    		\theta(h_1)g_0&=&(\B_1\beta_1(g_0)h_1^\dagger)^{-1}\cdot g_0\ac\B_1h_1^\dagger,\\
    		\tau_0(h_0)g_0&=&(\B_0\beta_0(g_0)h_0^\dagger)^{-1}\cdot g_0\cdot\B_0h_0^\dagger,\\
    		\tau_1(h_0)g_1&=&(\B_1(h_0\ac_H\beta_1(\B_0h_0)\alpha(g_1)h_0^\dagger))^{-1}\cdot\B_0h_0\ac_G g_1.
    	\end{eqnarray*}
    Here $\tau_1:H_0^\B\to\Diff(G_1)$ and $\tau_0:H_0^\B\to\Diff(G_0)$ are the two components of $\tau$.
    	\item[\rm(iii)] If $(\B_1,\B_0)$ is a Rota-Baxter operator on the crossed module $(G_1\xrightarrow{\mu}G_0)$, then it is also a Rota-Baxter operator on $(G_1^\B\xrightarrow{\mu}G_0^\B)$.
    \end{itemize} 
\end{cor}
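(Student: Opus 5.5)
The plan is to transport Theorem \ref{descendt 2grp} through the correspondence of Theorem \ref{bj}. Apply Proposition \ref{pro1} to $(\B_1,\B_0)$. This gives a relative Rota-Baxter operator $(\B,\B_0)$ on the Lie 2-group $G_1\rtimes G_0\rightrightarrows G_0$, taken with respect to $(\phi^{\alpha\beta},\phi_0^{\alpha\beta})$ on $H_1\rtimes H_0\rightrightarrows H_0$. Note that $\B|_{H_1}=\B_1$.

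For (i), Theorem \ref{descendt 2grp}(i) yields the descendant Lie 2-group $(H_1\rtimes H_0)^\B\rightrightarrows H_0^{\B_0}$. I then pass to its crossed module $(\ker s\xrightarrow{t}H_0^{\B_0})$, as recalled in Section \ref{S2}.
\begin{itemize}
\item Since the groupoid structure is unchanged, $\ker s=H_1\times\{e\}$ and $t=\partial$.
\item The multiplication on $\ker s$ is $(h_1,e)\cdot_\B(h_1',e)=(h_1\cdot\beta_1(\mu\B_1h_1)h_1',e)$. Here I use $\B(h_1,e)=(\B_1h_1,e)$ and the formula for $\phi^{\alpha\beta}$, where the $\alpha$-term is evaluated at $e$ and so is trivial because derivations vanish at $e$.
\item The action is conjugation by $\iota h_0=(e,h_0)$ in the descendant group, namely $(e,h_0)\cdot_\B(h_1,e)\cdot_\B(e,h_0)^\dagger$.
\end{itemize}
The main computational step is to evaluate this conjugation. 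First I would compute $(e,h_0)\cdot_\B(h_1,e)=(h_0\ac\beta_1(\B_0h_0)h_1,h_0)$. Next I would multiply by $(e,h_0^{\dagger})$, where $h_0^\dagger=\beta_0(\B_0h_0)^{-1}h_0^{-1}$. This requires $\phi^{\alpha\beta}(\B(\cdot))$ applied to $(e,h_0^\dagger)$, and here the $\alpha$-term $\alpha(\B_0h_0\ac\B_1h_1)\beta_0(\B_0h_0)h_0^\dagger$ appears. Simplifying gives $h_0\ac_H(\beta_1(\B_0h_0)h_1\cdot\alpha(\B_0h_0\ac\B_1h_1)h_0^{-1})$. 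This is the bookkeeping step I expect to be hardest. It should mirror the derivation of condition (iii) in the proof of Proposition \ref{pro2}, and I would reuse that computation. The morphism claim follows from Theorem \ref{descendt 2grp}(i), since restricting a Lie 2-group homomorphism gives a crossed module morphism.

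For (ii), I take the groupoid action $(\phi^\B,\phi_0^\B)$ of Theorem \ref{descendt 2grp}(ii) on $G_1\rtimes G_0\rightrightarrows G_0$. I view it as a map from the Lie 2-group $(H_1\rtimes H_0)^\B$ into the "diffeomorphism 2-group" of the groupoid, whose crossed module is $(\mathrm{Map}(G_0,G_1)\xrightarrow{\Delta}\Diff(G_1,G_0,\mu))$. This follows the same pattern as Proposition \ref{pro2}, with $\Aut$ and $D$ replaced by $\Diff$ and $\mathrm{Map}$.
\begin{itemize}
\item I set $\theta(h_1)g_0=\phi^\B(h_1,e)(e,g_0)\cdot(e,g_0)^{-1}$, read in $G_1$.
\item I set $\tau_0=\phi_0^\B$.
\item I set $\tau_1(h_0)g_1=\phi^\B(e,h_0)(g_1,e)$, which is the restriction to $\ker s$.
\end{itemize}
Unwinding with $\B(h_1,e)=(\B_1h_1,e)$, together with formula \eqref{tildeB}, should reproduce the stated expressions. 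The verifications that $\theta$ is a homomorphism into units and that $\Delta\circ\theta=\tau\circ\partial$ copy the proof of Proposition \ref{pro2} verbatim, using \eqref{phi} and \eqref{gpm-gpdm}.

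Finally, (iii) follows from Theorem \ref{descendt 2grp}(iii) together with Theorem \ref{bj}. The Rota-Baxter Lie 2-group $(G_1\rtimes G_0)^\B\rightrightarrows G_0^{\B_0}$ corresponds, via (i) with the adjoint action, to the crossed module $(G_1^\B\xrightarrow{\mu}G_0^\B)$ carrying $(\B_1,\B_0)$. I would also check that the adjoint action of the descendant 2-group corresponds to the adjoint action of the descendant crossed module; this is immediate from the formulas in the adjoint-action Example.
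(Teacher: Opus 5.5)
Your treatment of (i) and (iii) is correct and follows the route the paper intends; the paper omits the proof and says the corollary follows by transporting Theorem \ref{descendt 2grp} through Theorem \ref{bj}. In (ii), your $\theta,\tau_0,\tau_1$ reproduce the stated formulas. The checks that $\theta$ is a homomorphism into units and that $\Delta\circ\theta=\tau\circ\partial$ do transfer from Proposition \ref{pro2}, because they only use \eqref{phi}, \eqref{gpm-gpdm} and the interchange law.

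The gap is the remaining axiom of a crossed module morphism, $\theta(h_0\ac_\B h_1)=\tau_1(h_0)\circ\theta(h_1)\circ\tau_0(h_0)^{-1}$. You do not address it, and it cannot be copied from Proposition \ref{pro2}. There, the ``direct verification'' uses that $\phi(\iota p_0)$ is a group automorphism of $Q$, so it commutes with $x\mapsto x\cdot\iota(sx)^{-1}$. Here $F=\phi^\B(\iota h_0)$ is only a groupoid automorphism of $G_1\rtimes G_0\rightrightarrows G_0$. Write $F(g_1,g_0)=(F^1(g_1,g_0),F_0(g_0))$. Your own Lie 2-group picture then gives $\theta(h_0\ac_\B h_1)(g_0)=F^1\big(\theta(h_1)(F_0^{-1}g_0),F_0^{-1}g_0\big)$. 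By contrast, $\tau_1(h_0)\circ\theta(h_1)\circ\tau_0(h_0)^{-1}$ evaluates to $F^1\big(\theta(h_1)(F_0^{-1}g_0),e\big)$, and $F^1$ genuinely depends on its second argument. Relatedly, $\mathrm{Map}(G_0,G_1)\xrightarrow{\Delta}\Diff(G_1,G_0,\mu)$ is not the crossed module of a ``diffeomorphism 2-group'' of the groupoid. A pair $(\sigma_1,\sigma_0)$ does not determine a groupoid automorphism, and for maps that are not derivations the action $\sigma_1\circ\gamma\circ\sigma_0^{-1}$ violates the Peiffer identity.

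The missing identity actually fails, so (ii) cannot be proved as stated. Here is a counterexample.
\begin{itemize}
\item Take $G=S_4$, a $0$-dimensional Lie group, and the crossed module $(G\xrightarrow{\Id}G)$ with its adjoint self-action.
\item Set $\B_1=\B_0=\B$ with $\B(xy)=y^{-1}$, for the exact factorization $G=XY$ where $X=\{\sigma\mid\sigma(4)=4\}$ and $Y=\langle(1234)\rangle$.
\item Then $\tau_0=\tau_1=\Theta$ is the action of Theorem \ref{des grp and ac}(ii) for $\phi=\mathrm{Ad}$. Also $\theta(h)g=\Theta(h)g\cdot g^{-1}$ and $h_0\ac_\B h_1=h_0\cdot_\B h_1\cdot_\B h_0^\dagger$.
\item So at $g_0=\Theta(h_0)b$, with $a=\Theta(h_1)b$, equivariance reads $\Theta(h_0)(a)\,\Theta(h_0)(b)^{-1}=\Theta(h_0)(ab^{-1})$.
\item Take $h_0=(12)$, $h_1=(13)$, $b=(1234)$, composing permutations as maps. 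One finds $a=(1432)$, $\Theta(h_0)(a)=(13)(24)$ and $\Theta(h_0)(b)=(1234)$.
\item The left side is therefore $(1234)$, while the right side is $\Theta(h_0)\big((13)(24)\big)=(1432)$.
\end{itemize}

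Your argument does prove a corrected statement. Replace $\Diff(G_1,G_0,\mu)$ by the group of Lie groupoid automorphisms of $G_1\rtimes G_0\rightrightarrows G_0$. Let it act by conjugation on $\mathrm{Map}(G_0,G_1)$, viewed as bisections $g_0\mapsto(\gamma(g_0),g_0)$, and set $\tau(h_0)=(\phi^\B(\iota h_0),\phi^\B_0(h_0))$. Equivariance is then just $\phi^\B(\iota h_0)\circ\phi^\B(q)\circ\phi^\B(\iota h_0^\dagger)=\phi^\B(\iota h_0\cdot_\B q\cdot_\B\iota h_0^\dagger)$.
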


\begin{cor}
     Let $(\alpha,\beta)$ be an action of a crossed module $(G_1\xrightarrow{\mu}G_0)$ on another crossed module $(H_1\xrightarrow{\partial}H_0)$ and $(\B_1,\B_0)$ be a relative Rota-Baxter operator with respect to $(\alpha,\beta)$. If  $\phi_i:G_i\to G_i,\psi_i:H_i\to H_i,i=0,1$ are smooth maps such that $(\psi_1\times\phi_1,\psi_0\times\phi_0)\in\Aut\big(H_1\rtimes G_1, H_0\rtimes G_0, (\partial,\mu)\big)$, then 
     \begin{eqnarray*}
     	(\phi_1^{-1}\circ\B_1\circ\psi_1,\phi_0^{-1}\circ\B_0\circ\psi_0)
     \end{eqnarray*}
  is also a relative Rota-Baxter operator on $(G_1\xrightarrow{\mu}G_0)$  with respect to $(\alpha,\beta)$.
  
   In particular, if $(\B_1,\B_0)$ is a Rota-Baxter operator on  $(G_1\xrightarrow{\mu}G_0)$ and  $(\phi_1,\phi_0)\in\Aut(G_1,G_0,\mu)$, then $(\phi_1^{-1}\circ\B_1\circ\phi_1,\phi_0^{-1}\circ\B_0\circ\phi_0)$ is also a Rota-Baxter operator on $(G_1\xrightarrow{\mu}G_0)$.
\end{cor}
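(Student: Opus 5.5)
The plan is to transport the statement through the correspondence of Theorem \ref{bj} and then use Proposition \ref{iso rRB}. Starting from $(\B_1,\B_0)$ with respect to $(\alpha,\beta)$, Proposition \ref{pro1} produces a relative Rota-Baxter operator $(\B,\B_0)$ on the Lie 2-group $G_1\rtimes G_0\rightrightarrows G_0$ with respect to $(\phi^{\alpha\beta},\phi_0^{\alpha\beta})$ on $H_1\rtimes H_0\rightrightarrows H_0$. The given automorphism $(\psi_1\times\phi_1,\psi_0\times\phi_0)$ of the semi-direct product crossed module will then be converted into a Lie 2-group automorphism $(\theta\times\rho,\theta_0\times\rho_0)$ of the semi-direct product Lie 2-group $(H_1\rtimes H_0)\rtimes(G_1\rtimes G_0)\rightrightarrows H_0\rtimes G_0$. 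The natural choice is
\[
\theta(h_1,h_0)=(\psi_1h_1,\psi_0h_0),\qquad \rho(g_1,g_0)=(\phi_1g_1,\phi_0g_0),\qquad \theta_0=\psi_0,\qquad \rho_0=\phi_0 .
\]

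First I unpack the hypothesis. Since $\psi_i\times\phi_i$ is a group automorphism of $H_i\rtimes G_i$ and commutes with $(\partial,\mu)$, we get:
\begin{itemize}
\item $\psi_i\in\Aut(H_i)$ and $\phi_i\in\Aut(G_i)$;
\item $\partial\psi_1=\psi_0\partial$ and $\mu\phi_1=\phi_0\mu$;
\item the intertwining relations $\beta_0(\phi_0g_0)\psi_0=\psi_0\beta_0(g_0)$ and $\beta_1(\mu\phi_1g_1)\psi_1=\psi_1\beta_1(\mu g_1)$.
\end{itemize}
Compatibility with the action $\unrhd$ gives more. Taking $h_0=e$ shows that $(\phi_1,\phi_0)$ respects $\ac_G$ and that $\psi_1\beta_1(g_0)=\beta_1(\phi_0g_0)\psi_1$. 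Taking $g=e$ shows that $(\psi_1,\psi_0)$ respects $\ac_H$. Taking a general $(h_0,g_0)$ then gives $\psi_1\circ\alpha(g_1)=\alpha(\phi_1g_1)\circ\psi_0$. These are exactly the conditions for $\theta,\rho$ to be group automorphisms of the Lie group semidirect products and to intertwine $\phi^{\alpha\beta}$, i.e. $\phi^{\alpha\beta}(\rho x)\circ\theta=\theta\circ\phi^{\alpha\beta}(x)$. The groupoid compatibility (source, target, and the multiplication $(g_1,g_0)*(g_1',g_0')=(g_1g_1',g_0')$) follows from the morphism property of $(\psi_1,\psi_0)$ and $(\phi_1,\phi_0)$ as crossed module maps.

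Next I apply Proposition \ref{iso rRB}. It gives that $(\rho^{-1}\B\theta,\phi_0^{-1}\B_0\psi_0)$ is a relative Rota-Baxter operator on the Lie 2-group. Proposition \ref{pro2} (equivalently, Theorem \ref{bj}) brings it back to crossed modules by restricting to $\ker s=H_1\times\{e\}$. There $\B|_{H_1}=\B_1$, $\theta|_{H_1}=\psi_1$ and $\rho|_{G_1}=\phi_1$, so the restriction is $\phi_1^{-1}\B_1\psi_1$. I still need to check that the action recovered by Proposition \ref{pro2} from $(\phi^{\alpha\beta},\beta_0)$ is $(\alpha,\beta)$ again; this is the round-trip part of Theorem \ref{bj}.

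The special case follows by taking $H=G$ with the adjoint action and $\psi_i=\phi_i$. One must confirm that $(\phi_1\times\phi_1,\phi_0\times\phi_0)$ is an automorphism of the semi-direct product crossed module. This reduces to the identity $\phi_1(\alpha(g_1)g_0)=\alpha(\phi_1g_1)\phi_0g_0$ for $\alpha(g_1)g_0=g_1\cdot g_0\ac g_1^{-1}$, which is immediate from $(\phi_1,\phi_0)\in\Aut(G_1,G_0,\mu)$.

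The main obstacle is the bookkeeping in the first step: extracting the intertwining identity for $\alpha$ from compatibility with $\unrhd$, and verifying that $\theta\times\rho$ respects the group law of $(H_1\rtimes H_0)\rtimes(G_1\rtimes G_0)$, where the $\alpha$-twisted term $\alpha(g_1)\beta_0(g_0)h_0$ appears. As a fallback, I would verify the three conditions of Definition \ref{RBO of 2Gp} directly, in the style of the proof of Proposition \ref{iso rRB}:
\begin{itemize}
\item conditions (i) and (ii) come verbatim from that proof applied to each level;
\item condition (iii) is obtained by applying $\phi_1^{-1}$ to (iii) for $(\psi_1h_1,\psi_0h_0)$ and pulling $\psi$'s through $\ac_H$, $\beta_1$ and $\alpha$ using the intertwining relations listed above.
\end{itemize}
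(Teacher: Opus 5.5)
Your proposal is correct and follows the route the paper indicates: the paper omits the proof and says these crossed-module results follow from Theorem~\ref{bj} applied to Proposition~\ref{iso rRB}, or by direct verification. Your unpacking of the hypothesis into the intertwining relations for $\beta_0,\beta_1,\alpha$ and the compatibility with $\ac_G,\ac_H$ is exactly what makes $(\psi_1\times\psi_0)\times(\phi_1\times\phi_0)$ a Lie 2-group automorphism of the semi-direct product, and the round-trip check you flag is immediate, since $\phi^{\alpha\beta}(g_1,e)(e,h_0)\cdot(e,h_0^{-1})=(\alpha(g_1)h_0,e)$ and $\phi^{\alpha\beta}(e,g_0)(h_1,e)=(\beta_1(g_0)h_1,e)$.
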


\begin{ex}
Suppose that $\gamma$ is a derivation in $D(G_0,G_1)$. Then $((\phi_1^{\gamma})^{-1}\circ\B_1\circ\phi_1^\gamma,(\phi_0^{\gamma})^{-1}\circ\B_0\circ\phi_0^\gamma)$ is a Rota-Baxter operator on $(G_1\xrightarrow{\mu}G_0)$, where $(\phi_1^\gamma,\phi_0^\gamma) \in\Aut(G_1,G_0,\mu)$ is defined by 
     \begin{eqnarray*}
        \phi_1^{\gamma}(g_1)=\gamma(\mu g_1)\cdot g_1,\qquad \phi_0^{\gamma}(g_0)=\mu\gamma(g_0)\cdot g_0.
    \end{eqnarray*}
\end{ex}

\begin{ex}
	A crossed module morphism is a relative Rota-Baxter operator with respect to the trivial action of crossed modules.
\end{ex}

\begin{ex}
	If $(H_1\xrightarrow{\partial}H_0)=(V_1\stackrel{\partial}{\rightarrow}V_0)$, is a 2-vector space, then $(\B_1,\B_0)$ with $\B_i: V_i\to G_i$ is a relative Rota-Baxter operator on the crossed module $(G_1\xrightarrow{\mu} G_0)$ with respect to an action $(\alpha, \beta)$ on $(V_1\stackrel{\partial}{\rightarrow}V_0)$ if $\B_1$ and $\B_0$ satisfy the following relations:
	\begin{eqnarray*}
		\B_0v_0\cdot\B_0v_0'&=&\B_0(v_0+\beta_0(\B_0v_0)v_0'),\\
		\B_1v_1\cdot\B_1v_1'&=&\B_1(v_1+v_1'+\alpha(\B_1v_1)\partial v_1'),\\
		\B_0v_0\ac\B_1v_1&=&\B_1\big(\beta_1(\B_0v_0)v_1-\alpha(\B_0v_0\ac_G\B_1v_1)v_0\big).
	\end{eqnarray*}
Comparing with Example \ref{abelian}, we see the advantage of the Lie 2-group framework.
\end{ex}

\begin{ex}
A relative Rota-Baxter operator $\B:H\to G$ on a Lie group $G$ with respect to an action $\phi:G\to\Aut(H)$ induces two relative Rota-Baxter operators on crossed modules: 
$(\B|_{e_H},\B)$ on the crossed module $(\{e_G\}\hookrightarrow G)$ with respect to the action $(\alpha,\beta)$ on $(\{e_H\}\hookrightarrow H)$, where
\begin{eqnarray*}
	\alpha(e_G)h=e_H,\quad \beta_1(g)e_H=e_H, \quad\beta_0(g)h=\phi(g)h,
\end{eqnarray*}
and $(\B,\B)$ on the crossed module $(G\xrightarrow{\Id}G)$ with respect to the action $(\alpha',\beta')$ on $(H\xrightarrow{\Id}H)$, where
\begin{eqnarray*}
	\alpha'(g)h=\phi(g)h\cdot_H h^{-1},\quad \beta_1'=\beta_0'=\phi.
\end{eqnarray*}
This example explains the construction in Example \ref{2rb}.
\end{ex}

\begin{ex}
	For a finite-dimensional real vector space $V$ with an antisymmetric bilinear form $\omega:V\times V\to\Real$, we have the ‘Heisenberg Lie group’ $V\oplus\mathbb{R}$ with multiplication 
	\begin{eqnarray*}
		 (v_1,r_1)(v_2,r_2)=(v_1+v_2,r_1+r_2+\omega(v_1,v_2)).
	\end{eqnarray*}
	The projection further gives a crossed module $(V\oplus\Real\xrightarrow{\partial}V)$ \cite{BL} with the action $\ac$ of $V$ on $V\oplus\Real$: 
	\begin{eqnarray*}
		v_1\ac(v_2,r)=(v_2,r+2\omega(v_1,v_2)).
	\end{eqnarray*}
	A Lie group action $\phi:G\to \mathrm{Sp}(V,\omega)\subset\mathrm{GL}(V)$ preserving $\omega$ gives a crossed module action $(\alpha,\beta)$ of  $(G\xrightarrow{\Id}G)$ on $(V\oplus\Real\xrightarrow{\partial}V)$:
	\begin{eqnarray*}
	\beta_0(g)v=\phi(g)v, \quad \beta_1(g)(v,r)=(\phi(g)v,r),\quad \alpha(g)v=\big(\phi(g)v-v,\omega(v,\phi(g)v)\big).
	\end{eqnarray*}
Then the pair of maps $\B_1:V\oplus\mathbb{R}\to G$ and $\B_0:V\to G$ is a relative Rota-Baxter operator on $(G\xrightarrow{\Id}G)$ with respect to $(\alpha,\beta)$ if and only if $\B_0$ is a relative Rota-Baxter operator on the Lie group $G$ relative to $\phi$ and $\B_1(v,r)=\B_0v$.
\end{ex}

\subsection{Relative Rota-Baxter operators on Lie algebra crossed modules}
We consider the infinitesimals of relative Rota-Baxter operators on Lie group crossed modules.

Let $\g$ and $\h$ be Lie algebras with an action $\bar{\phi}:\g\to\Der(\h)$ of $\g$ on $\h$ by derivations. A linear map $B:\h\to\g$ is called a \textbf{relative Rota-Baxter operator} (of weight 1) on $\g$ with respect to the action $\bar\phi$ if 
\begin{eqnarray*}
    [Bu,Bv]_\g=B(\bar\phi(Bu)v-\bar\phi(Bv)u+[u,v]_\h),\qquad \forall u,v\in\h.
\end{eqnarray*} 

     A \textbf{Lie algebra crossed module} $(\h_1\xrightarrow{\bar{\partial}}\h_0)$ consists of two Lie algebras $\h_1$ and $\h_0$, a Lie algebra homomorphism $\bar{\partial}:\h_1\to \h_0$ and an action $\ac_\h:\h_0\to\Der(\h_1)$ of  $\h_0$ on $\h_1$ by derivations, such that
    \begin{eqnarray*}
        \bar{\partial}a\ac_\h b=[a,b]_{\h_1}, \qquad
        \bar{\partial}(u\ac_\h a)=[u,\bar{\partial}a]_{\h_0}, \qquad \forall u\in \h_0,a,b\in \h_1.
    \end{eqnarray*}
    Also, we have the \textbf{actor crossed module}  $\bar{\mathcal{A}}(\h_1,\h_0,\bar{\partial})=\big(\Der(\h_0,\h_1)\xrightarrow{\Delta}\Der(\h_1,\h_0,\bar\partial)\big)$.  Then an \textbf{action} of a Lie algebra crossed module $(\g_1\xrightarrow{\bar{\mu}}\g_0)$ on another crossed module $(\h_1\xrightarrow{\bar{\partial}}\h_0)$ is defined as a morphism from $(\g_1\xrightarrow{\bar{\mu}}\g_0)$ to  $\bar{\mathcal{A}}(\h_1,\h_0,\bar{\partial})$; see \cite{CL} for details.

\begin{defi}
     Let $(\bar{\alpha},\bar{\beta})$ be an action of a crossed module $(\g_1\xrightarrow{\bar{\mu}}\g_0)$ on another crossed module $(\h_1\xrightarrow{\bar{\partial}}\h_0)$. A \textbf{relative Rota-Baxter operator} on $(\g_1\xrightarrow{\bar{\mu}}\g_0)$ with respect to $(\bar{\alpha},\bar{\beta})$ is a pair of linear maps $(B_1,B_0)$ with $B_i:\h_i\to\g_i$ 
such that $\bar{\mu}\circ B_1=B_0\circ\bar{\partial}$ and
    \begin{itemize}
        \item[\rm(i)] $B_1$ is a relative Rota-Baxter operator on the Lie algebra $\g_1$ with respect to the action $\bar{\beta}_1\circ \bar{\mu}$,
        \item [\rm(ii)] $B_0$ is a relative Rota-Baxter operator on the Lie algebra $\g_0$ with respect to the action $\bar{\beta}_0$,
        \item[\rm(iii)] $B_0u\ac_\g B_1a=B_1(\bar{\beta}_1(B_0u)a-\bar{\alpha}(B_1a)u+u\ac_\h a)$ for all $u\in \h_0,a\in \h_1$.
    \end{itemize}
\end{defi}

\begin{rmk}
	This notion is a special case of relative Rota-Baxter operators on Lie 2-algebras introduced in \cite{LW} when both the Lie 2-algebras and the actions are taken to be strict.
\end{rmk}

\begin{thm}\label{diff}
If $(\B_1,\B_0)$ is a relative Rota-Baxter operator on the Lie group crossed module $(G_1\xrightarrow{\mu}G_0)$ with respect to an action $(\alpha,\beta)$ of $(G_1\xrightarrow{\mu}G_0)$ on $(H_1\xrightarrow{\partial}H_0)$, then $((\B_1)_{*e},(\B_0)_{*e})$ is a relative Rota-Baxter operator on the corresponding Lie algebra crossed module $(\g_1\xrightarrow{\bar{\mu}}\g_0)$ with respect to the infinitesimal action $(\bar{\alpha},\bar{\beta})$ of $(\g_1\xrightarrow{\bar{\mu}}\g_0)$ on $(\h_1\xrightarrow{\bar{\partial}}\h_0)$.
\end{thm}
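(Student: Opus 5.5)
We differentiate each defining condition of Definition \ref{RBO of 2Gp} at the identity. Write $B_i=(\B_i)_{*e}$. The operators are presumably assumed to satisfy $\B_i(e)=e$, which follows from \eqref{rrbc} by setting $h=h'=e$.

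The linear compatibility $\bar\mu\circ B_1=B_0\circ\bar\partial$ comes from differentiating $\mu\circ\B_1=\B_0\circ\partial$ at $e$.

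Conditions (i) and (ii) are the classical fact that the tangent map at $e$ of a relative Rota-Baxter operator on a Lie group, with respect to $\phi$, is a relative Rota-Baxter operator of weight 1 on the Lie algebra, with respect to $\phi_{*}$ (\cite{GLS, JSZ}). The argument runs as follows.
\begin{itemize}
\item Substitute $h=\exp(s u)$ and $h'=\exp(\tau v)$ into \eqref{rrbc}.
\item Form the group commutator $\B h\cdot\B h'\cdot(\B h)^{-1}\cdot(\B h')^{-1}$.
\item Rewrite it through \eqref{rrbc} as $\B$ applied to a word in $h,h'$.
\item Take the mixed derivative $\partial_s\partial_\tau|_{0}$.
\end{itemize}
The mixed derivative of the left side gives $[B_0u,B_0v]$. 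On the right, the first-order terms cancel. The second-order terms produce $\bar\phi(Bu)v-\bar\phi(Bv)u+[u,v]$, and the $\bar\phi$-terms arise from differentiating $\phi(\B h)h'$ in both slots. Symmetric second-derivative terms of $\B$ at $e$ cancel by antisymmetrization. Applying this with the actions $\beta_1\circ\mu$ and $\beta_0$ gives (i) and (ii). The differentiated actions are $\bar\beta_1\circ\bar\mu$ and $\bar\beta_0$, since differentiation is functorial.

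Condition (iii) is the main step. I would \emph{not} differentiate identity (iii) of Definition \ref{RBO of 2Gp} directly, because it mixes the $H_0$- and $H_1$-variables nonlinearly through $\alpha$. Instead, I would use Proposition \ref{pro1}. That proposition supplies a Lie group relative Rota-Baxter operator $\B$ on $G_1\rtimes G_0$, with respect to $\phi^{\alpha\beta}$ on $H_1\rtimes H_0$, such that $\B|_{H_i}=\B_i$.

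By the Lie group case just treated, $B:=\B_{*e}$ is a weight-1 relative Rota-Baxter operator on $\g_1\rtimes\g_0$ with respect to $(\phi^{\alpha\beta})_{*}$. On $\h_1\oplus\h_0$ it equals $B_1\oplus B_0$, because the tangent map of \eqref{tildeB} at $e$ is $(a,u)\mapsto(B_1a,B_0u)$.

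Now evaluate the Lie algebra identity on the mixed pair $u\in\h_0$, $a\in\h_1$.
\begin{itemize}
\item In $\g_1\rtimes\g_0$ the bracket $[(0,B_0u),(B_1a,0)]$ equals $(B_0u\ac_\g B_1a,0)$.
\item The differentiated action is $(\phi^{\alpha\beta})_*(x,y)(a,u)=(\bar\beta_1(\bar\mu x+y)a+\bar\alpha(x)u,\ \bar\beta_0(y)u)$.
\item $\bar\phi(B(0,u))(a,0)=(\bar\beta_1(B_0u)a,0)$.
\item $\bar\phi(B(a,0))(0,u)=(\bar\alpha(B_1a)u,0)$, after cancelling $\bar\beta_0(\bar\mu B_1 a)$-type contributions against the $\h_0$-component; this is consistent because $\bar\beta_0(0)=0$ there.
\item The semidirect bracket in $\h_1\rtimes\h_0$ is $[(0,u),(a,0)]=(u\ac_\h a,0)$.
\end{itemize}
Substituting these gives $B_0u\ac_\g B_1a=B_1(\bar\beta_1(B_0u)a-\bar\alpha(B_1a)u+u\ac_\h a)$, which is condition (iii).

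The step most likely to cause trouble is the sign and component bookkeeping in this last computation. In particular, the $\h_0$-component of $(\phi^{\alpha\beta})_*$ must contribute nothing to the $\g_1$-part, which relies on $B(0,u)$ having zero $\g_1$-part. One must also check that $(\bar\alpha,\bar\beta)$, as defined via the actor crossed module, coincides with the linearization of $\phi^{\alpha\beta}$. If that identification becomes messy, the fallback is a direct computation: set $h_0=\exp(su)$ and $h_1=\exp(\tau a)$ in Definition \ref{RBO of 2Gp}(iii), take $\partial_s\partial_\tau|_0$, and use $\alpha(g_1)e=e$ (a derivation property), so that only the linear terms $\bar\alpha(B_1a)u$ survive.
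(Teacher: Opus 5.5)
Your proposal is correct. For (i), (ii) and $\bar\mu\circ B_1=B_0\circ\bar\partial$ it matches the paper, which cites \cite{JSZ} (Theorem 4.1). For (iii) you take a genuinely different route.

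\textbf{The paper's route for (iii).} It takes $\frac{d}{ds}\frac{d}{dt}\big|_{s=t=0}$ of identity (iii) of Definition \ref{RBO of 2Gp} with $h_0=\exp(su)$, $h_1=\exp(ta)$. This is exactly your fallback.

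\textbf{Your route for (iii).} You use Proposition \ref{pro1} to assemble $(\B_1,\B_0)$ into an ordinary relative Rota-Baxter operator $\B$ on $G_1\rtimes G_0$ with respect to $\phi^{\alpha\beta}$. Applying the Lie group result once more gives the weight-one operator $\B_{*e}=B_1\oplus B_0$ on $\g_1\rtimes\g_0$. You then read off (iii) from the mixed pair $((0,u),(a,0))$. Your formula for $(\phi^{\alpha\beta})_*$, both semidirect brackets, and the resulting identity all check out.

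\textbf{The identification you were worried about is immediate.} Since $\phi^{\alpha\beta}(g_1,e)(e,h_0)=(\alpha(g_1)h_0,h_0)$ and $\phi^{\alpha\beta}(e,g_0)(h_1,e)=(\beta_1(g_0)h_1,e)$, the mixed derivatives defining $(\phi^{\alpha\beta})_*$ are the same ones the paper implicitly uses for $\bar\alpha$ and $\bar\beta$.

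\textbf{What your route gains.}
\begin{itemize}
\item You never differentiate the nonlinear $\alpha$-term by hand.
\item You explain (iii) conceptually, as the off-diagonal part of a single semidirect-product identity. This fits the paper's point that the 2-group picture is the intrinsic one.
\item Evaluating the same identity on pairs in $\h_1$ or in $\h_0$ also yields (i) and (ii). So one application of the Lie group theorem covers everything.
\end{itemize}

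\textbf{What it costs.} It relies on Proposition \ref{pro1}, whose verification is much heavier than a two-parameter differentiation. The paper's argument needs only the chain rule together with $\alpha(e)=\gamma_e$ and $\alpha(g_1)e=e$.

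\textbf{Two harmless slips.}
\begin{itemize}
\item In your generic Lie group sketch, the left side should give $[Bu,Bv]$, not $[B_0u,B_0v]$.
\item In computing $\bar\phi(B(a,0))(0,u)$ there is nothing to cancel. Since $B(a,0)=(B_1a,0)$ has zero $\g_0$-part, the $\h_0$-component is simply $\bar\beta_0(0)u=0$.
\end{itemize}
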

\begin{proof}
	We write $B_1=(\B_1)_{*e}, B_0=(\B_0)_{*e}.$ By \cite[Theorem 4.1]{JSZ}, $B_1$ and $B_0$ are relative Rota-Baxter operators on the corresponding Lie algebras $\g_1$ and $\g_0$ with respect to $\bar\beta_1\circ \bar{\mu}$ and $\bar\beta_0$, respectively. Since $\mu\circ\B_1=\B_0\circ\partial$, it is apparent that $\bar\mu\circ B_1=B_0\circ \bar\partial$.
    Moreover, for all $u\in\h_0,a\in\h_1$, we have 
    \begin{eqnarray*}
        B_0u\ac_\g B_1a&=&\frac{d}{ds}\frac{d}{dt}|_{s=t=0}\B_0(\exp(su))\ac_G\B_1(\exp(ta))\\
        &=&\frac{d}{ds}\frac{d}{dt}|_{s=t=0}\B_1\big(\exp(su)\ac_H\beta_1(\B_0\exp(su))\exp(ta)\\
        &&\cdot_{H_1}\exp(su)\ac_H\alpha(\B_0\exp(su)\ac_G\B_1\exp(ta))(\exp(su))^{-1}\big)\\
        &=&B_1\big(\frac{d}{ds}\frac{d}{dt}|_{s=t=0}\exp(su)\ac_H\beta_1(\B_0\exp(su))\exp(ta)\\
        &&+\frac{d}{ds}\frac{d}{dt}|_{s=t=0}\exp(su)\ac_H\alpha(\B_0\exp(su)\ac_G\B_1\exp(ta))(\exp(su))^{-1}\big)\\
        &=&B_1\big(\bar\beta_1(B_0u)a+u\ac_\h a-\bar\alpha(B_1a)u\big).
    \end{eqnarray*}
    Therefore, $(B_1,B_0)$ is a relative Rota-Baxter operator on $(\g_1\xrightarrow{\bar{\mu}}\g_0)$ with respect to $(\bar{\alpha},\bar{\beta})$.
\end{proof}

\section{Crossed homomorphisms on Lie 2-groups}\label{S5}
Let $G$ and $H$ be Lie groups with an action $\phi:G\to\Aut(H)$ of $G$ on $H$ by automorphisms. A \textbf{crossed homomorphism} with respect to $\phi$ is a smooth map $\D:G\to H$ such that 
\begin{eqnarray}\label{cr homo}
    \D(g\cdot_Gg')=\D g\cdot_H\phi(g)\D g', \qquad \forall g,g'\in G.
\end{eqnarray}
It is also called a \textbf{1-cocycle} of $G$ with coefficients in $H$. If $\D$ is bijective,  it gives a set-theoretical solution of the Yang-Baxter equation \eqref{YBE} by \cite{LYZ}. 

Infinitesimally, let $\g$ and $\h$ be Lie algebras with an action $\bar{\phi}:\g\to\Der(\h)$ of $\g$ on $\h$ by derivations. A \textbf{crossed homomorphism} with respect to $\bar{\phi}$  is a linear map $D:\g\to \h$ such that 
\begin{eqnarray*}
	D[x,y]_\g=\bar{\phi}(x)Dy-\bar{\phi}(y)Dx+[Dx,Dy]_\h, \qquad \forall x,y\in \g.
\end{eqnarray*}

Now we introduce the notion of crossed homomorphisms on Lie 2-groups.

\begin{defi}\label{RDO of 2Gp}
     Let $(\phi,\phi_0)$ be a Lie 2-group action of $P\rightrightarrows P_0$ on another Lie 2-group  $Q\rightrightarrows Q_0$. A \textbf{crossed homomorphism} with respect to $(\phi,\phi_0)$ is a pair $(\D,\D_0)$ of smooth maps such that
     \begin{itemize}
         \item $\D:P\to Q$ is a  crossed homomorphism  with respect to the action $\phi:P\to\Aut(Q)$,
         \item $\D_0:P_0\to Q_0$ is  a  crossed homomorphism  with respect to the action $\phi_0:P_0\to\Aut(Q_0)$,
         \item $(\D,\D_0)$ is a Lie groupoid morphism from $P\rightrightarrows P_0$ to $Q\rightrightarrows Q_0$.
     \end{itemize}
\end{defi}

 Denote by $\Gr(\D )=\{(\D p,p)|\forall p\in P\}$ and $\Gr(\D_0)=\{(\D_0p_0,p_0)|\forall p_0\in P_0\}$ the graphs of smooth maps $\D $ and $\D_0$, respectively.
\begin{pro}
    Let $(\phi,\phi_0)$ be a Lie 2-group action of $P\rightrightarrows P_0$ on another Lie 2-group $Q\rightrightarrows Q_0$. The pair of smooth maps $(\D,\D_0)$ with $\D: P\to Q$ and $\D_0: P_0\to Q_0$ is a crossed homomorphism with respect to $(\phi,\phi_0)$ if and only if  $\Gr(\D )\rightrightarrows\Gr(\D_0)$ is a Lie 2-subgroup of the semi-direct product Lie 2-group $Q\rtimes P\rightrightarrows Q_0\rtimes P_0$ given by \eqref{sd1}-\eqref{sd3}.
\end{pro}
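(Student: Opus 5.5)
The plan follows the proof of Theorem \ref{graph}, $\rm(i)\Leftrightarrow\rm(ii)$: treat the group-theoretic part and the groupoid-theoretic part separately, since the semi-direct product Lie 2-group $Q\rtimes P\rightrightarrows Q_0\rtimes P_0$ carries the semi-direct product group structures \eqref{sd2}, \eqref{sd3} and the direct product groupoid structure \eqref{sd1}.

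The first step is the Lie group statement. A smooth map $\D:P\to Q$ is a crossed homomorphism with respect to $\phi$ if and only if $\Gr(\D)$ is a Lie subgroup of $Q\rtimes P$. This follows from
\begin{eqnarray*}
(\D p,p)\cdot(\D p',p')=(\D p\cdot\phi(p)\D p',p\cdot p').
\end{eqnarray*}
The product lies in $\Gr(\D)$ exactly when $\D(p\cdot p')=\D p\cdot\phi(p)\D p'$, which is \eqref{cr homo}. If $\D$ satisfies \eqref{cr homo}, then $\D e=e$. The inverse $(\D p,p)^{-1}=(\phi(p^{-1})(\D p)^{-1},p^{-1})$ lies in the graph, because \eqref{cr homo} applied to $p^{-1}\cdot p=e$ gives $\D(p^{-1})=\phi(p^{-1})(\D p)^{-1}$. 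The graph of a smooth map is a closed embedded submanifold, so a subgroup of this form is a Lie subgroup. Conversely, closedness under multiplication alone already forces \eqref{cr homo}. The same argument applies verbatim to $\D_0$ and $\Gr(\D_0)\subset Q_0\rtimes P_0$.

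The second step is the groupoid part. One checks that $\Gr(\D)\rightrightarrows\Gr(\D_0)$ is a Lie subgroupoid if and only if $(\D,\D_0)$ is a Lie groupoid morphism. The source of $(\D p,p)$ is $(s\D p,sp)$, and it lies in $\Gr(\D_0)$ for all $p$ if and only if $s\circ\D=\D_0\circ s$. The target condition gives $t\circ\D=\D_0\circ t$ in the same way. For the unit map, $\iota(\D_0p_0,p_0)=(\iota\D_0p_0,\iota p_0)$ lies in $\Gr(\D)$ precisely when $\D\circ\iota=\iota\circ\D_0$. For multiplication, take $(p,p')\in P^{(2)}$. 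Given the source and target compatibility, $(\D p,\D p')\in Q^{(2)}$, and
\begin{eqnarray*}
(\D p,p)*(\D p',p')=(\D p*\D p',p*p')
\end{eqnarray*}
lies in $\Gr(\D)$ if and only if $\D(p*p')=\D p*\D p'$. Groupoid inverses are then automatic. Conversely, if the graph is a subgroupoid, then reading off the $P$-components recovers all of these identities.

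Combining the two steps gives the equivalence. The main point requiring care is the converse direction of the groupoid part. It is not enough that $\Gr(\D)$ be closed under multiplication: $\Gr(\D_0)$ must serve as its object space. Concretely, $s$ and $t$ must map $\Gr(\D)$ into $\Gr(\D_0)$, which is what yields $s\circ\D=\D_0\circ s$ and $t\circ\D=\D_0\circ t$. One must also check that $s,t$ restricted to the graph remain surjective submersions; this holds because under the identifications $\Gr(\D)\cong P$ and $\Gr(\D_0)\cong P_0$ they correspond to $s_P$ and $t_P$. Everything else is a routine check.
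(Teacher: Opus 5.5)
Your proposal is correct and takes the same route as the paper. Like you, the paper treats the Lie group part separately; it quotes the graph criterion for crossed homomorphisms from the literature, where you prove it directly. It then reduces the groupoid part to the observation that $(\D p,p)*(\D p',p')=(\D p*\D p',p*p')$ lies in $\Gr(\D)$ exactly when $\D(p*p')=\D p*\D p'$. Your explicit checks of source, target, units and the submersion property fill in details the paper leaves implicit.
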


\begin{proof}
 We infer from \cite[Proposition 5.4]{JS} that $\D$ and $\D_0$ are crossed homomorphisms on Lie groups with respect to the actions $\phi:P\to\Aut(Q)$ and $\phi_0:P_0\to\Aut(Q_0)$ if and only if $\Gr(\D)$ and $\Gr(\D_0)$ are Lie subgroups of $Q\rtimes P$ and $Q_0\rtimes P_0$. Furthermore, for $(p,p')\in P^{(2)}$, we see that	
 \begin{eqnarray*}
 	(\D p,p)*(\D p',p')=(\D p*\D p',p* p')\in\Gr(\D)
 \end{eqnarray*}
 if and only if $\D p*\D p'=\D(p* p')$. Therefore, $\Gr(\D)\rightrightarrows\Gr(\D_0)$ is a Lie subgroupoid of $Q\rtimes P\rightrightarrows Q_0\rtimes P_0$ if and only if $(\D,\D_0)$ is a Lie groupoid morphism from $P\rightrightarrows P_0$ to $Q\rightrightarrows Q_0$. This completes the proof.
\end{proof}

\begin{pro}\label{rd-ro}
    If $(\D ,\D_0)$ is a crossed homomorphism with respect to a Lie 2-group action $(\phi,\phi_0)$ of $P\rightrightarrows P_0$ on $Q\rightrightarrows Q_0$, then $(\hat{\D},\hat{\D}_0)$ with $\hat{\D}:Q\rtimes P\to Q\rtimes P$ and $\hat{\D}_0:Q_0\rtimes P_0\to Q_0\rtimes P_0$ defined by
\begin{eqnarray*}
    \hat{\D}(q,p)=(\phi(p^{-1})(q^{-1}\cdot\D p),e),\qquad
    \hat{\D}_0(q_0,p_0)=(\phi_0(p_0^{-1})(q_0^{-1}\cdot\D_0p_0),e),
\end{eqnarray*}
is a Rota-Baxter operator on the semi-direct product Lie 2-group $Q\rtimes P\rightrightarrows Q_0\rtimes P_0$.
\end{pro}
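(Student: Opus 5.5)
The proof splits into two parts, mirroring the proof of Theorem \ref{graph}(i)$\Leftrightarrow$(iii). First, we show that $\hat{\D}$ and $\hat{\D}_0$ are Rota-Baxter operators on the Lie groups $Q\rtimes P$ and $Q_0\rtimes P_0$. Second, we show that $(\hat{\D},\hat{\D}_0)$ is a Lie groupoid morphism of $Q\rtimes P\rightrightarrows Q_0\rtimes P_0$. The group-level statement depends only on the crossed homomorphism identity \eqref{cr homo}, so it suffices to treat $\hat{\D}$; the same computation applies verbatim to $\hat{\D}_0$.

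For the group-level part, this is presumably the Lie group result of \cite{JS}, where a crossed homomorphism $\D$ yields the Rota-Baxter operator $(q,p)\mapsto(\phi(p^{-1})(q^{-1}\cdot\D p),e)$ on $Q\rtimes P$. If we prove it directly, we verify
$$\hat{\D}(x)\cdot\hat{\D}(x')=\hat{\D}\big(x\cdot\hat{\D}(x)\cdot x'\cdot\hat{\D}(x)^{-1}\big)$$
for $x=(q,p)$ and $x'=(q',p')$. Write $a=\phi(p^{-1})(q^{-1}\D p)\in Q$, so $\hat{\D}(x)=(a,e)$. The left side is $(a\cdot\phi(p'^{-1})(q'^{-1}\D p'),e)$. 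On the right, a semi-direct product computation gives
$$x\cdot(a,e)\cdot x'\cdot(a^{-1},e)=\big(q\,\phi(p)a\,q'\,\phi(p')a^{-1},\;p\,p'\big)=\big(\D p\cdot q'\cdot\phi(p')a^{-1},\;pp'\big).$$
Applying $\hat{\D}$ and using $\D(pp')=\D p\cdot\phi(p)\D p'$, the factor $\D p$ cancels. After moving $\phi(p)$ outside, the remaining element is
$$\phi(p'^{-1})\phi(p^{-1})\big(\phi(p')a\cdot q'^{-1}\cdot\phi(p)\D p'\big),$$
which should be compared with $a\cdot\phi(p'^{-1})(q'^{-1}\D p')$. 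This step is the main obstacle, because the two orderings of $\phi(p)$ and $\phi(p^{-1})$ look different. I expect to resolve it by rewriting $\D p\,\phi(p)\D p'=\D(pp')$ and expanding again, pulling $\phi(p^{-1})$ through. If the identity fails in this form, I will instead use the equivalent form $\hat{\D}(xy)=\hat{\D}(x)\hat{\D}(\hat{\D}(x)^{-1}y\hat{\D}(x))$, or cite \cite{JS}. That is the only genuinely nontrivial algebra.

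For the groupoid part, we argue as in the proof of Theorem \ref{graph}. Since $Q\rtimes P\rightrightarrows Q_0\rtimes P_0$ carries the direct product groupoid structure \eqref{sd1}, take $(q,q')\in Q^{(2)}$ and $(p,p')\in P^{(2)}$. In a Lie 2-group the inversion and multiplication of $P$ and $Q$ are groupoid morphisms, $(\D,\D_0)$ is a groupoid morphism, and \eqref{phi} holds. Hence
$$\phi\big((p*p')^{-1}\big)\big((q*q')^{-1}\cdot\D(p*p')\big)=\phi(p^{-1}*p'^{-1})\big((q^{-1}\D p)*(q'^{-1}\D p')\big)=\phi(p^{-1})(q^{-1}\D p)*\phi(p'^{-1})(q'^{-1}\D p').$$
Pairing this with $e=e*e$ gives $\hat{\D}((q,p)*(q',p'))=\hat{\D}(q,p)*\hat{\D}(q',p')$. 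Compatibility with source and target follows in the same way. For example, $s\,\phi(p^{-1})(q^{-1}\D p)=\phi_0(sp^{-1})(sq^{-1}\D_0 sp)$ by $s\circ\D=\D_0\circ s$ and the fact that $(\phi,\phi_0)$ is a groupoid morphism; this shows $s\circ\hat{\D}=\hat{\D}_0\circ s$, and $t$ is identical. Therefore $(\hat{\D},\hat{\D}_0)$ is a Rota-Baxter operator on the semi-direct product Lie 2-group.
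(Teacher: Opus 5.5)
Your overall plan matches the paper's: verify the Rota--Baxter identity for $\hat{\D}$ and $\hat{\D}_0$ separately, then show $(\hat{\D},\hat{\D}_0)$ is a groupoid morphism. Your groupoid computation is essentially the paper's. The gap is in the group-level step. You leave it unresolved, and the obstacle you describe there comes from a miscomputed product. In $Q\rtimes P$ one has $(q,p)\cdot(q',p')=(q\cdot\phi(p)q',pp')$, so with $a=\phi(p^{-1})(q^{-1}\cdot\D p)$,
\[
x\cdot(a,e)\cdot x'\cdot(a^{-1},e)=\big(\D p\cdot\phi(p)q'\cdot\phi(pp')a^{-1},\;pp'\big),
\]
not $(\D p\cdot q'\cdot\phi(p')a^{-1},pp')$. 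You dropped the factor $\phi(p)$ acting on $q'$ and on $a^{-1}$. The expression you reach, $\phi(p'^{-1})\phi(p^{-1})\big(\phi(p')a\cdot q'^{-1}\cdot\phi(p)\D p'\big)$, is in general not equal to $a\cdot\phi(p'^{-1})(q'^{-1}\D p')$. Take $p'=e$: you would need $\phi(p^{-1})(a\cdot q'^{-1})=a\cdot q'^{-1}$ for all $q'$, i.e.\ $\phi(p)=\Id$. So no rewriting of $\D(pp')$ can close it. The fallback of citing \cite{JS} is also unsupported: the paper uses \cite{JS} only for the graph characterization and proves this identity directly.

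With the correct product, set $y=\D p\cdot\phi(p)q'\cdot\phi(pp')a^{-1}$ and use $\D(pp')=\D p\cdot\phi(p)\D p'$. Then
\[
y^{-1}\cdot\D(pp')=\phi(pp')a\cdot\phi(p)q'^{-1}\cdot\phi(p)\D p'=\phi(p)\big(\phi(p')a\cdot q'^{-1}\cdot\D p'\big).
\]
Applying $\phi((pp')^{-1})=\phi(p'^{-1})\phi(p^{-1})$, the $\phi(p^{-1})$ cancels against this $\phi(p)$ and leaves $a\cdot\phi(p'^{-1})(q'^{-1}\cdot\D p')$. That is, $\hat{\D}\big(x\cdot\hat{\D}(x)\cdot x'\cdot\hat{\D}(x)^{-1}\big)=\hat{\D}(x)\cdot\hat{\D}(x')$. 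This is the paper's computation, done there for $\hat{\D}_0$, where $\phi_0(p_0p_0')a^{-1}$ appears as $\phi_0(p_0p_0'p_0^{-1})(q_0^{-1}\D_0p_0)^{-1}$. With this correction your argument is complete.
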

\begin{proof}
    By the fact that $\D_0$ is a crossed homomorphism with respect to $\phi_0$, we have
    \begin{eqnarray*}
&&\hat{\D}_0\big((q_0,p_0)\cdot\hat{\D}_0(q_0,p_0)\cdot(q_0',p_0')\cdot(\hat{\D}_0(q_0,p_0))^{-1}\big)\\
    &=&\hat{\D}_0(\D_0p_0\cdot\phi_0(p_0)q_0'\cdot\phi_0(p_0\cdot p_0'\cdot p_0^{-1})(q_0^{-1}\cdot\D_0p_0)^{-1},p_0\cdot p_0')\\
   &=&\big(\phi_0(p_0^{-1})(q_0^{-1}\cdot\D_0p_0)\cdot\phi_0(p_0'^{-1})(q_0'^{-1}\cdot\D_0p_0'),e\big)\\
        &=&\hat{\D}_0(q_0,p_0)\cdot\hat{\D}_0(q_0',p_0'),
    \end{eqnarray*}
    which implies that $\hat{\D}_0$ is a Rota-Baxter operator on $Q_0\rtimes P_0$. Similarly, $\hat{\D}$ is a Rota-Baxter operator on $Q\rtimes P$. 
    Moreover, to see that $(\hat{\D},\hat{\D}_0)$ is a groupoid endomorphism, we only check that $\hat{\D}$ preserves the groupoid multiplication. Since $(\D,\D_0)$ is a Lie groupoid morphism, for all $(q,q')\in Q^{(2)},(p,p')\in P^{(2)}$, we have
    \begin{eqnarray*}
        \hat{\D}((q,p)*(q',p'))
        &=&\big(\phi(p^{-1}* p'^{-1})((q^{-1}* q'^{-1})\cdot(\D p* \D p')),e\big)\\
        &=&\big(\phi(p^{-1}* p'^{-1})((q^{-1}\cdot\D p)*(q'^{-1}\cdot\D p')),e\big)\\
        &=&\big(\phi(p^{-1})(q^{-1}\cdot\D p)*\phi(p'^{-1})(q'^{-1}\cdot\D p'),e\big)\\
        &=&\hat{\D}(q,p)*\hat{\D}(q',p').
    \end{eqnarray*}
    Thus $(\hat{\D},\hat{\D}_0)$ is a Rota-Baxter operator on $Q\rtimes P\rightrightarrows Q_0\rtimes P_0$.
\end{proof}

In \cite{GLST}, a crossed homomorphism $\D: G\to H$ on Lie groups with respect to an action $\phi:G\to \Aut(H)$ gives rise to another action of $G$ on $H$. Here we present a similar result for crossed homomorphisms on Lie 2-groups.

\begin{thm}
	If $(\D,\D_0)$ is a crossed homomorphism with respect to a Lie 2-group action $(\phi,\phi_0)$ of $P\rightrightarrows P_0$ on $Q\rightrightarrows Q_0$, then there is another Lie 2-group action $(\tilde{\phi},\tilde{\phi}_0)$ of $P\rightrightarrows P_0$ on $Q\rightrightarrows Q_0$, called the \textbf{derived action},
given by
	\begin{eqnarray*}
		\tilde{\phi}(p)q=\D p\cdot\phi(p)q\cdot(\D p)^{-1},\qquad
		\tilde{\phi}_0(p_0)q_0=\D_0p_0\cdot\phi_0(p_0)q_0\cdot(\D_0p_0)^{-1}.
	\end{eqnarray*}
	Moreover, $(\tilde{\D},\tilde{\D}_0)$ defined by 
	\begin{eqnarray*}
		\tilde{\D}:P\to Q,\qquad \tilde{\D}p=(\D p)^{-1},\qquad
		\tilde{\D}_0:P_0\to Q_0,\qquad \tilde{\D}_0p_0=(\D_0p_0)^{-1},
	\end{eqnarray*}
is a crossed homomorphism with respect to the derived action $(\tilde{\phi},\tilde{\phi}_0)$.
\end{thm}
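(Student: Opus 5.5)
The plan follows the paper's standard pattern: reduce the group-level statements to the known Lie group case from \cite{GLST}, then verify the groupoid-morphism conditions by hand. For the Lie group layer, \cite{GLST} gives that $\tilde{\phi}$ and $\tilde{\phi}_0$ are Lie group actions of $P$ on $Q$ and of $P_0$ on $Q_0$ by automorphisms. It also gives that $\tilde{\D}$ and $\tilde{\D}_0$ are crossed homomorphisms with respect to them. If one prefers to check this directly, the key identity is
\[
\tilde{\phi}(pp')q=\D(pp')\cdot\phi(pp')q\cdot\D(pp')^{-1}
=\D p\cdot\phi(p)\big(\D p'\cdot\phi(p')q\cdot(\D p')^{-1}\big)\cdot(\D p)^{-1}
=\tilde{\phi}(p)\tilde{\phi}(p')q,
\]
which uses \eqref{cr homo}. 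Each $\tilde{\phi}(p)$ is an automorphism, being a composition of $\phi(p)$ with an inner automorphism. For $\tilde{\D}$ one computes
\[
\tilde{\D}(pp')=(\phi(p)\D p')^{-1}(\D p)^{-1}=(\D p)^{-1}\cdot\D p\cdot\phi(p)(\D p')^{-1}\cdot(\D p)^{-1}=\tilde{\D}p\cdot\tilde{\phi}(p)\tilde{\D}p'.
\]
The same computations apply verbatim to $\D_0,\phi_0$.

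It remains to show that $(\tilde{\phi},\tilde{\phi}_0)$ is a Lie groupoid morphism $P\times Q\rightrightarrows P_0\times Q_0$ to $Q\rightrightarrows Q_0$, and that $(\tilde{\D},\tilde{\D}_0)$ is a Lie groupoid morphism. For source and target I use three facts: $s_Q,t_Q$ are group homomorphisms, $s\circ\phi(p)=\phi_0(sp)\circ s$, and $s\circ\D=\D_0\circ s$. These give
\[
s(\tilde{\phi}(p)q)=\D_0(sp)\cdot\phi_0(sp)sq\cdot(\D_0sp)^{-1}=\tilde{\phi}_0(sp)sq,
\]
and the target case is identical. Compatibility with units follows similarly from $\iota$ being a homomorphism.

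The main step is preserving the groupoid multiplication. For composable $(p,p')\in P^{(2)}$ and $(q,q')\in Q^{(2)}$, I will use $\D(p*p')=\D p*\D p'$, equation \eqref{phi}, and the interchange law. The interchange law says that group multiplication is a groupoid morphism, so $(a*a')\cdot(b*b')=(a\cdot b)*(a'\cdot b')$ whenever both sides make sense. Inversion is also a groupoid morphism, so $(\D p*\D p')^{-1}=(\D p)^{-1}*(\D p')^{-1}$. Then
\[
\tilde{\phi}(p*p')(q*q')=(\D p*\D p')\cdot(\phi(p)q*\phi(p')q')\cdot\big((\D p)^{-1}*(\D p')^{-1}\big)=\tilde{\phi}(p)q*\tilde{\phi}(p')q'.
\]
The only care needed is checking composability at each stage, which follows from the source/target compatibility just proved. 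Likewise $\tilde{\D}(p*p')=(\D p*\D p')^{-1}=\tilde{\D}p*\tilde{\D}p'$, and $s\circ\tilde{\D}=\tilde{\D}_0\circ s$ because $s$ commutes with group inversion.

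The main obstacle is the bookkeeping of the interchange law. There is no conceptual difficulty; the computations mirror those already carried out in Theorem \ref{descendt 2grp} and Proposition \ref{rd-ro}.
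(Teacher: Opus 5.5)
Your proposal is correct and takes the same route as the paper. The paper cites the Lie group result for $\tilde{\phi}$, $\tilde{\phi}_0$, $\tilde{\D}$, $\tilde{\D}_0$, then checks $\tilde{\phi}(p*p')(q*q')=\tilde{\phi}(p)q*\tilde{\phi}(p')q'$ using the action's compatibility with $*$, the interchange law and the fact that $(\D,\D_0)$ is a groupoid morphism. Your direct group-level computations and your source, target, unit and composability checks only fill in details the paper leaves out.
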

\begin{proof}
	By \cite[Lemma 4.1]{GLST}, $\tilde{\phi}$ and $\tilde{\phi}_0$ are Lie group actions of $P$ on $Q$ and $P_0$ on $Q_0$ by automorphisms, $\tilde{\D}$ and $\tilde{\D}_0$ are crossed homomorphisms with respect to $\tilde{\phi}$ and $\tilde{\phi}_0$. To show that $(\tilde{\phi},\tilde{\phi}_0)$ is a groupoid morphism, we only verify that
		$\tilde{\phi}(p* p')(q* q')=\tilde{\phi}(p)q*\tilde{\phi}(p')q'.$
	By \eqref{phi} and the fact that groupoid multiplication is a Lie group homomorphism, we have 
	\begin{eqnarray*}
		\tilde{\phi}(p* p')(q* q')&=&\D(p* p')\cdot\phi(p* p')(q* q')\cdot(\D(p* p'))^{-1}\\
		&=&(\D p* \D p')\cdot(\phi(p)q*\phi(p')q')\cdot((\D p)^{-1}*(\D p')^{-1})\\
		&=&(\D p\cdot\phi(p)q\cdot(\D p)^{-1})*(\D p'\cdot\phi(p')q'\cdot(\D p')^{-1})\\
		&=&\tilde{\phi}(p)q*\tilde{\phi}(p')q',
	\end{eqnarray*}
	where we also used that $(\D,\D_0)$ is a Lie groupoid morphism.  By a simple verification, $(\tilde{\D},\tilde{\D}_0)$ is a Lie groupoid morphism since $(\D,\D_0)$ is a Lie groupoid morphism and groupoid multiplication is a Lie group homomorphism. Thus $(\tilde{\D},\tilde{\D}_0)$ is a crossed homomorphism with respect to $(\tilde{\phi},\tilde{\phi}_0)$.
\end{proof}

The notion of crossed homomorphisms on Lie algebra crossed modules can be viewed as a particular case of that in \cite{LW}.
Let $(\bar{\alpha},\bar{\beta})$ be an action of a crossed module $(\g_1\xrightarrow{\bar{\mu}}\g_0)$ on another crossed module $(\h_1\xrightarrow{\bar{\partial}}\h_0)$. A \textbf{crossed homomorphism} with respect to $(\bar{\alpha},\bar{\beta})$ is a pair of linear maps  $D_i:\g_i\to\h_i$ such that $D_1$ and $D_0$ are crossed homomorphisms with respect to the action $\bar{\beta}_1\circ \bar{\mu}$ and $\bar{\beta}_0$ satisfying 
\begin{eqnarray*}
	\bar{\partial}\circ D_1=D_0\circ\bar{\mu},\qquad
	D_1(x\ac_\g \xi)=\bar\beta_1(x)D_1\xi-\bar\alpha(\xi)D_0x+D_0x\ac_\h D_1\xi, \qquad \forall x\in \g_0,\xi\in \g_1.
\end{eqnarray*}
Here we present crossed homomorphisms on Lie group crossed modules.

\begin{defi}
     Let $(\alpha,\beta)$ be an action of a crossed module $(G_1\xrightarrow{\mu}G_0)$ on another  crossed module $(H_1\xrightarrow{\partial}H_0)$. A \textbf{crossed homomorphism} with respect to the action $(\alpha,\beta)$ is a pair of smooth maps $(\D_1,\D_0)$ with $\D_i:G_i\to H_i$ 
     such that $\partial\circ \D_1=\D_0\circ\mu$ and
    \begin{itemize}
        \item[\rm(i)] $\D_1$ is a crossed homomorphism with respect to the action $\beta_1\circ\mu:G_1\to\Aut(H_1)$,
         \item[\rm(ii)] $\D_0$ is a crossed homomorphism with respect to the action $\beta_0:G_0\to\Aut(H_0)$,
        \item[\rm(iii)] $\D_1(g_0\ac_G g_1)=\D_0g_0\ac_H \big(\beta_1(g_0)\D_1g_1\cdot\alpha(g_0\ac g_1)(\D_0g_0)^{-1}\big)$, for all  $g_0\in G_0,g_1\in G_1$.
    \end{itemize}
\end{defi}

\begin{thm}\label{inverse}
Let $(\alpha,\beta)$ be an action of a Lie group crossed module $(G_1\xrightarrow{\mu}G_0)$ on another Lie group crossed module $(H_1\xrightarrow{\partial}H_0)$. 
     \begin{itemize}
         \item[\rm(i)] The formal inverse of a relative Rota-Baxter operator on $(G_1\xrightarrow{\mu}G_0)$ with respect to $(\alpha,\beta)$ is a crossed homomorphism with respect to $(\alpha,\beta)$. 
         \item[\rm(ii)] If $(\D_1,\D_0)$ is a crossed homomorphism with respect to $(\alpha,\beta)$, then $((\D_1)_{*e},(\D_0)_{*e})$ is a crossed homomorphism with respect to the infinitesimal action $(\bar\alpha,\bar\beta)$.
     \end{itemize}
\end{thm}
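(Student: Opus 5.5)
For (i), I take the formal inverse to mean the following. Suppose $(\B_1,\B_0)$ is a relative Rota-Baxter operator on $(G_1\xrightarrow{\mu}G_0)$ with respect to $(\alpha,\beta)$, and both $\B_1$ and $\B_0$ are bijective (diffeomorphisms). Then set $\D_i=\B_i^{-1}:G_i\to H_i$. I would check the three conditions of a crossed homomorphism one at a time.

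The compatibility $\partial\circ\D_1=\D_0\circ\mu$ follows from $\mu\circ\B_1=\B_0\circ\partial$. Precompose with $\D_1$ and postcompose with $\D_0$ to get $\D_0\mu=\D_0\mu\B_1\D_1=\D_0\B_0\partial\D_1=\partial\D_1$.

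Conditions (i) and (ii) are the group-level statement that the inverse of a bijective relative Rota-Baxter operator is a crossed homomorphism. In \eqref{rrbc} put $h=\D g$ and $h'=\D g'$, so that $\B h=g$. This gives $g\cdot g'=\B(\D g\cdot\phi(g)\D g')$. Applying $\D$ yields $\D(g\cdot g')=\D g\cdot\phi(g)\D g'$, which is \eqref{cr homo}. I would use this with $\phi=\beta_1\circ\mu$ for $\B_1$ and with $\phi=\beta_0$ for $\B_0$.

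For condition (iii), substitute $h_0=\D_0g_0$ and $h_1=\D_1g_1$ into Definition \ref{RBO of 2Gp}(iii). Since $\B_0h_0=g_0$ and $\B_1h_1=g_1$, this reads
\[g_0\ac_G g_1=\B_1\big(\D_0g_0\ac_H(\beta_1(g_0)\D_1g_1\cdot\alpha(g_0\ac_G g_1)(\D_0g_0)^{-1})\big).\]
Applying $\D_1$ gives exactly condition (iii) of the crossed homomorphism definition. So part (i) is pure substitution, and the only point to be careful about is that bijectivity is needed for $\D_i$ to be defined as smooth maps.

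For (ii), I would follow the proof of Theorem \ref{diff}. Set $D_i=(\D_i)_{*e}$. That $D_1$ and $D_0$ are Lie algebra crossed homomorphisms with respect to $\bar\beta_1\circ\bar\mu$ and $\bar\beta_0$ comes from the group-level statement, via the analogue of \cite[Theorem 4.1]{JSZ} for crossed homomorphisms. I would get it by differentiating \eqref{cr homo} twice at $e$, along $\exp(sx)$ and $\exp(ty)$ and then antisymmetrizing. Alternatively I would use the graph characterization: $\Gr(\D)$ is a Lie subgroup of $H\rtimes G$, so its Lie algebra $\Gr(D)$ is a subalgebra of $\h\rtimes\g$. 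The relation $\bar\partial D_1=D_0\bar\mu$ follows by differentiating $\partial\D_1=\D_0\mu$.

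The main step is (iii). Write $g_0=\exp(sx)$ and $g_1=\exp(t\xi)$, and apply $\frac{d}{ds}\frac{d}{dt}|_{s=t=0}$ to both sides of condition (iii). The left side gives $D_1(x\ac_\g\xi)$. On the right, $\D_0g_0\ac_H(\cdot)$ contributes $D_0x\ac_\h D_1\xi$. The factor $\beta_1(g_0)\D_1g_1$ contributes $\bar\beta_1(x)D_1\xi$. For the factor $\alpha(g_0\ac g_1)(\D_0g_0)^{-1}$, I would note that $\alpha(e)$ is trivial and that $\alpha(\cdot)k$ vanishes when $k=e$. Hence only the mixed term survives, giving $-\bar\alpha(\xi)D_0x$. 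This uses that $g_0\ac g_1$ is to first order $\exp(t\xi)$ when $s=0$, together with $(\D_0g_0)^{-1}\approx\exp(-sD_0x)$. The obstacle is bookkeeping: the product of the first-order terms must assemble into these three contributions with no leftover cross terms. I would check this by expanding each factor to first order in $s$ and in $t$, as in Theorem \ref{diff}. Summing the contributions gives the Lie algebra crossed homomorphism identity.
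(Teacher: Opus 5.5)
Your proposal is correct and follows essentially the same route as the paper. Part (i) is the same substitution $h_i=\D_i g_i$ into the Rota-Baxter identities, followed by applying $\D_1$ (resp.\ $\D_i$). Part (ii) takes the same mixed derivative of condition (iii) along $\exp$ curves; your remark that $\alpha(e)$ and $\alpha(\cdot)e$ are trivial makes explicit why only the $-\bar\alpha(\xi)D_0x$ cross term survives, a step the paper's computation leaves implicit. The one minor difference is that you derive the Lie-algebra crossed homomorphism property of $D_1,D_0$ directly, by differentiation or via the graph subgroup, while the paper just appeals to the argument of \cite[Theorem 2.17]{GLS}.
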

\begin{proof}
    $\rm(i)$  Let $(\D_1 ,\D_0)$ be the inverse of a relative Rota-Baxter operator $(\B_1,\B_0)$ on $(G_1\xrightarrow{\mu}G_0)$ with respect to $(\alpha,\beta)$. It follows from a similar argument in \cite[Theorem 2.17]{GLS} that the maps $\D_1, \D_0$ as the inverses of $\B_1, \B_0$ are crossed homomorphisms on Lie groups with respect to $\beta_1\circ\mu$ and $\beta_0$. By $\mu\circ \B_1=\B_0\circ \partial$, it is direct that $\partial\circ\D_1 =\D_0\circ\mu$. Consider the condition \rm(iii) in Definition \ref{RBO of 2Gp}, i.e.,
    \begin{eqnarray*}
        \B_0h_0\ac\B_1h_1=\B_1\big(h_0\ac(\beta_1(\B_0h_0)h_1\cdot\alpha(\B_0h_0\ac\B_1h_1)h_0^{-1})\big).
    \end{eqnarray*}
    Replacing $\B_ih_i$ by $g_i$ (for $i=0,1$) leads to $h_i=\D_ig_i$. Then we apply $\D_1$ to both sides of the above condition and obtain
    \begin{eqnarray*}
        \D (g_0\ac g_1)=\D_0g_0\ac\big(\beta_1(g_0)\D g_1\cdot\alpha(g_0\ac g_1)(\D_0g_0)^{-1}\big).
    \end{eqnarray*}
    Thus $(\D_1,\D_0)$ is a crossed homomorphism with respect to $(\alpha,\beta)$. 
    
    $\rm(ii)$
    Write $D_1=(\D_1)_{*e}$ and $ D_0=(\D_0)_{*e}$. By a discussion similar to \cite[Theorem 2.17]{GLS}, we see $D_1$ and $D_0$ are crossed homomorphisms on the corresponding Lie algebras with respect to $\bar\beta_1\circ \bar\mu$ and $\bar\beta_0$. It follows from $\partial\circ\D_1 =\D_0\circ\mu$ that $\bar\partial\circ D_1=D_0\circ \bar\mu$. 
    For all $x\in\g_0,\xi\in\g_1$, we have 
    \begin{eqnarray*}
        D_1(x\ac_\g \xi)&=&\frac{d}{ds}\frac{d}{dt}|_{s=t=0}\D_1 (\exp(tx)\ac_G\exp(s\xi))\\
        &=&\frac{d}{ds}\frac{d}{dt}|_{s=t=0} \D_0\exp(tx)\ac_H\beta_1(\exp(tx))\D_1\exp(s\xi)\\
        &&+\frac{d}{ds}\frac{d}{dt}|_{s=t=0}\D_0\exp(tx)\ac_H\alpha(\exp(tx)\ac_G \exp(s\xi))(\D_0\exp(tx))^{-1}\\
        &=&\frac{d}{ds}\frac{d}{dt}|_{s=t=0} \exp(tD_0x)\ac_H\beta_1(\exp(tx))\exp(sD_1\xi)\\
        &&+\frac{d}{ds}\frac{d}{dt}|_{s=t=0}\exp(tD_0x)\ac_H\alpha(\exp(tx)\ac_G \exp(s\xi))\exp(-tD_0x)\\
        &=&D_0x\ac_\h D_1\xi+\bar\beta_1(x)D_1\xi-\bar\alpha(\xi)D_0x.
    \end{eqnarray*}
    Thus $(D_1,D_0)$ is a crossed homomorphism on Lie algebra crossed modules with respect to $(\bar\alpha,\bar\beta)$.
\end{proof}


\end{document}